\title{Generalized Clapeyron's theorem}
\author{Yury Grabovsky\thanks{Department of Mathematics, Temple University, Philadelphia, PA 19122, USA} \and Lev Truskinovsky\thanks{PMMH, CNRS -- UMR 7636, ESPCI, PSL,  75005 Paris, France}}
\newtheorem{theorem}{{\sc Theorem}}[section]
\newtheorem{lemma}[theorem]{{\sc Lemma}}
\newtheorem{remark}[theorem]{Remark}
\newtheorem{example}[theorem]{Example}
\newtheorem{definition}[theorem]{Definition}
\newcommand{\bb}[1]{\mathbb{ #1}}
\bmdefine\Bone{1}
\newcommand{\Sym}{\mathrm{Sym}}
\newcommand{\dOm}{\partial\Omega}
\newcommand{\bra}[1]{\overline{#1}}
\newcommand{\Trc}{\mathrm{Tr}\,}
\newcommand{\cof}{\mathrm{cof}}
\newcommand{\tns}[1]{#1\otimes #1}
\newcommand{\hf}{\displaystyle\frac{1}{2}}
\newcommand{\nth}[1]{\displaystyle\frac{1}{#1}}
\newcommand{\dif}[2]{\displaystyle\frac{\partial #1}{\partial #2}}
\newcommand{\Grad}{\nabla}
\newcommand{\Div}{\nabla \cdot}
\newcommand{\Md}{\partial}
\renewcommand{\Hat}[1]{\widehat{#1}}
\newcommand{\Tld}[1]{\widetilde{#1}}
\newcommand{\hess}[2]{\displaystyle\frac{\partial^2 #1}{\partial #2^2}}
\newcommand{\mix}[3]{\displaystyle\frac{\partial^2 #1}{\partial #2\partial #3}}
\newcommand{\av}[1]{\langle #1 \rangle}
\def\Xint#1{\mathchoice
{\XXint\displaystyle\textstyle{#1}}%
{\XXint\textstyle\scriptstyle{#1}}%
{\XXint\scriptstyle\scriptscriptstyle{#1}}%
{\XXint\scriptscriptstyle\scriptscriptstyle{#1}}%
\!\int}
\def\XXint#1#2#3{{\setbox0=\hbox{$#1{#2#3}{\int}$ }
\vcenter{\hbox{$#2#3$ }}\kern-.6\wd0}}
\def\dashint{\Xint-}
\newcommand\myatop[2]{\genfrac{}{}{0pt}{}{#1}{#2}}
\newcommand{\jump}[1]{\lbrack\!\lbrack #1 \rbrack\!\rbrack}
\newcommand{\lump}[1]{\lbrace\skew{-14.7}\lbrace\!\!#1\!\!\skew{14.7}\rbrace\rbrace}
\newcommand{\mat}[4]{\left[\begin{array}{cc}
\displaystyle{#1}&\displaystyle{#2}\\[1ex]
\displaystyle{#3}&\displaystyle{#4}\end{array}\right]}
\newcommand{\vect}[2]{\left[\begin{array}{c}
\displaystyle{#1}\\[1ex]\displaystyle{#2}\end{array}\right]}
\newcommand{\bc}{boundary condition}
\newcommand{\bvp}{boundary value problem}
\newcommand{\rhs}{right-hand side}
\newcommand{\lhs}{left-hand side}
\newcommand{\WLOG}{without loss of generality}
\newcommand{\nbh}{neighborhood}
\newcommand{\IFF}{if and only if }
\newcommand{\Ga}{\alpha}
\newcommand{\Gb}{\beta}
\newcommand{\Gd}{\delta}
\newcommand{\Ge}{\epsilon}
\newcommand{\Gve}{\varepsilon}
\newcommand{\Gg}{\gamma}
\newcommand{\Gk}{\kappa}
\newcommand{\Gl}{\lambda}
\newcommand{\Gs}{\sigma}
\newcommand{\Go}{\omega}
\newcommand{\GD}{\Delta}
\newcommand{\GG}{\Gamma}
\newcommand{\GL}{\Lambda}
\newcommand{\GS}{\Sigma}
\newcommand{\GO}{\Omega}
\bmdefine\BGa{\alpha}
\bmdefine\BGb{\beta}
\bmdefine\BGd{\delta}
\bmdefine\BGe{\epsilon}
\bmdefine\BGve{\varepsilon}
\bmdefine\BGf{\phi}
\bmdefine\BGvf{\varphi}
\bmdefine\BGg{\gamma}
\bmdefine\BGc{\chi}
\bmdefine\BGi{\iota}
\bmdefine\BGk{\kappa}
\bmdefine\BGl{\lambda}
\bmdefine\BGn{\eta}
\bmdefine\BGm{\mu}
\bmdefine\BGv{\nu}
\bmdefine\BGp{\pi}
\bmdefine\BGth{\theta}
\bmdefine\BGvth{\vartheta}
\bmdefine\BGr{\rho}
\bmdefine\BGvr{\varrho}
\bmdefine\BGs{\sigma}
\bmdefine\BGvs{\varsigma}
\bmdefine\BGt{\tau}
\bmdefine\BGj{\tau}
\bmdefine\BGu{\upsilon}
\bmdefine\BGo{\omega}
\bmdefine\BGx{\xi}
\bmdefine\BGy{\psi}
\bmdefine\BGz{\zeta}
\bmdefine\BGD{\Delta}
\bmdefine\BGF{\Phi}
\bmdefine\BGG{\Gamma}
\bmdefine\BGL{\Lambda}
\bmdefine\BGP{\Pi}
\bmdefine\BGT{\Theta}
\bmdefine\BGS{\Sigma}
\bmdefine\BGU{\Upsilon}
\bmdefine\BGO{\Omega}
\bmdefine\BGX{\Xi}
\bmdefine\BGY{\Psi}
\bmdefine\BFM{\mathfrak{M}}
\bmdefine\BFb{\mathfrak{b}}
\bmdefine\BFk{\mathfrak{k}}
\bmdefine\BFm{\mathfrak{m}}
\bmdefine\BFu{\mathfrak{u}}
\bmdefine\BFv{\mathfrak{v}}
\newcommand{\CA}{{\mathcal A}}
\newcommand{\CB}{{\mathcal B}}
\newcommand{\CE}{{\mathcal E}}
\newcommand{\CH}{{\mathcal H}}
\newcommand{\CJ}{{\mathcal J}}
\newcommand{\CO}{{\mathcal O}}
\newcommand{\CP}{{\mathcal P}}
\newcommand{\CS}{{\mathcal S}}
\newcommand{\CW}{{\mathcal W}}
\bmdefine\BCA{{\mathcal A}}
\bmdefine\BCB{{\mathcal B}}
\bmdefine\BCC{{\mathcal C}}
\bmdefine\BCD{{\mathcal D}}
\bmdefine\BCE{{\mathcal E}}
\bmdefine\BCF{{\mathcal F}}
\bmdefine\BCG{{\mathcal G}}
\bmdefine\BCH{{\mathcal H}}
\bmdefine\BCI{{\mathcal I}}
\bmdefine\BCJ{{\mathcal J}}
\bmdefine\BCK{{\mathcal K}}
\bmdefine\BCL{{\mathcal L}}
\bmdefine\BCM{{\mathcal M}}
\bmdefine\BCN{{\mathcal N}}
\bmdefine\BCO{{\mathcal O}}
\bmdefine\BCP{{\mathcal P}}
\bmdefine\BCQ{{\mathcal Q}}
\bmdefine\BCR{{\mathcal R}}
\bmdefine\BCS{{\mathcal S}}
\bmdefine\BCT{{\mathcal T}}
\bmdefine\BCU{{\mathcal U}}
\bmdefine\BCV{{\mathcal V}}
\bmdefine\BCW{{\mathcal W}}
\bmdefine\BCX{{\mathcal X}}
\bmdefine\BCY{{\mathcal Y}}
\bmdefine\BCZ{{\mathcal Z}}
\bmdefine\Bzr{ 0}
\bmdefine\Ba{ a}
\bmdefine\Bb{ b}
\bmdefine\Bc{ c}
\bmdefine\Bd{ d}
\bmdefine\Be{ e}
\bmdefine\Bf{ f}
\bmdefine\Bg{ g}
\bmdefine\Bh{ h}
\bmdefine\Bi{ i}
\bmdefine\Bj{ j}
\bmdefine\Bk{ k}
\bmdefine\Bl{ l}
\bmdefine\Bm{ m}
\bmdefine\Bn{ n}
\bmdefine\Bo{ o}
\bmdefine\Bp{ p}
\bmdefine\Bq{ q}
\bmdefine\Br{ r}
\bmdefine\Bs{ s}
\bmdefine\Bt{ t}
\bmdefine\Bu{ u}
\bmdefine\Bv{ v}
\bmdefine\Bw{ w}
\bmdefine\Bx{ x}
\bmdefine\By{ y}
\bmdefine\Bz{ z}
\bmdefine\BA{ A}
\bmdefine\BB{ B}
\bmdefine\BC{ C}
\bmdefine\BD{ D}
\bmdefine\BE{ E}
\bmdefine\BF{ F}
\bmdefine\BG{ G}
\bmdefine\BH{ H}
\bmdefine\BI{ I}
\bmdefine\BJ{ J}
\bmdefine\BK{ K}
\bmdefine\BL{ L}
\bmdefine\BM{ M}
\bmdefine\BN{ N}
\bmdefine\BO{ O}
\bmdefine\BP{ P}
\bmdefine\BQ{ Q}
\bmdefine\BR{ R}
\bmdefine\BS{ S}
\bmdefine\BT{ T}
\bmdefine\BU{ U}
\bmdefine\BV{ V}
\bmdefine\BW{ W}
\bmdefine\BX{ X}
\bmdefine\BY{ Y}
\bmdefine\BZ{ Z}
\newcommand{\SFC}{\mathsf{C}}
\newcommand{\SFL}{\mathsf{L}}
\begin{document}
\maketitle

\begin{abstract}
  Clapeyron's Theorem in  classical   linear elasticity  provides a way 
 to explicitly express the energy stored in an equilibrium configuration in terms of
 the work of the forces applied on the boundary. We derive  several new integral relations   which can be viewed as nonlinear analogs of this classical result,  reinterpreting them as rather general statements within Calculus of Variations. In the framework of nonlinear elasticity these relations  reflect  various   partial symmetries of the material response, for instance,   scale-invariance or scaling homogeneity. In particular, when the energy functional is  scale-free, the  obtained result can be  interpreted as  the  Generalized Clapeyron's Theorem (GCT). Remarkably,   it  combines rather naturally  the work of  physical and configurational forces.  We present a series of illuminating case studies showing the variety of applications of various obtained relations in different seemingly unrelated problems of mechanics.

\end{abstract}

\section{Introduction}
\setcounter{equation}{0}
\label{sec:intro}

A well known theorem of linear elasticity, first attributed to Clapeyron in \cite{lame1852} 
states that the elastic energy stored in a body loaded by (dead) forces is equal to half of the work
done by the loading device  \cite{love27,Sokolnikoff:1983:MTE}.
More precisely, it states that 
\begin{equation}
\int_{\GO}W (\Bx,e(\Bu))d\Bx=\hf\int_{\dOm}(\BGs\Bn,\Bu)dS(\Bx), \label{Clap}
\end{equation}
where $W (\Bx,\BGve)=(1/2)(\SFC(\Bx)\BGve,\BGve),$  $\SFC(\Bx)$ is the elasticity
tensor , $\Bu(\Bx)$ is the displacement, $\BGs(\Bx)=\SFC e(\Bu)$ is the linear stress
tensor and
\begin{equation}
  \label{eofu}
  e(\Bu)=\hf(\Grad\Bu(\Bx)+(\Grad\Bu(\Bx))^{T})
\end{equation}
is the linear strain
tensor.    While this theorem is obtained by simple integration by parts and is widely used in applications \cite{brun65,maro97,huet01,momo22} it carries a certain flavor of mystery. 
 
First, it has never
been published by the author himself \cite[p.~418]{tope60}. Second, it contains this
puzzling disappearance of the half of the work, which can be linked to the fact that
elastostatics is only a weak limit of elastodynamics (as the appropriately normalized magnitudes of the
characteristic velocities diverge) \cite{fotr03}. Finally,  to the best of our knowledge,  a   relation   \eqref{Clap}   has been viewed as strictly limited to classical elasticity in the sense that an adequate  analog does not exist in geometrically and physically nonlinear elasticity.  
 
Here we  close the apparent  gap between linear and nonlinear elasticity perspectives on surface representation of the bulk energy and   expose a deeper structure of  the classical Clapeyron's Theorem unifying it with some other similar results in mechanics and physics. This  will not only allows us 
to generalize Clapeyron's Theorem beyond linear elasticity but will also open  the possibility to  view it as a general result in the broader Calculus of Variations. 

We begin with the observation that the abstract problem which one can  associate with the classical Clapeyron's Theorem is to find conditions ensuring that the energy of an extremal configuration can be expressed through the data on the boundary.
Our solution of the problem links the generalization of Clapeyron's Theorem with  scale invariance  of the elastic energy, which we show to be  sufficient alone for obtaining  a representation of the total energy
as a surface integral.  Remarkably, while in the classical Clapeyron's Theorem the implied boundary term contains the work of  physical  forces only, we show that  in our Genralized Clapeyron's Theorem (GCT), the energy of an extremal configuration can be expressed through what can be interpreted as the generalized  work of both physical and configurational forces \cite{chad75,gurt00,podi02,ligu06,ssd09,maug16}. This shows that GCT represents  an indirect generalization of the classical Clapeyron's Theorem. A direct generalization of this classical result,
that does not involve any reference to configurational forces, relies on what we call $p$-homogeneity of the energy density. We show that if both scale invariance and $p$-homogeneity are operative at the same time, there is a corresponding conserved quantity that links the work of physical and configurational forces. Eliminating one or the other from GCT using this link, we can express the equilibrium energy through the work of either only physical or only configurational forces.

The proposed  broader reading of the
Clapeyron's Theorem allows one to advance in several directions simultaneously, from  generalizing  the classical Green's formula in nonlinear elasto-statics \cite{green73} (see also \cite{hill86}),   to obtaining a far reaching
multidimensional generalization of the classical  result of the Calculus of Variations expressing the minimal value of the energy through
the Weierstrass excess function \cite{knst86,grtrnc,grtrmms,rosa_pre}.

The crucial step in our analysis is the realization that behind  the conventional Clapeyron's Theorem is a formula first obtained by E.  Noether \cite{noether18,olver86}. We recall, in this respect,  that the  well known Noether's Theorem \cite{noether18} links  the presence of  variational symmetries   with  the existence of conservation laws in the
form of divergence-free combinations of   field  variables. To obtain this result
Noether  considered the action of a continuous group  of
symmetries of the variational integral and showed  that when a symmetry parameter is varied,  the
  corresponding perturbation of the functional   can be written as a sum  of the  Euler-Lagrange operator and  a divergence \cite{ibra84,olver86,blku89}. 
The underlying     Noether's  formula    has been   used broadly to generate various   integral identities  in many different  domains of science \cite{wagn02,poho65,poho70,reic04,rell56,bomi07,puse86,vdvo91,knops03,olver86}.
Following some previous work, see for instance  \cite{bomi07},  we move away from symmetry groups and consider instead more general classes of transformations.

We begin by  using   Noether's formula to show that by means of an appropriate scaling of the dependent variable,  classical Clapeyron's Theorem  can be  linked to
 degree 2 homogeneity of the quadratic energy density of linear elasticity.
Our   GCT  then emerges from the same type of  analysis but now using  a   scaling  reparametrization transformation,  involving both dependent and independent variables, which  does not leave the functional invariant but  multiplies it by
  a scalar factor.

As we have already mentioned, an intriguing   feature of the GCT  is that it naturally  combines  what is known in solid mechanics as the work performed by the  Piola   
and  Eshelby stresses \cite{eric77,silh97,gurt99,podi01,stma05,gfa10}. Moreover, we show that there exists a higher dimensional representation of GCT featuring a new tensor   \emph{combining} the Piola stress   and the Eshelby stress. To obtain such a representation  we first observe that  for parametric problems energy density is a 1-homogeneous function of  deformation gradient and therefore one can express the total energy along an extremal as a surface integral. Then we show that  one can  take a  general non parametric problem and always map it into a higher dimensional parametric problem  obtaining a desired boundary integral representation in the higher dimensional space.  This higher dimensional representation  allows one to give to the GCT  a  compact form involving the implied  combination of the Piola  and  Eshelby stresses.

 Since  our work is motivated by applications which allow for surfaces of jump discontinuities of field gradients we account  in our representation of GCT   of inherent lack of smoothness of extremals \cite{knst78,know79,ball87,grtrmms}.
  This issue becomes  particularly important in our  analysis of GCT applicable to  nonlinear elastodynamics  where  the formation of singularities is   unavoidable   despite the presence of  the overall variational structure \cite{dafHCL,ggks25}.

To highlight the utility of the obtained results, we present a series of illuminating case studies. In particular, we consider a generic  problem of void formation  in a nonlinear  elastic body where  the possibility of  scaling reparametrization  allows one to  study  infinitely small voids  in a finite domain  as  finite voids  in the infinite space and show that the use of GCT allows one to unify several previous partial results.  We show further that   GCT opens the way  to addressing  in a new way some  important problems of  material stability ranging from  fracture to   phase nucleation. 
We also provide evidence that another important set of applications of the GCT  is associated with the possibility to  rule out the presence of strong local minimizers that are not global in nonlinear elasticity in the case of hard device loading.

Finally, we mention again  that  since partial invarinace   of the type studied in this paper  is a rather general property of variational problems, the obtained  nonlinear versions of the  classical Clapeyron's Theorem   applies to  a broad class of  variational problems not at all related to elasticity theory. For instance,  our methods offer a unified approach to  Rellich and Pohozaev type identities normally used  in reaction diffusion framework.  In this perspective,  the obtained integral  identities should be viewed  as general results  within Calculus of Variations reflecting particular structure  of the corresponding energy functionals.  Notwithstanding the immense breadth of possible application of GCT in different disciplines, we have chosen to focus our presentation on problems of solid mechanics while dedicating only brief comments to other applications.

The paper is organized as follows.  Our Section~\ref{sec:noether}  contains the background material including the  introduction of inner and outer variations, the definitions of   Piola and Eshelby stresses, the analysis of general transformation families  for variational problems and the derivation of the generalized Noether's formula,  which we illustrate in a detailed study of multidimensional elasto-dynamics.

In  Section\ref{sec:parinv}  we specialize  the generalized Noether's formula to  the case of  partial symmetries with   parametric Lagrangians as a prominant example. We then  show how different  types of partial symmetries allow one  to use the general Noether's formula to  derive  various invariant integrals, for instance, the nonlinear versions of $J$, $L$ and $M$ integrals,  whose linearized analogs  have long   been known in  elasticity theory.  As a case study, we present   in this section   a  derivation of a   generalized version of Pohozaev's identities  and show how to use them to prove uniqueness in a broad class of  nonlinear boundary value problems.

The actual GCT is derived only in Section~\ref{sec:clapeyron} where we introduce the idea of scale invariance and derive the  integral identity resulting in the corresponding generalized Noether's formula. We then  map an original   non parametric
problem into a higher dimensional parametric one and take  advantage of  the ensuing scale invariance  to  obtain a hiher dimensional version of the GCT  in the extended space,  naturally  combining the Piola    and   Eshelby stresses.  In the same section we present   versions of GCT applicable in the presence of constraints and in the case when extremals exhibit surfaces of jump discontinuities of field gradients. 
A special place among what we call ``partial symmetries'' is occupied by $p$-homogeneity which we use, for instance, to   simplify GCT   in the case of  linear elasticity where the  energy density is a quadratic function of the  strain tensor. The case study in this section is again the elastodynamics where now we focus on the derivation  of  a version of GCT  that accounts for the emergence of shock waves.

Our  Section~\ref{sec:app} is devoted to various applications of the obtained results. In particular we show how the use of GCT allows one to formulate a new necessary condition  of metastability, establish the non-existence  of strong local minima,  and gain nontrivial insight into notoriously inaccessible quasi-convex envelopes of elastic energies.  We then use   GCT  to obtain a new general representation  of the energy  release associated with the void formation in nonlinear elasticity theory. In particular, we   provide a novel resolution of the well known ``Griffith's error'' \cite{sned46,eshe51,sile67,kesi96}. The corresponding linear elastic problem is discussed in full detail as a case study.

Our conclusions are summarized in Section~\ref{sec:conc}. Several results of purely technical nature are presented in the form of Appendixes in Section~\ref{sec:apps}.

\section{Preliminaries}
\setcounter{equation}{0}
\label{sec:noether}
We begin by introducing  a general variational problem of minimizing the energy functional 
\begin{equation}
E[\By]=\int_{\Omega}W(\Bx,\By(\Bx),\Grad\By(\Bx))d\Bx,
\label{non-param}
\end{equation}
where, using the language of elasticity theory,  we interpret the
function $\By:\GO\to\bb{R}^{m}$ as a deformation field defined on  the
reference (Lagrangian) configuration $\GO\subset\bb{R}^{n}$. With elastodynamics and other applications in mind, we  deal here  with a more
general framework where $m$ and $n$ are arbitrary, the energy density $W$ is allowed to depend on both  $\Bx$ and $\By$, and most importantly, where $\By(\Bx)$ is assumed to be only Lipschitz continuous, most often with only singularities of $\Grad\By(\Bx)$ being smooth surfaces of jump discontinuity, as occurs in the presence of shocks and during martensitic phase transitions.  Since there are many examples of Lipschitz extremals with more general singular sets, even for strictly convex energy densities \cite{degio68,necas77,ball82,sivo92,svya00,musv03}, in each case we will specify the extra regularity assumptions placed on $\By(\Bx)$, which will otherwise always be assumed Lipschitz continuous.

The image $\GO^{*}=\By(\GO)$ is called actual or Eulerian configuration. 
In   classical   nonlinear elasticity it is assumed that
$W(\Bx,\By,\BF)=W(\BF)$ does not depend explicitly on $\Bx$ and $\By$ and is objective, i.e.
$W(\BR\BF)=W(\BF)$ for all $\BR\in SO(3)$ and all $\BF$, $\det\BF>0$. However, here we are not making these assumptions, as our conclusions hold for a far more general class of energy densities.

We will assume, in general, that $W$ is of class $C^{1}$, if not globally, then
on the open set containing the range of $(\Bx,\By(\Bx),\Grad\By(\Bx))$ for any
configuration $\By$ under consideration. In the context of three-dimensional nonlinear elasticity, for example, this means that the range of $\Grad\By(\Bx)$ will be a closed subset of $\{\BF\in\bb{R}^{3\times 3}:\det\BF>0\}$, eliminating the difficulties typically arising when one considers Sobolev solutions, whose existence is guaranteed by \cite{ball7677}.
In some situations that we will identify explicitly, the regularity in the $\Bx$ variable will not be assumed, as in the case of a heterogenous or composite media.

\subsection{First variation of a graph}

We begin with a brief derivation of the   first variation of the energy.
Following Noether, we adopt a more geometric point of view and regard the energy density $W$ as a function on the tangent bundle $T\GG$, where
\[
  \GG=\{(\Bx,\By(\Bx)):\Bx\in\GO\}\subset \bb{R}^{m+n}
\]
is the graph of $\By(\Bx)$.
We will now examine the effect of the graph perturbation on the energy functional. Hence, we consider
a smooth family of Lipschitz maps $\BGF_{\Ge}:\bb{R}^{m+n}\to \bb{R}^{m+n}$, such that $\BGF_{0}(\Bx,\By)=(\Bx,\By)$. When $\Ge$ is sufficiently small, $\GG_{\Ge}=\BGF_{\Ge}(\GG)$ would still be a graph of some Lipschitz function $\Tld{\By}_{\Ge}$, and our goal is to derive the formula for the first variation
\begin{equation}
  \label{firstvar}
\Gd E=\left.\frac{dE[T\GG_{\Ge}]}{d\Ge}\right|_{\Ge=0},
\end{equation}
where
\begin{equation}
  \label{firstvar2}
E[T\GG_{\Ge}]=\int_{\GO_{\Ge}}W(\Tld{\Bx},\By_{\Ge}(\Tld{\Bx}),\Grad\By_{\Ge}(\Tld{\Bx}))d\Tld{\Bx}.
\end{equation}
To explain the  notations in   \eqref{firstvar2}, we write
\begin{equation}
  \label{trgrp}
\BGF_{\Ge}(\Bx,\By)=(\BX(\Bx,\By,\Ge),\BY(\Bx,\By,\Ge)),
\end{equation}
where $\BX$ and $\BY$ are Lipschitz functions of $(\Bx,\By)$ and of class $C^{1}$ in $\Ge$, such that
\[
\BX(\Bx,\By,0)=\Bx,\quad\BY(\Bx,\By,0)=\By.
\]
According to (\ref{trgrp}) $\Tld{\By}_{\Ge}(\Tld{\Bx})=\BY(\Bx,\By(\Bx),\Ge)$, in other words,
\begin{equation}
  \label{yeps}
  \By_{\Ge}(\BX(\Bx,\By(\Bx),\Ge))=\BY(\Bx,\By(\Bx),\Ge),
\end{equation}
and hence, we obtain   \eqref{firstvar2},  
where $\GO_{\Ge}$ is the image of $\GO$ under the transformation $\BX(\Bx,\By(\Bx),\Ge)$.

To compute $\Gd E$ given by (\ref{firstvar}) 
we first introduce the notations 
\[
\left.\Gd\Bx(\Bx,\By)=\dif{\BX(\Bx,\By,\Ge)}{\Ge}\right|_{\Ge=0},\quad
\left.\Gd\By(\Bx,\By)=\dif{\BY(\Bx,\By,\Ge)}{\Ge}\right|_{\Ge=0},
\]
even though in our subsequent derivations   we will only use the Lipschitz functions
\[
\Gd\Bx(\Bx)=\Gd\Bx(\Bx,\By(\Bx)),\qquad \Gd\By(\Bx)=\Gd\By(\Bx,\By(\Bx)).
\]
We then make in  \eqref{firstvar2}   the change of variables $\Tld{\Bx}=\BX(\Bx,\By(\Bx),\Ge)$, and obtain
\[
E[T\GG_{\Ge}]=\int_{\GO}W(\BX(\Bx,\By(\Bx),\Ge),\By_{\Ge}(\BX(\Bx,\By(\Bx),\Ge)),\Grad\By_{\Ge}(\BX(\Bx,\By(\Bx),\Ge)))\det(\Grad\BX)d\Bx,
\]
where $\Grad\BX=\Grad_{\Bx}(\BX(\Bx,\By(\Bx),\Ge))$.
Differentiating (\ref{yeps}) in $\Bx$ we obtain
\[
\Grad\By_{\Ge}(\BX(\Bx,\By(\Bx),\Ge))\Grad\BX=\Grad\BY
\]
Therefore,
\begin{equation}
  \label{deform}
  E[T\GG_{\Ge}]=\int_{\GO}W(\BX,\BY,\Grad\BY(\Grad\BX)^{-1})\det(\Grad\BX)d\Bx,
\end{equation}
where $\BX$ and $\BY$ are evaluated at $(\Bx,\By(\Bx),\Ge)$.
We note that
\[
\left.\dif{\Grad\BX}{\Ge}\right|_{\Ge=0}=\Grad_{\Bx}\Gd\Bx(\Bx,\By(\Bx)),\quad \left.\dif{\Grad\BY}{\Ge}\right|_{\Ge=0}=\Grad_{\Bx}\Gd\By(\Bx,\By(\Bx)).
\]
Now, we can differentiate (\ref{deform}) in $\Ge$ at $\Ge=0$ and obtain
\[
    \Gd E=\int_{\GO}\{W_{\Bx}\cdot\Gd\Bx+W_{\By}\cdot\Gd\By+\av{W_{\BF},\Grad\Gd\By-\Grad\By\Grad\Gd\Bx}+
  W\Div\Gd\Bx\}d\Bx, 
\]
where $\Gd\Bx=\Gd\Bx(\Bx,\By(\Bx))$ and $\Gd\By=\Gd\By(\Bx,\By(\Bx))$.

We now introduce  important notations which will be essentially used in what follows. First, we define the tensor field 
\begin{equation}
  \BP(\Bx,\By,\BF)=W_{\BF}(\Bx,\By,\BF), 
\label{Esh-tensor1}
\end{equation}
to which we  refer as  the  \emph{Piola stress tensor} using its  narrow meaning in classical   elasticity theory; in a general field theory it is known as canonical momentum or current tensor.  Second, we define another tensor field 
\begin{equation}
  \BP^{*}(\Bx,\By,\BF)=W(\Bx,\By,\BF)\BI_{n}-\BF^{T}\BP(\Bx,\By,\BF),
\label{Esh-tensor}
\end{equation}
and refer to it as the \emph{Eshelby tensor}, again, referring to a more limited notion used in nonlinear  elasticity;  in a general field theory this object  is known as the energy momentum tensor. The corresponding background material, explaining the origin of these definitions in more detail,  can be found in the monographs \cite{silh97,gurt00,gfa10,maug16,gori17,krro19,seep20}. 

Using these definitions we can rewrite the formula for $\Gd E$ as follows
\begin{equation}
  \label{firstvar1}
  \Gd E=\int_{\GO}\{W_{\By}\cdot\Gd\By+W_{\Bx}\cdot\Gd\Bx+\av{\BP,\Grad\Gd\By}
  +\av{\BP^{*},\Grad\Gd\Bx}\}d\Bx,
\end{equation}
where $\BP$ and $\BP^{*}$ are evaluated at $(\Bx,\By(\Bx),\Grad\By(\Bx))$.
If all the functions above, $W$, $\By$, $\BX$, and $\BY$ are of class $C^{2}$, we can integrate by parts and rewite (\ref{firstvar1}) as
\begin{equation}
  \label{deriv}
\Gd E=\int_{\GO}\{\mathfrak{E}_{W}(\Bx)\cdot\Gd\By+\mathfrak{E}^{*}_{W}(\Bx)\cdot\Gd\Bx\}d\Bx+
\int_{\dOm}\{\BP\Bn\cdot\Gd\By+\BP^{*}\Bn\cdot\Gd\Bx\}d\Bx,
\end{equation}
where
\begin{equation}
  \label{EL}
  \mathfrak{E}_{W}(\Bx)=W_{\By}(\Bx,\By(\Bx),\Grad\By(\Bx))-\Div \BP(\Bx,\By(\Bx),\Grad\By),
\end{equation}
and
\begin{equation}
  \label{ELT}
\mathfrak{E}^{*}_{W}(\Bx)=W_{\Bx}(\Bx,\By(\Bx),\Grad\By(\Bx))-\Div\BP^{*}(\Bx,\By(\Bx),\Grad\By(\Bx)).
\end{equation}
The fields $ \mathfrak{E}_{W}(\Bx)$ and $\mathfrak{E}^{*}_{W}(\Bx)$ are linked through a relation  derived originally by  Noether \cite{noether18}:
\begin{equation}
  \label{Noether0}
\mathfrak{E}^{*}_{W}(\Bx)=-(\Grad\By)^{T}\mathfrak{E}_{W}(\Bx),
\end{equation}
Indeed,
\begin{multline*}
  \mathfrak{E}^{*}_{W}(\Bx)_{\Ga}=\dif{W}{x^{\Ga}}-\dif{}{x^{\Ga}}[W(\Bx,\By(\Bx),\Grad\By)]+
  \mix{y^{i}}{x^{\Ga}}{x^{\Gb}}P_{i}^{\Gb}+\dif{y^{i}}{x^{\Ga}}\dif{P_{i}^{\Gb}}{x^{\Gb}}=\\
\dif{W}{x^{\Ga}}-\dif{W}{x^{\Ga}}-\dif{W}{y^{i}}\dif{y^{i}}{x^{\Ga}}-P_{i}^{\Gb}\mix{y^{i}}{x^{\Ga}}{x^{\Gb}}+
  \mix{y^{i}}{x^{\Ga}}{x^{\Gb}}P_{i}^{\Gb}+\dif{y^{i}}{x^{\Ga}}\dif{P_{i}^{\Gb}}{x^{\Gb}}=\\
  -\dif{W}{y^{i}}\dif{y^{i}}{x^{\Ga}}+\dif{y^{i}}{x^{\Ga}}\dif{P_{i}^{\Gb}}{x^{\Gb}}=
  -\dif{y^{i}}{x^{\Ga}}\mathfrak{E}_{W}(\Bx)_{i},
\end{multline*}
which is (\ref{Noether0}) written in components.

 The  massive cancellations of terms in the
above derivation of (\ref{Noether0}) suggests that there is a deeper underlying reason for (\ref{Noether0}) to hold. We will therefore rederive it as a corollary of formula (\ref{firstvar1}), valid
for any Lipschitz $\bra{\By}\in W^{2,1}_{\rm loc}(\GO;\bb{R}^{m})$.
We present the result in the form of the  theorem:
\begin{theorem}
  \label{th:NI}
  Suppose $W$ is of class $C^{1}$, and $\bra{\By}\in W^{2,1}_{\rm loc}(\GO;\bb{R}^{m})\cap
  W^{1,\infty}(\GO;\bb{R}^{m})$. Then (\ref{Noether0})  holds in the
  sense of equality of distributions,
  where $(\Grad\By)^{T}\mathfrak{E}_{W}$ is understood as the distribution 
  \[
((\Grad\By)^{T}\mathfrak{E}_{W})_{i}=\dif{\bra{y}^{k}}{x^{i}}W_{y^{k}}-\dif{}{x^{j}}\left[\dif{\bra{y}^{k}}{x^{i}}P_{k}^{j}\right]+\mix{\bra{y}^{k}}{x^{i}}{x^{j}}P_{k}^{j},
\]
which makes sense, since our
regularity assumptions on $\bra{\By}$ imply that the third term above is in $L^{1}_{\rm loc}(\GO)$.
\end{theorem}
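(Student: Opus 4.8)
The plan is to establish \eqref{Noether0} in its tested form, proving that $\langle\mathfrak{E}^{*}_{W},\BGx\rangle=-\langle(\Grad\By)^{T}\mathfrak{E}_{W},\BGx\rangle$ for every $\BGx\in C^{\infty}_{c}(\GO;\bb{R}^{n})$. First I would record why each side is a bona fide distribution. Because $W\in C^{1}$ and, by $\By\in W^{1,\infty}$, the triple $(\Bx,\By(\Bx),\Grad\By(\Bx))$ ranges in a compact set, the fields $\BP=W_{\BF}$ and $\BP^{*}=W\BI_{n}-\BF^{T}\BP$ are bounded; hence $\mathfrak{E}^{*}_{W}=W_{\Bx}-\Div\BP^{*}$ makes sense distributionally, the product $\dif{y^{k}}{x^{i}}P_{k}^{j}$ lies in $L^{\infty}$ and has a well-defined distributional divergence, and $\mix{y^{k}}{x^{i}}{x^{j}}P^{j}_{k}\in L^{1}_{\rm loc}$ thanks to $\By\in W^{2,1}_{\rm loc}$. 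This is precisely the reason the right-hand side is written in three terms: the naive product $\dif{y^{k}}{x^{i}}\,\Div\BP$ pairs an $L^{\infty}$ function with a distribution and has no intrinsic meaning, so one rewrites it by the Leibniz rule in reverse,
\[
\dif{y^{k}}{x^{i}}\,\dif{P^{j}_{k}}{x^{j}}=\dif{}{x^{j}}\big[\dif{y^{k}}{x^{i}}P^{j}_{k}\big]-\mix{y^{k}}{x^{i}}{x^{j}}P^{j}_{k},
\]
turning the ambiguous term into a genuine divergence of an $L^{\infty}$ field minus an $L^{1}_{\rm loc}$ correction.

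For the left-hand side I would invoke the first-variation formula \eqref{firstvar1} with the pure inner variation $\Gd\Bx=\BGx$, $\Gd\By=0$ (equivalently, read off the action of $\mathfrak{E}^{*}_{W}=W_{\Bx}-\Div\BP^{*}$ on $\BGx$), which gives
\[
\langle\mathfrak{E}^{*}_{W},\BGx\rangle=\int_{\GO}\left\{W_{\Bx}\cdot\BGx+W\,\Div\BGx-\av{\BP,(\Grad\By)\Grad\BGx}\right\}d\Bx ,
\]
after expanding $\av{\BP^{*},\Grad\BGx}=W\,\Div\BGx-\av{\BP,(\Grad\By)\Grad\BGx}$. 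On the other side, integrating by parts the single distributional divergence in the definition of $(\Grad\By)^{T}\mathfrak{E}_{W}$ produces a term identical to $-\av{\BP,(\Grad\By)\Grad\BGx}$ (with no boundary contribution, since $\BGx$ is compactly supported). Cancelling this common term, the whole statement reduces to the scalar identity
\[
\int_{\GO}\left\{W_{\Bx}\cdot\BGx+W\,\Div\BGx\right\}d\Bx=-\int_{\GO}\left\{\dif{y^{k}}{x^{i}}W_{y^{k}}+\mix{y^{k}}{x^{i}}{x^{j}}P^{j}_{k}\right\}\BGx^{i}\,d\Bx ,
\]
which is exactly the cancellation performed pointwise in the $C^{2}$ derivation of the excerpt, now to be justified at the level of distributions.

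The decisive step, and the main obstacle, is the chain rule
\[
\dif{}{x^{\Ga}}\big[W(\Bx,\By(\Bx),\Grad\By(\Bx))\big]=W_{x^{\Ga}}+W_{y^{k}}\dif{y^{k}}{x^{\Ga}}+P^{j}_{k}\,\mix{y^{k}}{x^{\Ga}}{x^{j}}
\]
as an identity in $L^{1}_{\rm loc}(\GO)$; it is not automatic, since $W$ is only $C^{1}$ and $\Grad\By$ only $W^{1,1}_{\rm loc}$. I would prove it by mollifying $\By$, applying the classical chain rule to the smooth approximants $\By_{\eps}$, and letting $\eps\to0$, using that $\By_{\eps}\to\By$ uniformly on compacta and $\Grad\By_{\eps}\to\Grad\By$, $\Grad^{2}\By_{\eps}\to\Grad^{2}\By$ in $L^{1}_{\rm loc}$ (and a.e.\ along a subsequence), so that the continuity and boundedness of $W_{\Bx},W_{\By},W_{\BF}$ on the compact range let one pass to the limit in $L^{1}_{\rm loc}$ by dominated convergence. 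With this identity in hand, integrating $\int_{\GO}W\,\Div\BGx\,d\Bx$ by parts and substituting the chain rule cancels the two copies of $\int_{\GO}W_{\Bx}\cdot\BGx\,d\Bx$ and leaves precisely $-\int_{\GO}\{\dif{y^{k}}{x^{i}}W_{y^{k}}+\mix{y^{k}}{x^{i}}{x^{j}}P^{j}_{k}\}\BGx^{i}\,d\Bx$, establishing the reduced identity and hence \eqref{Noether0} for all $\BGx$. This route also clarifies the ``massive cancellations'' noted after the pointwise computation: they are the manifestation, in the smooth case, of the reverse-Leibniz bookkeeping that the low-regularity statement is forced to make explicit.
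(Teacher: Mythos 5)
Your proof is correct, but it follows a genuinely different route from the paper's. The paper's proof never touches the chain rule directly: it constructs, for each test field $\BGf\in C_{0}^{\infty}(\GO;\bb{R}^{n})$, the one-parameter family $\BGF_{\Ge}(\Bx,\By)=(\Bx+\Ge\BGf(\Bx),\,\By+\bra{\By}(\Bx+\Ge\BGf(\Bx))-\bra{\By}(\Bx))$, observes that it maps the graph of $\bra{\By}$ onto itself so that $\Gd E=0$, and then reads off $0=\av{\mathfrak{E}^{*}_{W}|\BGf}+\av{(\Grad\By)^{T}\mathfrak{E}_{W}|\BGf}$ from the first-variation formula (\ref{firstvar1}) with $\Gd\Bx=\BGf$, $\Gd\By=(\Grad\bra{\By})\BGf$; the hypothesis $\bra{\By}\in W^{2,1}_{\rm loc}$ enters only to make $\Grad\Gd\By$ an $L^{1}_{\rm loc}$ object so that (\ref{firstvar1}) applies. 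You instead pair each side of (\ref{Noether0}) separately with $\BGx$, cancel the common term $\av{\BP,(\Grad\By)\Grad\BGx}$, and reduce the whole statement to the distributional chain rule for $\Bx\mapsto W(\Bx,\By(\Bx),\Grad\By(\Bx))$, which you establish by mollification. Your reduction and the verification of the reduced identity are accurate, and the mollification argument for the chain rule is standard and goes through under the stated hypotheses (bounded range of $(\Bx,\By,\Grad\By)$ plus $\Grad^{2}\By_{\eps}\to\Grad^{2}\By$ in $L^{1}_{\rm loc}$). What the paper's route buys is conceptual economy: the identity is exhibited as the statement that reparametrizations of the graph do not change the energy, which is exactly the structural explanation the authors are after (and the same mechanism they reuse for parametric Lagrangians in Section~\ref{sec:param}); the price is that the analytic burden is shifted onto the validity of (\ref{firstvar1}) for this low-regularity family. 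Your route buys explicitness: it isolates the one nontrivial analytic fact (the $W^{1,1}$ chain rule for the composite Lagrangian) that makes the ``massive cancellations'' legitimate, at the cost of losing the variational interpretation.
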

\begin{proof}
The idea is to construct a sufficiently large family of nontrivial deformations of
the $(\Bx,\By)$-space none of whose members change the graph of
$\bra{\By}(\Bx)$. Let $\BGf\in C_{0}^{\infty}(\GO;\bb{R}^{m})$ be arbitrary. Then
$\BX(\Bx,\Ge)=\Bx+\Ge\BGf(\Bx)$ is a diffeomorphism of $\GO$ onto itself for
sufficiently small $|\Ge|$, so that $\GO_{\Ge}=\GO$. Then the deformation
\[
\BGF_{\Ge}(\Bx,\By)=(\BX(\Bx,\Ge),\By+\bra{\By}(\BX(\Bx,\Ge))-\bra{\By}(\Bx)),
\]
leaves the graph of $\bra{\By}(\Bx)$ invariant. Indeed,
\[
  \BGF_{\Ge}(\Bx,\bra{\By}(\Bx))=(\BX(\Bx,\Ge),\bra{\By}(\Bx)+\bra{\By}(\BX(\Bx,\Ge))
  -\bra{\By}(\Bx))=(\BX(\Bx,\Ge),\bra{\By}(\BX(\Bx,\Ge))).
\]
Since $\BX(\Bx,\Ge)$ is a diffeomorphism of $\GO$, we conclude that $\BGF_{\Ge}(\GG)=\GG$, where $\GG$ is the graph of $\bra{\By}(\Bx)$. Thus, $\Gd E=0$ on the \lhs\ of (\ref{firstvar1}). We also compute
\[
  \Gd\Bx=\BGf,\quad\Gd\By=(\Grad\bra{\By})\BGf.
\]
In that case formula (\ref{firstvar1}) reads
\begin{equation}
  \label{preNI}
  0=\int_{\GO}\{W_{\Bx}\cdot\BGf+\av{\BP^{*},\Grad\BGf}+W_{\By}\cdot(\Grad\bra{\By})\BGf+
  \av{\BP,\Grad((\Grad\bra{\By})\BGf)}\}d\Bx,
\end{equation}
writing
\[
  \av{\BP,\Grad((\Grad\bra{\By})\BGf)}=P_{k}^{j}\dif{\bra{y}^{k}}{x^{i}}\dif{\phi^{i}}{x^{j}}
  +P_{k}^{j}\mix{\bra{y}^{k}}{x^{i}}{x^{j}}\phi^{i},
\]
we see that (\ref{preNI}) can be written as $0=\av{\mathfrak{E}^{*}_{W}|\BGf}+\av{(\Grad\By)^{T}\mathfrak{E}_{W}|\BGf}$, where the angular bracket notation denotes the action of distributions on test functions.
\end{proof}
\begin{remark} When $\bra{\By}$ is merely Lipschitz continuous, the proof of
  Theorem~\ref{th:NI} breaks down, since in this case
$\Grad\Gd\By$ is not an $L^{1}_{\rm loc}$ function and the derivation
of formula (\ref{firstvar1}) becomes invalid. Moreover, formula
(\ref{firstvar1}) itself does not make sense for a distributional
$\Grad\Gd\By$. Nonetheless, Theorem~\ref{th:NI} can still be used on
subdomains of $\GO$, where $\bra{\By}$ is of class $W^{2,1}$.
\end{remark}
Using this remark, we can apply the Noether formula
(\ref{deriv}) to the functional
\begin{equation}
  \label{ssf}
  E[\By]=\int_{\GO }W(\Grad\By)d\Bx
\end{equation}
where the field $\By(\Bx)$ is assumed to be of class $C^{2}$, outside of a smooth surface $\GS$ of jump discontinuity of $\Grad\By$ where the distributional parts of the Lagrangian quantities $\mathfrak{E}_{W}$ and $\mathfrak{E}^{*}_{W}$ supported on the surface $\GS$ must be taken into account.
The  identity (\ref{firstvar1}) can be then rewritten in the form similar to (\ref{deriv})
\begin{multline}
  \label{ClapId1}
\Gd E=-\int_{\GO}\{(\Div\BP)_{\rm reg}\cdot\Gd\By
  +(\Div\BP^{*})_{\rm reg}\cdot\Gd\Bx\}d\Bx\\
  +\int_{\dOm}\{\BP\Bn\cdot\Gd\By+\BP^{*}\Bn\cdot\Gd\Bx\}dS
-\int_{\GS}\{\jump{\BP}\Bn \cdot\Gd\By+\jump{\BP^{*}}\Bn \cdot\Gd\Bx\}dS.
\end{multline}
where
$(\Div\BP)_{\rm reg}$ and $(\Div\BP^{*})_{\rm reg}$ are $C^{1}$ functions on $\GO\setminus S$, suffering a jump discontinuity across $\GS$, defined by $\Div\BP(\Bx)$ and $\Div\BP^{*}(\Bx)$, $\Bx\in\GO\setminus \GS$.
If we follow the standard convention that the unit normal to a surface $\GS$ always points from the ``$-$'' side of $\GS$ to its ``$+$'' side, then 
\[\jump{\BP}=\BP_{+}(\Bx)-\BP_{-}(\Bx),\quad\Bx\in\GS.\]

\subsection{Extremal and stationary configurations}

Our primary interest lies in problems of Calculus of Variations arising from elastostatics and elastodynamics. The latter is governed by the action principle, whereby 
\begin{equation}
\Gd E=0  \label{ELeq1}
 \end{equation}
 for all outer variations $\Gd\By\in C_{0}^{\infty}(\GO;\bb{R}^{m})$, and $\Gd\Bx=0$;  the former---by the energy minimization procedure, one of whose consequences is (\ref{ELeq1}) for all variations $\Gd\By\in C_{0}^{\infty}(\GO;\bb{R}^{m})$ and $\Gd\Bx\in C_{0}^{\infty}(\GO;\bb{R}^{n})$. In both cases condition \eqref{ELeq1}   requires that 
\begin{equation}
    \label{ELeq}
\mathfrak{E}_{W}(\Bx)=0,
  \end{equation}
which serves in such theories as the main governing equation.
\begin{definition}
  \label{def:equil}
  We say that the Lipschitz configuration $\By(\Bx)$ is an extremal, if it satisfies the Euler-Lagrange equation (\ref{ELeq})
  in the sense of distributions.
\end{definition}
This definition allows for singularities of $\Grad\By$ corresponding, for instance, to shock waves and phase boundaries \cite{trus93}.
We recall that  stationarity  is fully compatible with jump discontinuities of this type as it  places restrictions on such singularities   but does not rule them out. This issue   is discussed in more detail in Section~\ref{sub:dyn}.

An important  implication of Thorem~\ref{th:NI} is that for any Lipschitz extremal, which is of class $W^{2,1}_{\rm loc}$ outside of a closed nowhere dense singular set $S$, $\mathfrak{E}^{*}_{W}(\Bx)$ is a distribution supported on $S$. In the special case when $\Grad\By(\Bx)$ suffers a jump discontinuity across a smooth surface $\GS\subset\GO$, outside of which $\By$ is of class $C^{2}$, the distribution $\mathfrak{E}^{*}_{W}(\Bx)$ has a particularly simple form. To derive it, we take an arbitrary point $\Bx_{0}\in\GS$, choose a direction of the normal $\Bn_{0}$ to $\GS$ at $\Bx_{0}$ and a radius $r>0$, such that $\GS$ splits $B(\Bx_{0},r)$ into two disjoint subdomains $B^{+}(\Bx_{0},r)$ and $B ^{-}(\Bx_{0},r)$ , with $\Bn_{0}$ pointing from $B^{-}(\Bx_{0},r)$ and into $B^{+}(\Bx_{0},r)$, by our convention. Then there is a unique choice of the smooth normal $\Bn(\Bx)$ to $\GS$ at $\GS\cap B(x_{0},r)$, such that $\Bn(\Bx_{0})=\Bn_{0}$. Then for any test function $\BGf\in C_{0}^{\infty}(B(\Bx_{0},r))$,
\begin{multline*}
    \av{\mathfrak{E}^{*}_{W}(\Bx),\BGf}=\int_{B(\Bx_{0},r)}\{W_{\Bx}\cdot\BGf+\av{\BP^{*},\Grad\BGf}\}d\Bx=\\
\int_{B^{+}(\Bx_{0},r)}\{W_{\Bx}\cdot\BGf+\av{\BP^{*},\Grad\BGf}\}d\Bx+\int_{B^{-}(\Bx_{0},r)}\{W_{\Bx}\cdot\BGf+\av{\BP^{*},\Grad\BGf}\}d\Bx. 
\end{multline*}
We can now integrate by parts on each of the domains $B^{\pm}(\Bx_{0},r)$, taking (\ref{Noether0}) and $\mathfrak{E}_{W}(\Bx)=0$ into account:
\begin{equation} \label{jumpP1}
  \av{\mathfrak{E}^{*}_{W}(\Bx),\BGf}=-\int_{\GS}\jump{\BP^{*}}\Bn\cdot\BGf\,dS.
 \end{equation}

The vector field $\jump{\BP}^{*}\Bn$ in \eqref{jumpP1} can be simplified further, taking into account again that $\mathfrak{E}_{W}(\Bx)=0$, which implies that
\begin{equation}
  \label{jumpP}
   \jump{\BP}\Bn=\Bzr.
 \end{equation}
In this case we can write 
 \[
\jump{\BP^{*}}\Bn=\jump{W}\Bn-\jump{\BF^{T}\BP}\Bn=\jump{W}\Bn-\jump{\BF}^{T}\BP\Bn,
\]
where we use the usual notation $\BF(\Bx)=\Grad\By(\Bx)$.
We now recall that on the smooth surface of jump discontinuity of $\Grad\By$ the following Hadamard relations hold:
\begin{equation}
  \label{Hadamard}
  \jump{\BF}=\Ba\otimes\Bn, 
\end{equation}
where $\Bn(\Bx)$ is the unit normal on an $n-1$-dimensional surface $\GS\subset\bb{R}^{n}$ and $\Ba:\GS\to\bb{R}^{m}$ is a $C^{1}$ vector field on $\GS$. Then
\[
\jump{\BP^{*}}\Bn=(\jump{W}-\BP\Bn\cdot\Ba)\Bn=(\jump{W}-\av{\BP_{\pm},\jump{\BF}})\Bn=p^{*}\Bn.
\]
Hence,
\begin{equation}
  \label{EWstar}
  \mathfrak{E}^{*}_{W}(\Bx)=-p^{*}_{\GS}\Bn\Gd_{\GS}, 
\end{equation}
where $\Gd_{\GS}$ is a distribution defined by
\[
\av{\Gd_{\GS},\phi}=\int_{\GS}\phi(\Bx)dS,\quad\forall\phi\in C_{0}^{\infty}(\GO).
\]
In (\ref{EWstar}) we also introduced a continuous scalar field
\begin{equation}
  \label{pstar}
  p^{*}_{\GS}=\jump{W}-\av{\BP_{\pm},\jump{\BF}}
\end{equation}
on the smooth surface of jump discontinuity $\GS$. Here $\BP_{\pm}$ indicates that it does not matter whether one uses $\BP_{+}$ or $\BP_{-}$ in the formula for $p^{*}_{\GS}$, as long as one uses this formula for an extremal, so that (\ref{jumpP}) holds.

As we have already mentioned, in static problems we are usually interested in minima of energy functionals, subject to specified \bc s. Then, Lipschitz minimizers of (\ref{non-param}) must satisfy $\Gd E=0$ for all variations
$\Gd\Bx\in C_{0}^{\infty}(\GO;\bb{R}^{n})$, and $\Gd\By\in C_{0}^{\infty}(\GO;\bb{R}^{m})$. Hence, in addition to (\ref{ELeq}) the solutions of such problems must also satisfy
\begin{equation}
  \label{stationary}
\mathfrak{E}^{*}_{W}(\Bx)=0,
\end{equation}
again understood in the sense of distributions.
\begin{definition}
  \label{def:exst}
We say that a Lipschitz configuration $\By(\Bx)$ is \emph{stationary} if
it is an extremal in the sense of Definition~\ref{def:equil}, and additionally satisfies (\ref{stationary}).
\end{definition}

Note that in view of Theorem   \ref{th:NI} if $\bra{\By}\in W^{2,1}_{\rm loc}(\GO;\bb{R}^{m})\cap
  W^{1,\infty}(\GO;\bb{R}^{m})$ is an extremal, then it is stationary
  in the sense of Definition~\ref{def:exst}.  Indeed $\Grad\bra{\By}\BGf\in W_{0}^{1,1}(\GO;\bb{R}^{m})$ and
  \[
\int_{\GO}\{W_{\By}\cdot\BGy+\av{\BP,\Grad\BGy}\}d\Bx=0,\quad\forall\BGy\in C_{0}^{\infty}(\GO;\bb{R}^{m}).
\]
By the density argument it follows that
 \[
\int_{\GO}\{W_{\By}\cdot\BGy+\av{\BP,\Grad\BGy}\}d\Bx=0,\quad\forall\BGy\in W_{0}^{1,1}(\GO;\bb{R}^{m}).
\]
Taking $\BGy=\Grad\bra{\By}\BGf$ we obtain
\[
\av{(\Grad\bra{\By})^{T}\mathfrak{E}_{W}|\BGf}=0\quad\forall\BGf\in C_{0}^{\infty}(\GO;\bb{R}^{n}).
\]
Theorem~\ref{th:NI} then implies that (\ref{stationary}) holds.
  
Note also that  formula (\ref{EWstar}) implies that for
stationary extremals with a smooth surface of jump discontinuity $\GS$ we must have
\begin{equation}
  \label{pst0}
  p^{*}_{\GS}=\jump{W}-\av{\BP_{\pm},\jump{\BF}}=0.
\end{equation}
In this sense 
 stationarity,   is not a consequence
 of equilibrium even for piecewise smooth $\By(\Bx)$.

While we introduced  in \eqref{Esh-tensor1} and \eqref{Esh-tensor}  two important tensorial fields, $\BP$ and $\BP^{*}$, which appear for instance   in   \eqref{deriv}, their physical meaning has not been explained. 
Observe first that while Piola stress $\BP$  acts on the virtual displacements $\Gd\By$ of the points on the boundary of the domain in the actual (Eulerian) space, the Eshely stress $\BP^{*}$  acts on the virtual changes of the shape of the domain $\GO$ in the reference (Lagrangian) space. The former is usually interpreted as the action of  ``physical forces'', and the latter -- as the action of ``configurational forces''. Therefore, we can conclude that  the two terms in the surface integral in \eqref{deriv}  represent the   work of both physical and configurational forces, see for instance, \cite{gfa10}.
More specifically, the implied two energy increments  can be associated with physical stress, respectively configurational prestress.  Here, while the meaning of the physical stress does not require any clarifications as it is related to the work of applied forces,  we associate  configurational prestress with the work that has been stored during the \emph{creation} of the physical body. To elucidate the  implied  fundamental  difference between $\BP$ and $\BP^{*}$ we present below to simple illustrative examples.

\begin{example}
  \label{ex0}
\end{example}

Consider a one-dimensional body, given in Lagrangian coordinates as an interval $[0,L]$.
The stored energy is  
\begin{equation}
E[u]=\int_{0}^{L}W(u')dx,
\end{equation}
where $u(x)$ is the displacement field. We assume that the energy density function $W(\Gve)$ is a smooth, even, strictly convex function, such that $W'(0)=0$, corresponding to the absence of the physical prestress. The configurational prestress is encoded by the residual energy $W_{*}=W(0)>0$. 
The relationship between the physical (Piola) stress
 \begin{equation}
P=W'(\Gve) 
\end{equation}
 and the configurational (Eshelby) stress \begin{equation}
P^{*}=W(\Gve)-\Gve W'(\Gve),
\end{equation}
 where the strain is  $\Gve=u'(x)$  is shown in the left panel of Fig.~\ref{fig:Clap1D}.
\begin{figure}[t]
  \centering
  \includegraphics[scale=0.25]{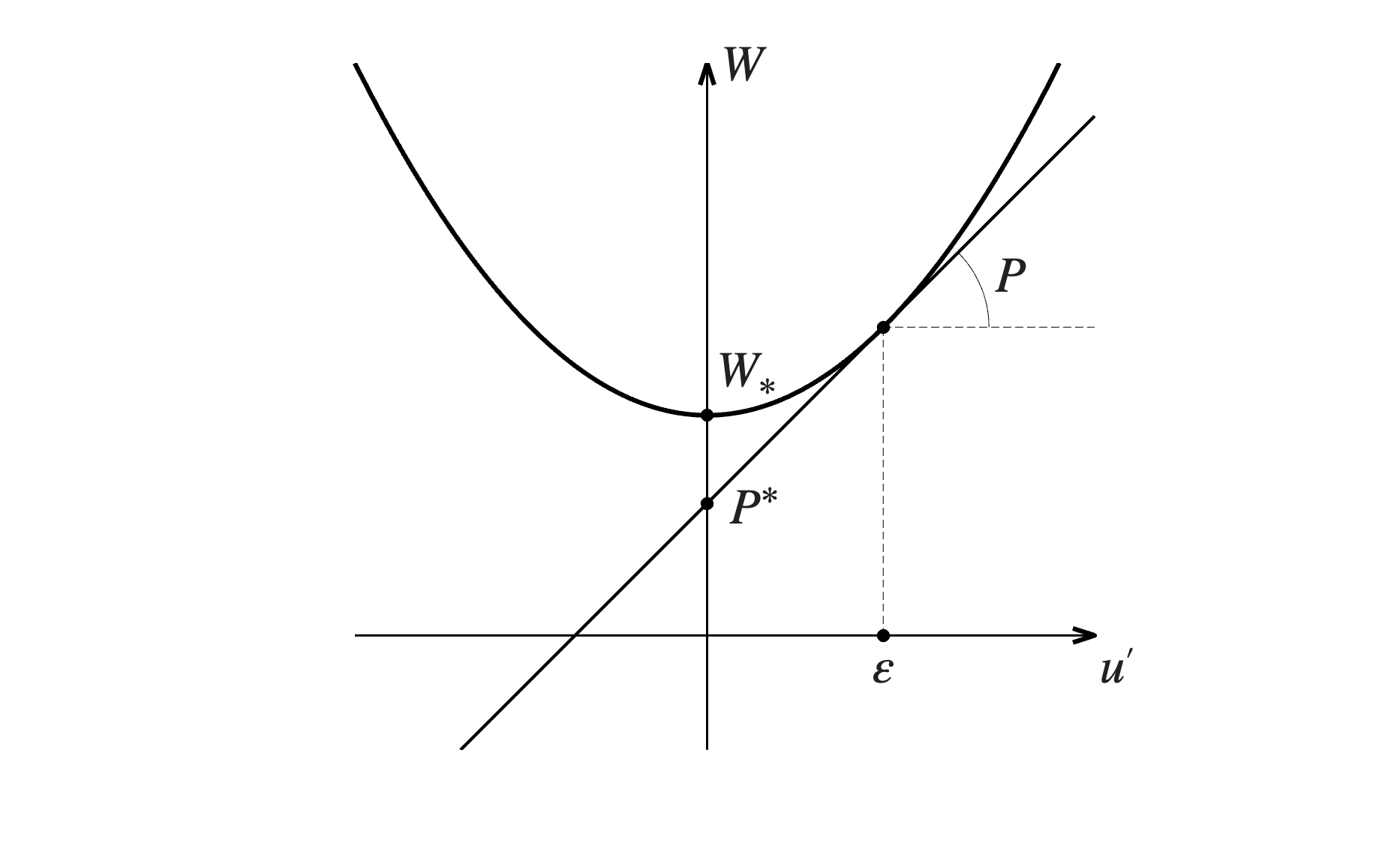}
  \includegraphics[scale=0.25]{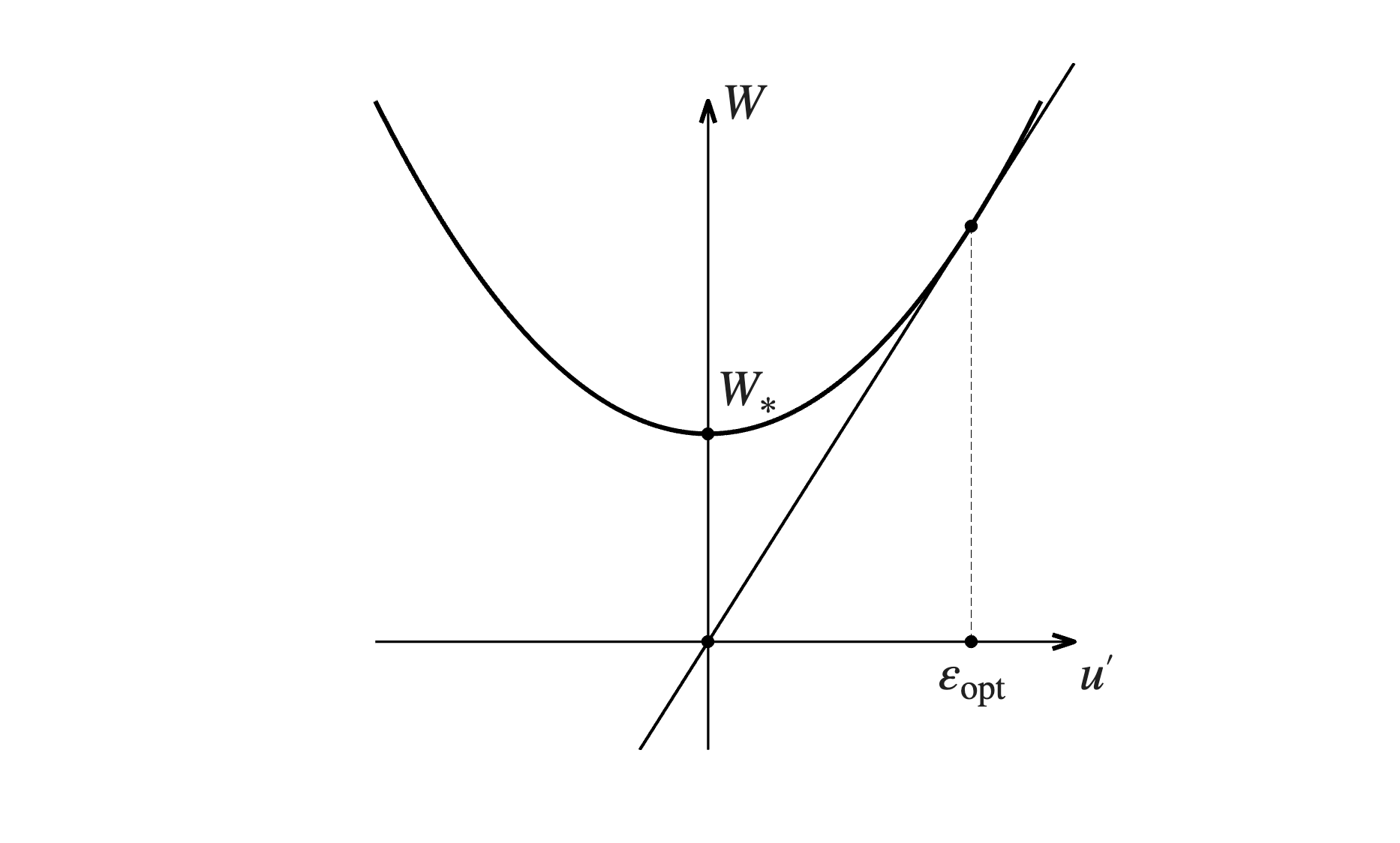}
  \caption{Relations between physical stress $P$, strain $\Gve$ and the configurational stress $P^{*}$ in equilibrium.}
  \label{fig:Clap1D}
\end{figure}
Behind  this figure is the observation that in equilibrium we always have $(W'(\Gve))'=0$ and due to the strict monotonicity of $W'(\Gve)$, expressing the strict convexity of $W(\Gve)$, all equilibrium configurations are the ones where $\Gve$ is constant on $[0,L]$. Note that while $P^{*}$ has the meaning of energy, $P$ is  a derivative of energy with respect to  $\Gve$.

Consider now a slightly different situation, where the deformation of points on $\dOm$ is prescribed,
but the domain $\GO$ itself is not fixed. In our 1D setting we can then consider that the numbers
$U_{0}=u(0)$ and $U_{1}=u(L)$ are prescribed, while  $L$ can vary. 

From a general point of view, in that case the admissible outer variations $\Gd\By$ in (\ref{deriv}) must vanish on $\dOm$, but the boundary values of admissible inner variations $\Gd\Bx$ are no longer restricted in any way. Then the vanishing of the first variation $\Gd E$ of the energy is equivalent to the Euler-Lagrange equation (\ref{ELeq}) and the Noether  equation (\ref{stationary}) augmented with the displacement boundary conditions together with the condition 
\begin{equation}
  \label{optimshape}
  \BP^{*}\Bn=0\text{ on }\dOm.
\end{equation}
This  leads to    a free boundary problem, where the shape and size of $\GO$ must be determined, in addition to the deformation $\By(\Bx)$.
\begin{figure}[t]
  \centering
  \includegraphics[scale=0.25]{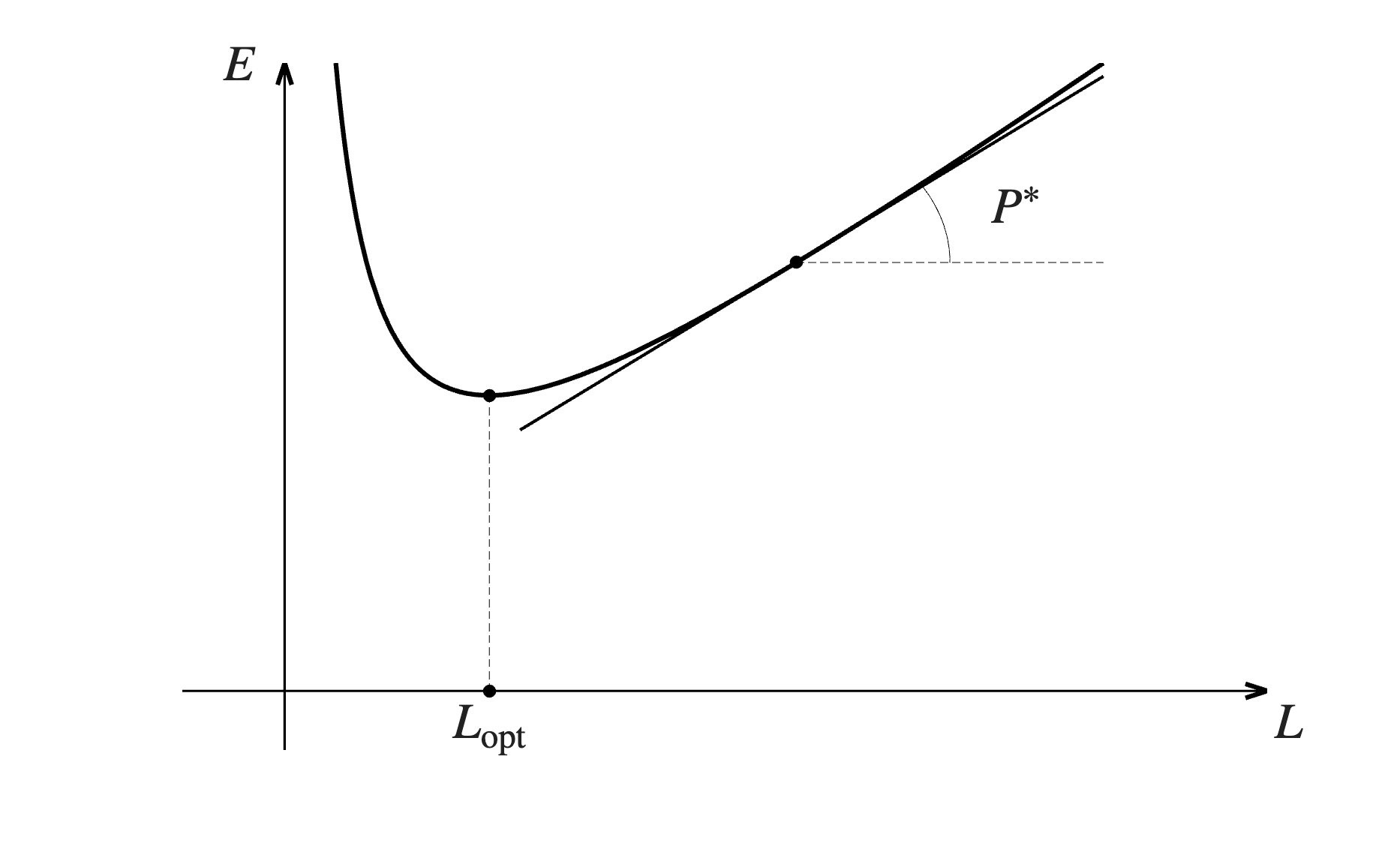}
  \caption{Energy as a function of the domain shape, when boundary displacements are fixed.}
  \label{fig:EofL}
\end{figure}

In our one dimensional example we can conclude that the energy extremum as a function of the shape/size of the domain is attained when $P^{*}=0$. The right panel in Fig.~\ref{fig:Clap1D} shows how the energy density function $W(\Gve)$ determines the value of $\Gve_{\rm opt}$ of the optimal strain.
This gives us the value
\[
L_{\rm opt}=\frac{U_{1}-U_{0}}{\Gve_{\rm opt}}.
\]
If $U_{0}$ and $U_{1}$ are prescribed, then, by strict convexity of $W(\Gve)$, there is a unique value of the constant equilibrium strain
\[
\Gve=\frac{U_{1}-U_{0}}{L}.
\]
Thus, the energy as a function of $L$ is
\[
E=LW\left(\frac{U_{1}-U_{0}}{L}\right).
\]
It is easy to see that
\[
\frac{dE}{dL}=P^{*},\qquad\frac{d^{2}E}{dL^{2}}=\frac{\Gve^{2}W''(\Gve)}{L}\geq 0,
\]
which shows that $P^{*}=0$ is indeed the condition of extremality of $E(L)$, and that $L_{\rm opt}$ delivers the minimum of the energy. The fact that $P^{*}\neq 0$ the system can still lower its energy by either growth or resorption    is illustrated in Fig.~\ref{fig:EofL}.

\begin{example}
  \label{ex1}
\end{example}
 Here we show how to construct   the simplest  $n$ dimensional   configuratiuonally prestressed elastic body.  For analytical transparency we limit our attention to spherically symmetric configurations and geometrically linear elasticity.
 
 Specifically,
  the goal is to obtain a  nontrivial radial stationary extremal  on the ball $B(0,R)\subset\bb{R}^{n}$  of the functional 
\begin{equation}
  \label{Erad}
  E[\Bu]=\int_{B(0,R)}W(\Bx,e(\Bu))d\Bx,
\end{equation}
where the notation $e(\Bu)$ was defined in (\ref{eofu}),
under the condition that the ``body'' $B(0,R)$ is not loaded by the  ``physical'' forces in the sense that the extremal  satisfies the boundary condition
\begin{equation}
 \label{Erad1}
  \BP\Bn=0,\quad|\Bx|=R. 
\end{equation}
Note that \eqref{Erad1} does not exclude the possibility that on the boundary we still have ``configurational'' loading in the sense that  
\begin{equation}
  \label{tracfree2}
  \BP^*\Bn\neq0,\quad|\Bx|=R.
\end{equation} 

For our goal, it is sufficient to choose the simplest  linearly elastic quadratic energy density of the form  
\begin{equation}
  \label{Wdisloc}
  W(\Bx,\BGve)=\hf|\BGve -\BGve_0(\Bx)|^{2},
\end{equation}
where $\BGve$ is a symmetric $n\times n$ matrix. The idea is to use the incompatible inhomogeneity $\BGve_0(\Bx)$ in the sense that 
\begin{equation}
  \label{Wdisloc1}
{\rm  Ink} [\BGve_0(\Bx)] \neq0  
\end{equation}
where
\[
  {\rm  Ink} [\BGve(\Bx)]_{ijkl}=\mix{\Gve_{ij}}{x^{k}}{x^{l}}-\mix{\Gve_{ik}}{x^{j}}{x^{l}}+
  \mix{\Gve_{lk}}{x^{i}}{x^{j}}-\mix{\Gve_{lj}}{x^{i}}{x^{k}}.
\]
It is clear that   \eqref{Wdisloc1} eliminates the trivial solution $\BGve(\Bx)=\BGve_{0}$ with zero energy.
\begin{figure}[t]
 \centering
 \includegraphics[scale=0.2]{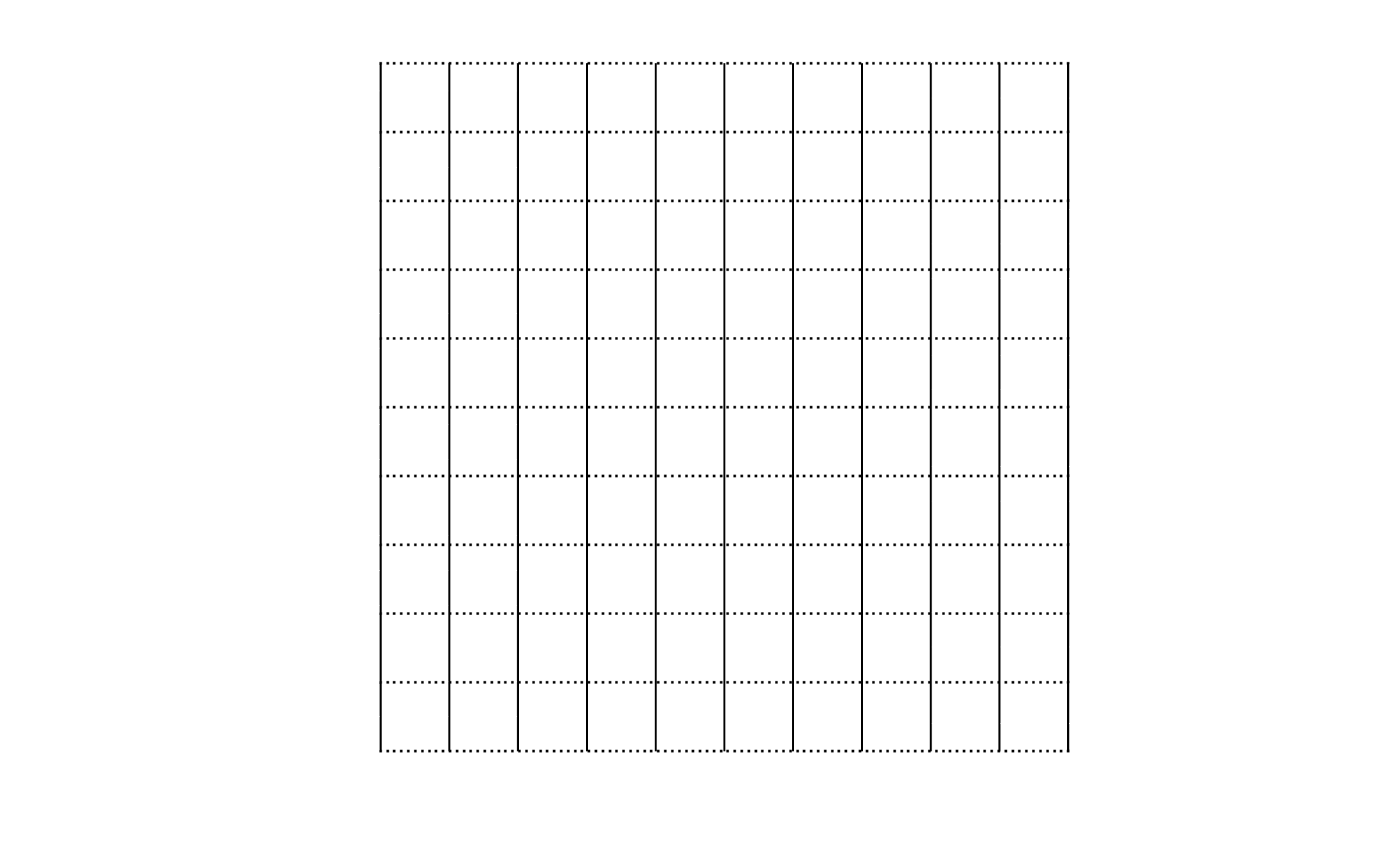}
 \includegraphics[scale=0.2]{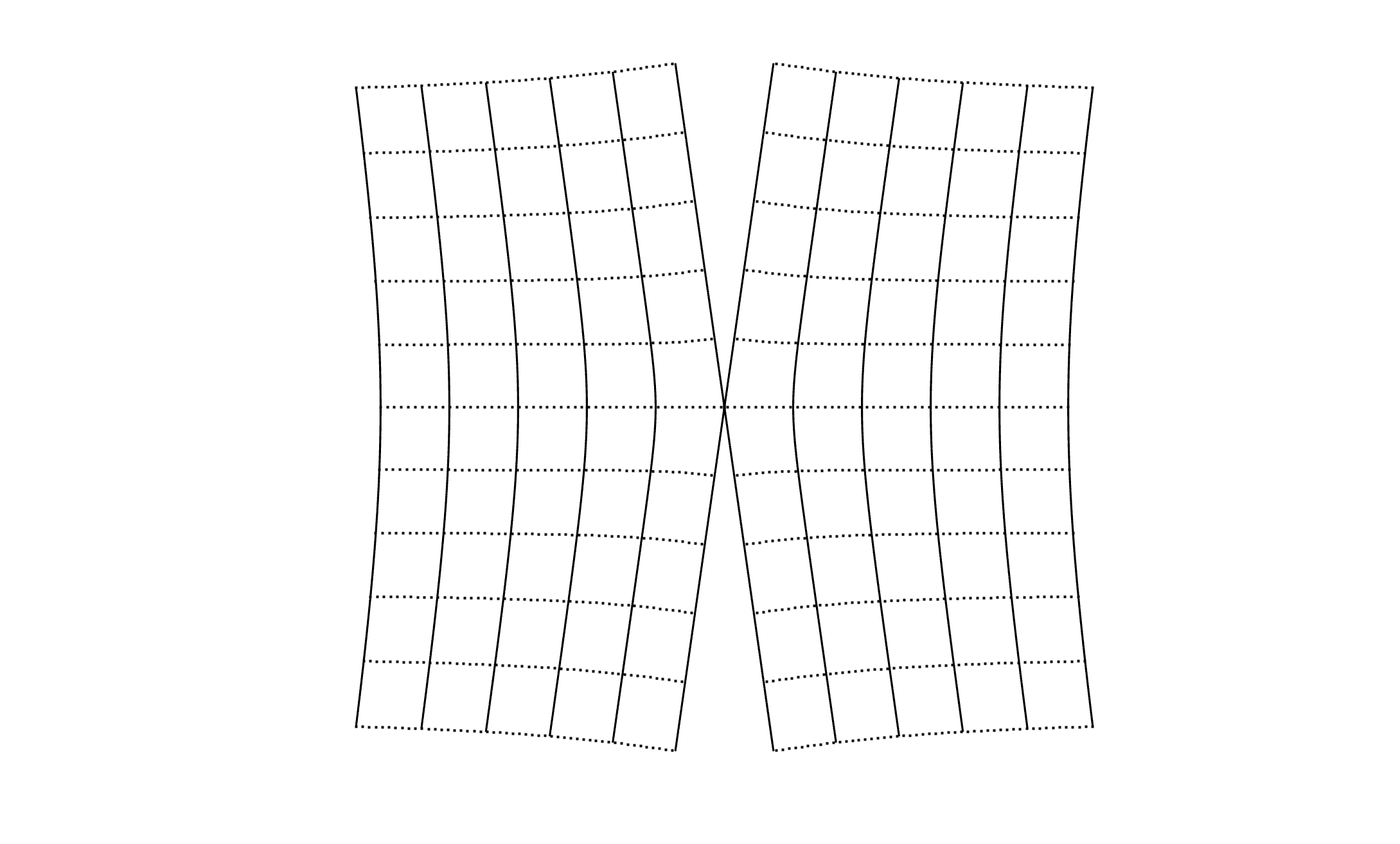}
 \caption{A perfect 2D grid  represented on the left panel is transformed  by the kinematically incompatible deformation strain field  (\ref{tracfree3}) into the discontinuous deformed grid shown on the right panel.  }
 \label{fig:discl}
\end{figure}
We interpret the inhomogeneity  $\BGve_0(\Bx)$ satisfying   \eqref{Wdisloc1} as  configurational prestress.  To have a spherically symmetric example, we choose
\begin{equation}
  \label{tracfree3}
  \BGve_0(\Bx)=a\tns{\Hat{\Bx}},\quad\Hat{\Bx}=\frac{\Bx}{|\Bx|}.
  \end{equation}    
  It is then straightforward to check that 
 \begin{equation}
  \label{tracfree31}  
  {\rm Ink}[\BGve_0]_{ijkl}(\Bx)=\frac{2a}{r^{2}}(\hat{x}^{l}\hat{x}^{k}\Gd_{ij}-
  \hat{x}^{i}\hat{x}^{k}\Gd_{lj}-\hat{x}^{l}\hat{x}^{j}\Gd_{ik}+\hat{x}^{i}\hat{x}^{j}\Gd_{lk}
  +\Gd_{ik}\Gd_{jl}-\Gd_{ij}\Gd_{kl})\neq 0,
\end{equation}
except when $n=2$, which case ${\rm Ink}[\BGve_0]_{1122}(\Bx)=2\pi\Gd(\Bx)$. This means that there is a local displacement field in $\bb{R}^{2}\setminus\{0\}$, corresponding to the strain field \eqref{tracfree3}, but no global one. This is illustrated in Fig.~\ref{fig:discl} by picturing the images of the Cartesian grid lines under the corresponding discontinous deformation $\By(\Bx)=\Bx+\Ge\Bu(\Bx)$, where
\[
  u_{1}(x_{1},x_{2})=x_{1}+x_{2}\arctan\left(\frac{x_{2}}{x_{1}}\right),\quad
  u_{2}(x_{1},x_{2})=x_{2}-x_{1}\arctan\left(\frac{x_{2}}{x_{1}}\right).
\]

As the   trivial candidate $e(\Bu)=\BGve_{0}$  is not admissible,   the minimizer of the energy  \eqref{Erad} with free \bc s  \eqref{Erad1}  will be a stationary extremal possessing positive energy as it will be unstressed ``physically'' but still prestressed ``configurationally'' in the sense that it will carry nonzero residual stresses.

 The minimizer solves the following traction \bvp:
  \[
    \begin{cases}
    \Grad(\Div\Bu)+\GD\Bu=2\Div\BGve_{0}(\Bx),&\Bx\in\GO,\\
    \BGs\Bn=(e(\Bu)-\BGve_{0})\Bn=0,&\Bx\in\dOm.
    \end{cases}
 \]
To obtain such a minimizer explicitly,  when $\GO=B(0,R)$---the ball in $\bb{R}^{n}$, centered at 0 with radius R, and $\BGve_{0}(\Bx)$ given by (\ref{tracfree3}) we first  rescale both $\Bx$ and $\Bu$ variables and set, \WLOG, $a=1$, $R=1$.
We will look for a radial solution
\begin{equation}
  \label{urad}
  \Bu(\Bx)=\eta(r)\Hat{\Bx}. 
\end{equation}
Since $\Grad\Bu$ is symmetric, the 
 function $\eta(r)$  should  be chosen such that
\begin{equation} \label{r01}
\GD\Bu=\Div(\tns{\Bx}),
\end{equation}
or in terms of $\eta(r)$, such that
\begin{equation} \label{r0}
\eta''(r)+(n-1)\left(\frac{\eta(r)}{r}\right)'=\frac{(n-1)}{r}.
\end{equation}
A solution of \eqref{r0} that does not blow up at $r=0$ must have the form   
\begin{equation}
  \label{eta}
    \eta(r)=
  br+\frac{n-1}{n}r\ln r.
\end{equation}
Finally, the zero traction boundary condition  \eqref{Erad1} gives
$b=1/n$, giving
\begin{equation}
  \label{P}
  \BP=\left(1+\frac{n-1}{n}\ln r\right)\BI_{n}-\tns{\Hat{\Bx}}, 
\end{equation}
\begin{equation}
  \label{P*}
    \BP^{*}=\frac{(n+2)(n-1)^{2}\ln r -1}{n^{2}}\tns{\Hat{\Bx}}+
  \frac{(n+2)(n-1)^{2}(\ln r)^{2}+2(n^{2}-1)\ln r +n+1)}{2n^{2}}\BI_{n}. 
\end{equation}

\begin{figure}[t]
  \centering
  \includegraphics[scale=0.25]{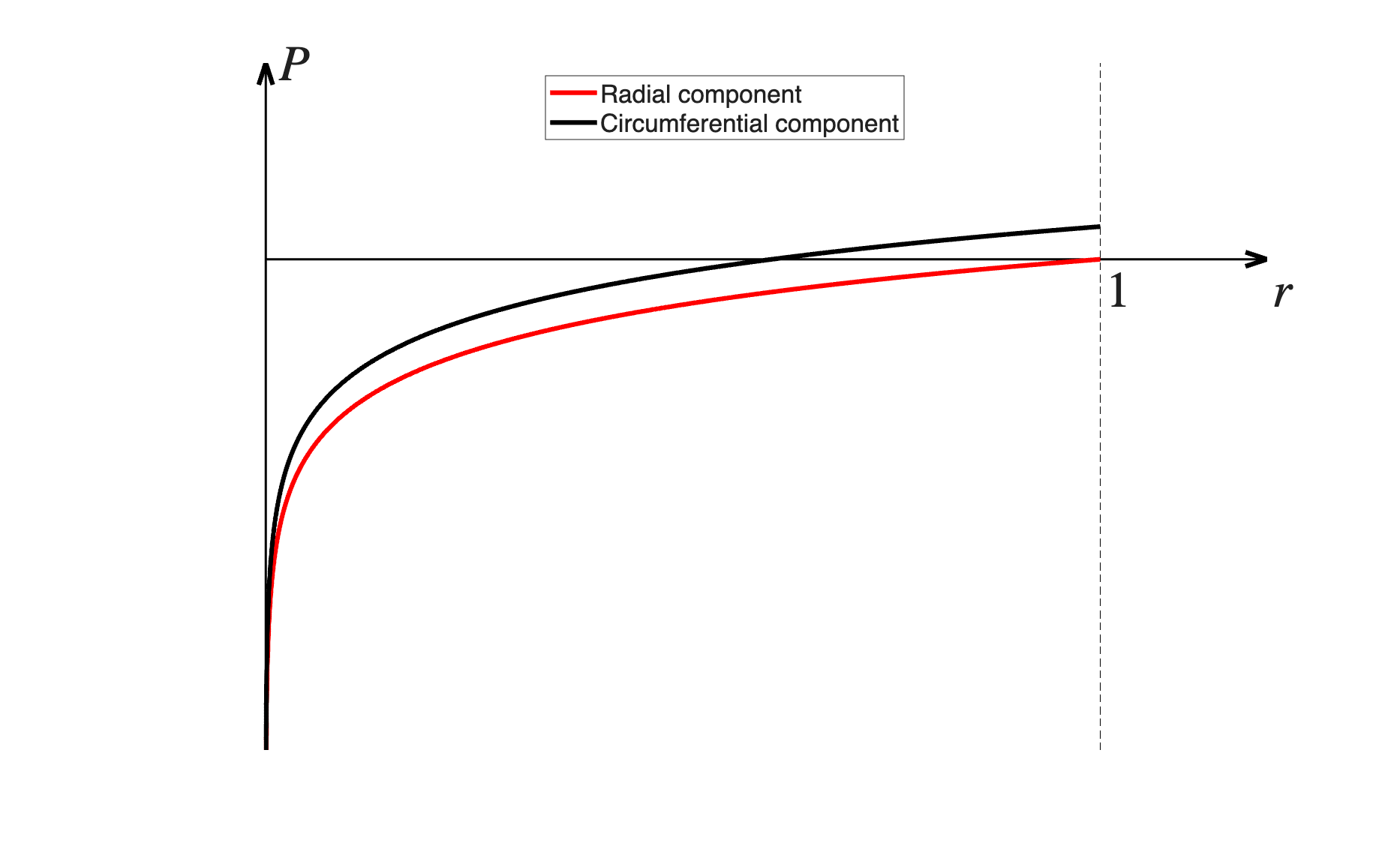}
  \includegraphics[scale=0.25]{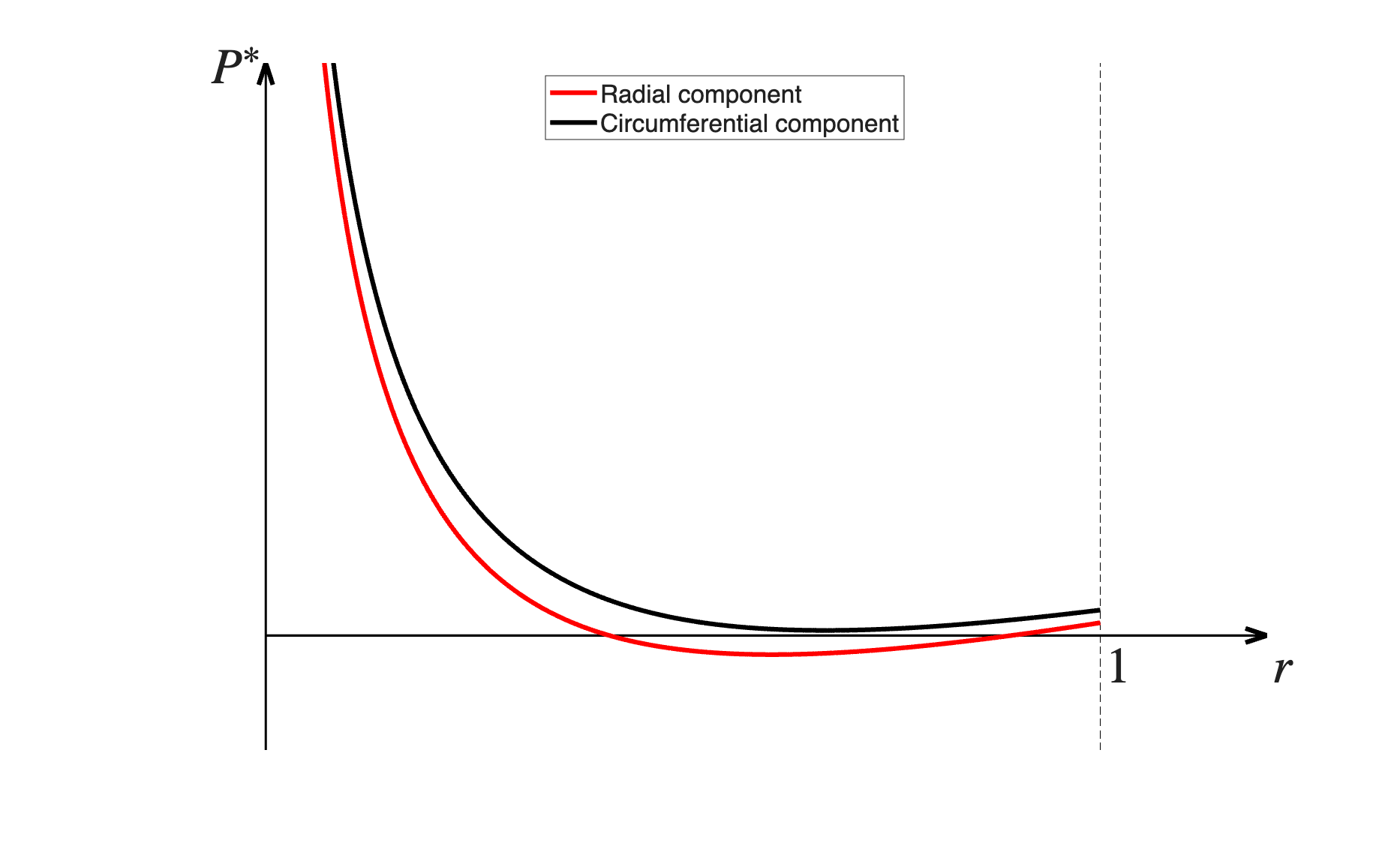}
  \caption{Radial and circumferential components of $\BP$ (left panel)
    and $\BP^{*}$ (right panel).}
  \label{fig:radial}
\end{figure}

We can now compute the energy of our configurationally prestressed body by direct substitution, obtaining
\begin{equation}
  \label{Epos1}
E[\Bu]= \frac{|B(0,1)|}{2}\frac{n-1}{n^{2}}>0,
\end{equation}
where $|B(0,1)|$ denotes the $n$-dimensional volume of the unit ball in $\bb{R}^{n}$.
As we have already mentioned, the positive value of the energy should   be attributed to the
nonzero external configurational loading
\[
\BP^{*}(\Bx)\Bn(\Bx) =\frac{n-1}{2n^{2}}\Hat{\Bx},\quad|\Bx|=R=1.
\]
When $n=3$, we obtain that Ink$[\BGve_{0}]$ has only the radial component equal to
$-2/r^{2}$ . The plots of the radial and circumferential components of
$\BP$ and $\BP^{*}$, given by (\ref{P}) and (\ref{P*}), respectively, are shown in Fig.~\ref{fig:radial} 
The nonzero radial component of $\BP^{*}$ on the boundary can be thought as a factor ensuring the   embedding of the incompatibility Ink$[\BGve_{0}]$ into the body during the process of its imaginary growth through surface deposition.

Note that  the configurational prestress can be  attributed to the
incompatibility of the embdedded inhomogeneity of 
$\BGve_{0}(\Bx)$ in general. Indeed, if we recall the  orhogonal decompostion of the
Hilbert space $\CH=L^{2}(\GO;\Sym(\bb{R}^{n}))$
\begin{equation}
  \label{decom}
  \CH=\CE\oplus\CJ,\quad \CE=\{e(\Bu): \Bu\in
H^{1}(\GO;\bb{R}^{n})\},\ \CJ=\{\BGs\in\CH:\Div\BGs=0,\ \BGs|_{\dOm}\Bn=0\}.
\end{equation}
Therefore we can always write any field in $\CH$ as a sum of a field in $\CE$ and a field in $\CJ$. Specifically, for $\BGve_{0}\in\CH$ we can find $\Bu_{0}\in H^{1}(\GO;\bb{R}^{n})$ and $\BGs_{0}\in\CJ$, such that
\[
\BGve_{0}(\Bx)=e(\Bu_{0})-\BGs_{0}.
\]
Then the energy minimizer will be $\Bu=\Bu_{0}$, the stress will be $\BGs_{0}=e(\Bu_{0})-\BGve_{0}(\Bx)$ and the minimum value of the energy will be
\[
E[\Bu_{0}]=\hf\|\BGs_{0}\|^{2}_{L^{2}(\GO)}.
\]
The incompatible component $\BGs_{0}$ can be found by solving the system
\begin{equation}
  \label{Inkeq}
  \begin{cases}
    {\rm Ink}[\BGs_{0}]=-{\rm Ink}[\BGve_{0}],&\Bx\in\GO\\
    \Div\BGs_{0}=0,&\Bx\in\GO,\\
    \BGs_{0}\Bn=0,&\Bx\in\dOm.
  \end{cases}
\end{equation}
The decomposition (\ref{decom}) implies that the \bvp\ (\ref{Inkeq}) has a unique solution on simply connected
domains, since Ink$[\BGve]=0$ implies $\BGve=e(\Bv)$ for some $\Bv\in
H^{1}$. Thus, $\BGs_{0}$, and therefore the energy $E[\Bu_{0}]$ depends only on ${\rm Ink}[\BGve_{0}]$. In fact,
\[
\BP^{*}\Bn=W\Bn=\hf|\BGs_{0}|^{2}\Bn,\text{ on }\dOm,
\]
which allows us to   conclude that our ``physically unloaded'' body  was ultimately ``loaded'' by configurational forces.

\subsection{Generalized Noether formula}
We  now summarize our general  discussion   by formulating the theorem: 
\begin{theorem}[Generalized Noether formula]~
  \label{th:genoether}
    If $\By$ is a piecewise $C^{2}$ extremal in the sense of Definition~\ref{def:equil}, whose only singularity is a smooth surface $\GS$ of jump discontinuity\footnote{The surface $\GS$ does not need to be connected, and hence the case of several surfaces of jump discontinuity is included.} of $\Grad\By(\Bx)$, then the first variation $\Gd E$ of (\ref{non-param}) under the action of a $C^{1}$ transformation (\ref{trgrp}) is given by
   \begin{equation}
  \label{incremS}
  \Gd E=\int_{\dOm}\{\BP\Bn\cdot\Gd\By+\BP^{*}\Bn\cdot\Gd\Bx\}dS
-\int_{\GS}p^{*}_{\GS}\Bn\cdot\Gd\Bx\,dS,
\end{equation}
where $p^{*}_{\GS}$ is given by (\ref{pstar}). If the extremal $\By$ is stationary, then $p^{*}_{\GS}=0$ in (\ref{incremS}).
\end{theorem}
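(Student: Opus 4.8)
The plan is to start from the master first-variation identity \eqref{firstvar1} and to observe that it remains a genuine Lebesgue integral under the hypotheses of the theorem. Indeed, since $\By$ is piecewise $C^{2}$ with $\GS$ its only singular surface, the composite variations $\Gd\By(\Bx)=\Gd\By(\Bx,\By(\Bx))$ and $\Gd\Bx(\Bx)=\Gd\Bx(\Bx,\By(\Bx))$ produced by the $C^{1}$ transformation \eqref{trgrp}, together with their gradients, are bounded and piecewise $C^{1}$, suffering at worst a jump discontinuity across $\GS$ inherited from $\Grad\By$. Hence $\BP$, $\BP^{*}$, $\Grad\Gd\By$ and $\Grad\Gd\Bx$ all lie in $L^{\infty}$, the obstruction noted in the Remark following Theorem~\ref{th:NI} does not arise (that obstruction is specific to the choice $\Gd\By=(\Grad\bra{\By})\BGf$, which needs second derivatives of a merely Lipschitz $\bra{\By}$), and \eqref{firstvar1} applies verbatim.

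Next I would integrate by parts the two gradient terms $\av{\BP,\Grad\Gd\By}$ and $\av{\BP^{*},\Grad\Gd\Bx}$ separately on each of the two open components into which $\GS$ divides $\GO$. Each smooth piece contributes a bulk divergence, a boundary integral over its portion of $\dOm$, and a boundary integral over $\GS$; because the outward normals of the two pieces point oppositely along $\GS$, the two interface contributions combine into jump terms. This reproduces, for the general Lagrangian of \eqref{non-param}, the structure already recorded in \eqref{ClapId1}: the bulk integrands regroup, via \eqref{EL} and \eqref{ELT}, into the regular parts of $\mathfrak{E}_{W}$ and $\mathfrak{E}^{*}_{W}$, the outer boundary yields $\int_{\dOm}\{\BP\Bn\cdot\Gd\By+\BP^{*}\Bn\cdot\Gd\Bx\}dS$, and the singular surface yields $-\int_{\GS}\{\jump{\BP}\Bn\cdot\Gd\By+\jump{\BP^{*}}\Bn\cdot\Gd\Bx\}dS$.

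I would then invoke the extremal hypothesis of Definition~\ref{def:equil}. The distributional identity $\mathfrak{E}_{W}=0$ encodes two facts at once: its regular part vanishes on $\GO\setminus\GS$, killing the $\Gd\By$ bulk term, and its singular part vanishes, which is precisely the jump condition $\jump{\BP}\Bn=\Bzr$ of \eqref{jumpP}, killing the $\GS$ contribution that pairs with $\Gd\By$. For the inner-variation terms, the Noether identity \eqref{Noether0} of Theorem~\ref{th:NI}, applied on each smooth piece, gives $\mathfrak{E}^{*}_{W}=-(\Grad\By)^{T}\mathfrak{E}_{W}=0$ there, so the $\Gd\Bx$ bulk term vanishes as well. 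It then remains to simplify $\jump{\BP^{*}}\Bn$ on $\GS$: using $\jump{\BP}\Bn=\Bzr$ together with the Hadamard relation \eqref{Hadamard}, exactly as in the computation preceding \eqref{EWstar}, one finds $\jump{\BP^{*}}\Bn=p^{*}_{\GS}\Bn$ with $p^{*}_{\GS}$ given by \eqref{pstar}. Collecting the surviving terms yields \eqref{incremS}.

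Finally, if $\By$ is stationary in the sense of Definition~\ref{def:exst}, then $\mathfrak{E}^{*}_{W}=0$ holds as a distribution; by the representation \eqref{EWstar} this forces the coefficient $p^{*}_{\GS}$ of the surface delta to vanish, which is precisely \eqref{pst0}, so the last integral in \eqref{incremS} drops out. The one genuinely delicate step is the distributional bookkeeping across $\GS$ in the third paragraph: one must verify that the single statement $\mathfrak{E}_{W}=0$ really does split into the regular Euler--Lagrange equation off $\GS$ \emph{and} the jump condition \eqref{jumpP} on $\GS$, and that the piecewise integration by parts produces the interface terms with the correct orientation of $\Bn$. Everything else is the routine regrouping already carried out in the derivations of \eqref{ClapId1} and \eqref{EWstar}.
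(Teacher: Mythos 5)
Your proposal is correct and follows essentially the same route as the paper, which states Theorem~\ref{th:genoether} as a summary of the derivations of \eqref{ClapId1}, \eqref{jumpP1}, \eqref{EWstar}--\eqref{pstar} and \eqref{pst0}: piecewise integration by parts of \eqref{firstvar1} across $\GS$, elimination of the bulk and $\jump{\BP}\Bn$ terms via $\mathfrak{E}_{W}=0$ and the Noether identity \eqref{Noether0} on each smooth piece, and the Hadamard-based reduction of $\jump{\BP^{*}}\Bn$ to $p^{*}_{\GS}\Bn$. Your remarks on why the regularity obstruction of the Lipschitz case does not arise here are also consistent with the paper's Remark~2.2.
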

\begin{remark}
  \label{rem:reg}
  One can also say that formula
  \begin{equation}
  \label{increm}
  \Gd E=\int_{\dOm}\{\BP\Bn\cdot\Gd\By+\BP^{*}\Bn\cdot\Gd\Bx\}dS(\Bx),
\end{equation}
holds for any Lipschitz stationary configuration. However, the \rhs\ in (\ref{increm}) can no longer be understood as an integral of a functions over a surface, since $\BP$ and $\BP^{*}$ are not defined on $\dOm$, when $\By$ is merely Lipschitz. However, if we additionally require that $\By$ be of class $C^{2}$ on a \nbh\ of $\dOm$, then the \rhs\ in formula (\ref{increm}) is understood in the classical pointwise sense. We have used this remark in \cite{grtrhard}.
\end{remark}

Theorem~\ref{th:genoether} reminds us that the classical condition of vanishing of the first variation
is not satisfied  in general for stationary configurations, but only for the variations that vanish on the boundary. It also points to the fact that in the case of extremals that are not stationary, as for instance occurs in dynamics, the vanishing of the first variation requires additionally that the defects (the discontinuities of $\Grad\By$) are not perturbed ($\Gd\Bx=0$ on $\GS$). 

Finally, we mention that one can  rewrite (\ref{increm}) in a form that makes it clear that only the part of the virtual graph-displacement $(\Gd\Bx,\Gd\By)$ that moves the graph of $\By(\Bx)$ as a geometric object contributes to the integral on the \rhs\ of (\ref{increm}). Indeed,
we observe that
\[
T_{(\Bx,\By(\Bx))}\Md\GG=\{(\BGt,\Grad\By(\Bx)\BGt):\BGt\in T_{\Bx}\dOm\}
\]
is the tangent space to the boundary of the graph of $\By(\Bx)$ at the point $(\Bx,\By(\Bx))$, $\Bx\in\dOm$. We then compute, using formula (\ref{Esh-tensor}),
\[
(\BP^{*}\Bn,\BP\Bn)\cdot(\BGt,\BF\BGt)=\BP^{*}\Bn\cdot\BGt+\BP\Bn\cdot\BF\BGt=W\Bn\cdot\BGt-\BF^{T}\BP\Bn\cdot\BGt+\BP\Bn\cdot \BF\BGt=0,
\]
where we have used the shorthand $\BF$ to denote $\Grad\By(\Bx)$, and where $\BP$ and $\BP^{*}$ are evaluated at $(\Bx,\By(\Bx),\Grad\By(\Bx))$.
This shows that the vector $(\BP^{*}\Bn,\BP\Bn)\in\bb{R}^{m+n}$ is orthogonal to $\Md\GG$ at every point of $\Md\GG$. Therefore, we can write
\begin{equation}
  \label{grincrgeom}
  \Gd E=\int_{\dOm}\{(\BP^{*}\Bn,\BP\Bn)\cdot\CP_{(\Md\GG)^{\perp}}(\Bx)(\Gd\Bx,\Gd\By)\}dS,
\end{equation}
where $\CP_{(\Md\GG)^{\perp}}(\Bx)$ denotes the orthogonal projection from $\bb{R}^{m+n}$ onto $(T_{(\Bx,\By(\Bx))}\Md\GG)^{\perp}$.

\subsection{Case study: dynamics}
 In elastodynamics the deformation is time dependent $\By=\By(t,\Bx)$ and the functional $E[\By]$ is Lagrange's action integral
\begin{equation}
  \label{dyn}  A[\By]=\int_{0}^{T}\left[\int_{\GO(t)}\left\{\frac{|\dot{\By}|^{2}}{2}-U(\Grad\By)\right\}d\Bx\right]dt,
\end{equation}
where $\GO(t)$---an arbitrary moving volume.  In this section we assume that the deformation $\By$ satisfies all assumptions of Theorem~\ref{th:genoether}.

Using our general framework we can   rewrite (\ref{dyn}) as a space-time integral
\begin{equation}
  \label{dyn4}  A[\By]=\int_{\GO^{n+1}}\left\{\frac{|\dot{\By}|^{2}}{2}-U(\Grad\By)\right\}d\Bq,
\end{equation}
where $q^{0}=t$, $q^{\Ga}=x^{\Ga}$, $\Ga=1,\ldots,n$ and
\[
\GO^{n+1}=\{\Bq=(t,\Bx):t\in[0,T],\ \Bx\in\GO(t)\}.
\]
The dynamic extremals are solutions of 
\begin{equation}
  \label{Newtlaw}
\mathfrak{E}_{L}(\Bq)=-  ^{n+1}\Div\BCP=0, 
\end{equation}
\cite{trusk87}, where the dynamic Piola stress   $\BCP\in\bb{R}^{m\times (n+1)}$ is given by 
\begin{equation}
  \label{DynP}
  \BCP=[\Bv,-\BP], 
\end{equation}
where 
 $\Bv=\dot{\By},$ 
and, as before,
 $\BP=U_{\BF}(\Grad\By).$ 
In  \eqref{Newtlaw} we also introduced the notation
\[
^{n+1}\Grad=\left(\dif{}{t},\Grad \right)=\left(\dif{}{t},\dif{}{x^{1}},\ldots, \dif{}{x^{n}}\right).
\]
Equation (\ref{Newtlaw}) can be of course rewritten as the classical Newton's equations of motion (balance of linear momentum)
\begin{equation}
  \label{Newton}
    \dif{\Bv}{t}=\Div \BP. 
\end{equation}
At those points $\Bq\in\bb{R}^{n+1}$, at which $\By$ is of class $C^{2}$, identity (\ref{Noether0}) applies and we also have
\begin{equation}
  \label{DynEsh}
  \mathfrak{E}^{*}_{L}(\Bq)=-^{n+1}\Div\BCP^{*}=0, 
\end{equation}
where the dynamic Eshelby tensor is 
\begin{equation}
  \label{DynEshTens}
  \BCP^{*}=\mat{-e}{\BP^{T}\Bv}{-\BF^{T}\Bv}{\hf|\Bv|^{2}\BI_{n}-\BP^{*}},
\end{equation}
where
\[e=\hf|\dot{\By}|^{2}+U(\Grad\By)\]
is the total energy density and, again, as before, 
  $\BP^{*} =U(\Grad\By)\BI_{n}-\BF^{T}\BP.$

Finally, observe  that in (\ref{DynEsh}) we refer to the space-time Lagrangian $$L(\BCF)=|\Bv|^{2}/2-U(\BF),$$ where the analog of deformation gradient is $$\BCF=[\Bv,\BF].$$

The system of $n+1$ equation (\ref{DynEsh}) can be represented  by a sub system of $n$ equations
\begin{equation}
  \label{config}
  \dif{(\BF^{T}\Bv)}{t}-\hf\Grad(|\Bv|^{2})+\Div\BP^{*}=0,
\end{equation}
which represent balance of linear momentum and can also be  written as
\begin{equation}
  \label{config0}
\BF^{T}\dif{\Bv}{t}+\Div\BP^{*}=0,
\end{equation}
plus an additional equation
\begin{equation}
  \label{energy}
  -\dif{e}{t}+\Div(\BP^{T}\Bv)=0, 
\end{equation}
representing  balance of energy. Note that (\ref{config}), as well as    its analog   \eqref{config0}, can be viewed as a configurational version of balance of linear momentum (\ref{Newton}).

However, as is well-known, in dynamics the smoothness of $\By(\Bq)$ can be violated even for smooth initial data due, for instance, to the formation of shock waves \cite{dafHCL}.
  In this case equation (\ref{DynEsh}) is no longer valid, and the function $\mathfrak{E}^{*}_{L}$ becomes  a distribution   supported on a singular set of $^{n+1}\Grad\By(\Bq)$. When the singular set is a jump discontinuity surface $\GS$, we obtain \cite{trusk87}, according to (\ref{EWstar}) that 
\begin{equation}
  \label{dynpst}
  \mathfrak{E}^{*}_{L}(\Bq)=-{\mathcal P}^{*}_{\GS}\BN^{\rm sh}\Gd_{\GS}(\Bq), 
\end{equation}
where
\begin{equation}
  \label{dynpstar}
{\mathcal P}^{*}_{\GS}=\jump{L}-\av{\BCP_{\pm},\jump{\BCF}}
\end{equation}
and $\BN^{\rm sh}$ denotes unit normal to an $n$-dimensional   surface $\GS\subset\bb{R}^{n+1}$. The choice of sign in (\ref{dynpstar}) is immaterial in view of (\ref{jumpP}) that in our dynamics context takes  the form
\begin{equation}
  \label{dynjumpP}
  \jump{\BCP}\BN^{\rm sh}=0.
\end{equation}
Equations (\ref{dynjumpP}) are known as the Rankine-Hugoniot conditions.

It is instructive to rewrite  (\ref{dynpstar})  in terms of the potential $U$ and block-components $\Bv$ and $\BF$ of $\BCF$. Suppose that the surface $\GS\subset\bb{R}^{n+1}$ represents a moving discontinuity surface  $S(t)\subset\bb{R}^{n}$ so that
\[
\GS=\{(t,S(t))\subset\bb{R}^{n+1}:t\in[0,T]\}.
\]
Let $V^{\rm sh}(t,\Bx)$ denote the normal velocity of $S(t)$ and $\Bn(\Bx,t)$ the unit normal to $S(t)$. Then
the vector
\[
\Hat{\BN}^{\rm sh}(\Bq)=[-V^{\rm sh}(t,\Bx),\Bn^{\rm sh}(t,\Bx)],
\]
is normal to $\GS$. Therefore, the Rankine-Hugoniot conditions (\ref{dynjumpP}) can be written as
\begin{equation}
  \label{RH}
  \jump{\Bv}V^{\rm sh}+\jump{\BP}\Bn^{\rm sh}=0. 
\end{equation}
Observe that the Hadamard relations (\ref{Hadamard}) in dynamical case take the form
\begin{equation}
  \label{Had}
  \jump{\Bv}=-V^{\rm sh}\Ba,\quad\jump{\BF}=\Ba\otimes\Bn^{\rm sh}, 
\end{equation}
where $\Ba(t,\Bx): S(t)\to\bb{R}^{m}$ is a smooth vector field. Eliminating $\Ba$ from (\ref{Had})
we obtain
\begin{equation}
  \label{dynHad}
  V^{\rm sh}\jump{\BF}=-\jump{\Bv}\otimes\Bn^{\rm sh}.
\end{equation}
Using (\ref{RH}) and (\ref{dynHad}) we can rewrite (\ref{dynpstar}) as
\begin{equation}
  \label{dynpstfin}
  {\mathcal P}^{*}_{\GS}=\av{\lump{\BP},\jump{\BF}}-\jump{U},
\end{equation}
where
\[
\lump{\BP}=\hf(\BP_{+}+\BP_{-}).
\]
Indeed, expanding the \rhs\ of (\ref{dynpstar}) we have
\[
{\mathcal P}^{*}_{\GS}=\lump{\Bv}\cdot\jump{\Bv}-\jump{U}-\Bv_{\pm}\cdot\jump{\Bv}+\av{\BP_{\pm},\jump{\BF}}=\av{\BP_{\pm},\jump{\BF}}-\jump{U}\mp\hf|\jump{\Bv}|^{2}.
\]
Taking a dot product of (\ref{Had})$_{1}$ with $\jump{\Bv}$ we obtain
\[
|\jump{\Bv}|^{2}=-V^{\rm sh}\jump{\Bv}\cdot\Ba.
\]
Now using (\ref{RH}) we obtain
\[
|\jump{\Bv}|^{2}=\jump{\BP}\Bn^{\rm sh}\cdot\Ba=\av{\jump{\BP},\jump{\BF}},
\]
where we have used (\ref{Had})$_{2}$ in the last equality above. Now,
observing that
\[
\BP_{\pm}\mp\hf\jump{\BP}=\lump{\BP},
\]
we obtain (\ref{dynpstfin}).
 Comparing (\ref{dynpstfin}) and (\ref{pstar}) we introduce a ``spatial'' version $p^{*}_{S(t)}$ of the space-time quantity ${\mathcal P}^{*}_{\GS}$:
  \begin{equation}
    \label{spatialp*}
    p^{*}_{S(t)}=-{\mathcal P}^{*}_{\GS}=\jump{U}-\av{\lump{\BP},\jump{\BF}}.
  \end{equation}

Elastodynamics provides   an interesting  case of application of  Theorem \ref{th:genoether}  because  one can take advantage of the specific quadratic dependence of the Lagrangian on
$\Bv=\dot{\By}$. Therefore,   it is natural to consider the
family of deformations  
\begin{equation}
  \label{dyntdef}
  \Tld{\Bx}=\Bx,\quad\Tld{\By}=\By,\quad\Tld{t}=e^{\Ge}t. 
\end{equation}
So we have   $\Gd\By=0$, $\Gd\Bq=t\Be_{0}$ and 
the \rhs\ of   (\ref{incremS}) becomes
\[
\int_{\dOm^{n+1}}t[-e,\BP^{T}\Bv]\cdot\BN
d\CS-\int_{\GS}t{\mathcal P}^{*}_{\GS}\BN^{\rm sh}\cdot\Be_{0}d\CS
\]
where we used the fact that  $(\BCP^{*})^{T}\Be_{0}=[-e,\BP^{T}\Bv]$. We also observe that on
$\dOm^{n+1}$ we have
\begin{equation}
  \label{NdS}
  \BN d\CS=[-V_{n},\Bn]dSdt, 
\end{equation}
where $\BN(\Bq)$ is the outward unit normal to $\dOm^{n+1}$, $\Bn(t,\Bx)$ is the
outward unit normal to $\dOm(t)$, and $V_{n}(t,\Bx)$ is the normal
velocity of $\dOm(t)$. Similarly,
\[
\BN^{\rm sh} d\CS=[-V^{\rm sh},\Bn^{\rm sh}]dS dt,
\]
Hence, the \rhs\ of (\ref{incremS}) can be written as
\[
\int_{0}^{T}\int_{\dOm(t)}t(V_{n}e+\BP^{T}\Bv\cdot\Bn)dSdt-\left.t\int_{\GO(t)}ed\Bx\right|_{0}^{T}
+\int_{0}^{T}\int_{S(t)}t V^{\rm sh}{\mathcal P}^{*}_{\GS}dS(t)dt.
\]
Let us now compute the \lhs\ in (\ref{incremS}). Given that 
\[
  \Tld{\By}_{\Ge}(\Tld{t},\Bx)=\By(e^{-\Ge}\Tld{t},\Bx),
\]
we can write 
\[
A[\Tld{\By}_{\Ge}]=\int_{0}^{e^{\Ge}T}\int_{\GO(t)}\left\{\frac{e^{-2\Ge}}{2}
|\dot{\By}(e^{-\Ge}\Tld{t},\Bx)|^{2}-U(\Grad\By)\right\}d\Bx d\Tld{t}.
\]
Changing variables $\Tld{t}=e^{\Ge}t$ we obtain
\[
A[\Tld{\By}_{\Ge}]=\int_{0}^{T}\int_{\GO(t)}\left\{\frac{e^{-\Ge}}{2}
|\dot{\By}(t,\Bx)|^{2}-e^{\Ge}U(\Grad\By)\right\}d\Bx dt.
\]
Hence, the \lhs\ in (\ref{incremS}) is
\[
\Gd A=\left.\frac{dA[\Tld{\By}_{\Ge}]}{d\Ge}\right|_{\Ge=0}=-\int_{0}^{T}\int_{\GO(t)}ed\Bx dt.
\]
We conclude that for the family of deformations (\ref{dyntdef}) the equation 
 (\ref{incremS}) can be rewritten in  the form
\[
  \int_{0}^{T}\int_{\GO(t)}ed\Bx dt=\left.t\int_{\GO(t)}ed\Bx\right|_{0}^{T}-
  \int_{0}^{T}\int_{\dOm(t)}t\{eV_{n}+\BP\Bn\cdot\Bv\}dSdt
  -\int_{0}^{T}\int_{S(t)}V^{\rm sh}t{\mathcal P}^{*}_{\GS}dSdt.
\]
After differentiation in $T$ and other simplifications we obtain 
\begin{equation}
  \label{kinetic}
\frac{d}{dt}\int_{\GO(t)}ed\Bx=\int_{\dOm(t)}\BP\Bn\cdot\Bv\,dS+\int_{\dOm(t)}eV_{n}dS+
\int_{S(t)}V^{\rm sh}{\mathcal P}^{*}_{\GS}dS.
\end{equation}
This is nothing else but the energy balance relation, showing that the   change
of the total energy contained in a moving volume takes place because  of the work 
performed by external surface tractions,  the energy brought   
inside the body due to addition/removal of mass through the external boundary, and finally the energy produced or absorbed  due to the mass exchange on the moving  strain discontinuity.

Note that,  due to thermodynamic reasons, on  physically admissible discontinuities in passive media the energy can only dissipate and therefore   
we must necessarily have 
\begin{equation}
  \label{dissipation1}
  V^{\rm sh}{\mathcal P}^{*}_{\GS}\le 0.
  \end{equation}
 The meaning of the inequality  \eqref{dissipation1} is that  extremals in the case of dynamics cannot be expected to be
  stationary  and the the  condition $^{n+1}\Div\BCP^{*}=0$ does not
  have to hold.  More specifically, it means  that
  $^{n+1}\Div\BCP^{*}$  can be  represented by  a distribution
  supported on a singular surface of the solution and subjected only to an
  inequality constraint (\ref{dissipation1}).  The conventional form of this constraint known from the theory of shock waves  \cite{trusk87} can be obtained if we recall that by 
convention the direction of $\Bn^{\rm sh}$ is always chosen to point
in the direction of the advance of the discontinuity, $V^{\rm sh}\ge 0$. Therefore, on
a physically admissible strain discontinuity we must always have ${\mathcal P}^{*}_{\GS}\le 0$, or
\begin{equation}
  \label{dissipation}
  p^{*}_{S(t)}=\jump{U}-\av{\lump{\BP},\jump{\BF}}\ge 0,
\end{equation}
where  $p^{*}_{S(t)}$ is defined in (\ref{spatialp*}), and the ``$\pm$'' convention in (\ref{dissipation}) is such that the
``$-$'' side is behind the shock wave and the ``$+$'' side is in front.

\section{Partial invariance} 
\setcounter{equation}{0}
\label{sec:parinv}
Throughout this section we assume that both $W$ and the extremal are of class $C^{2}$.
The main significance of the developed formalism  lies in its usefulness for converting special scaling and invariance properties of variational problems  into useful  \emph{formulas} satisfied by stationary configurations or almost regular extremals (as in Theorem  \ref{th:genoether}(b)). The idea is to use particular  deformations $\BGF_{\Ge}$ of the $(\Bx,\By)$ space under which, due to the  special   properties of the Lagrangian $W(\Bx,\By,\BF)$, the functional $E[T\GG]$ in   \eqref{firstvar2} transforms \emph{in a known way}.

In this section  we collected  several  examples illustrating  the  idea that if $E[T\GG_{\Ge}]$ can be simplified or tranformed by means of specific properties of the Lagrangian, then $\Gd E$, computed as (\ref{firstvar}) will express this ``partial symmetry'', while equations (\ref{increm}) or (\ref{incremS}) will represent the corresponding ``partial conservation laws''. In all these cases we show how    such partial   invariance  of a problem  can be used to obtain  useful  information about the solutions.

\subsection{Parametric problems}
\label{sec:param}
An important set of examples concerns  parametric variational integrals
\begin{equation}
  \label{Egraph1}
E[\GS]=\int_{D}W(\Bz(\Bt),\Grad\Bz(\Bt))d\Bt.
\end{equation}
The goal is to find an optimal  $n$-dimensional surface $\GS=\{\Bz(\Bt):\Bt\in D\}\subset\bb{R}^{m}$, where $D\subset\bb{R}^{n}$ is open, bounded and smooth.  The solution is not supposed to depend  on a particular choice of parametrization.

\begin{definition}
  \label{def:param}
  Suppose that $\Bt(\Bs)$ is any diffeomorphism between the
  closures of the open, bounded and smooth domains $\Tld{D}$ and $D$ and
  $\Tld{\Bz}(\Bs)=\Bz(\Bt(\Bs))$. The Lagrangian $W$ is called
  \emph{parametric} if
\begin{equation}
  \label{rep}
\int_{D}W(\Bz(\Bt),\Grad\Bz(\Bt))d\Bt= \int_{\Tld{D}}W(\Tld{\Bz}(\Bs),\Grad\Tld{\Bz}(\Bs))d\Bs.
\end{equation}
for any smooth function $\Bz(\Bt)$.
\end{definition}
The notion of the parametric Lagrangian can be reformulated as a
a variational symmetry.
Let $\BGth:D\to\bb{R}^{n}$, be an arbitrary smooth function, and let
$\Bs=\Bt+\Ge\BGth(\Bt)$. Then $\Tld{\Bz}(\Bs)$ is a reparametrization of $\GS$ for all sufficiently small in absolute value $\Ge$, if
\[
\Bz(\Bt)=\Tld{\Bz}(\Bs)=\Tld{\Bz}(\Bt+\Ge\BGth(\Bt)).
\]
By definition of the parametric variartional integral equation (\ref{rep}) must hold. But then, formula (\ref{firstvar1}) applied to the transformation
\[
\BGF_{\Ge}(\Bt,\Bz)=(\Bt+\Ge\BGth(\Bt),\Bz),\quad \Gd\Bt=\BGth,\quad \Gd\Bz=0,
\]
yields taking into account that $W$ is independent of $\Bt$,
\[
  \int_{D}\av{\BP^{*},\Grad\BGth}d\Bt=0.
\]
Since, both the domain $D$ and the vector field $\BGth$ were
arbitrary, we conclude that
\begin{equation}
  \label{Pst0}
  \BP^{*}(\Bz,\BF)=0 
\end{equation}
identically. Equation  \eqref{Pst0} is then a universal relation to be
satisfied for any solution of a parametric problem.

It is now easy to see what it takes for the problem to be parametric. If we   fix $\Bz\in\bb{R}^{m}$ and $\BF\in\bb{R}^{m\times n}$, and differentiate the function
\[
GL(n)\ni\BA\mapsto\Phi(\BA)=\frac{W(\Bz,\BF\BA)}{\det\BA}
\]
we obtain
\[
  \Grad\Phi(\BA)=\frac{\BF^{T}W_{\BF}(\Bz,\BF\BA)-W(\Bz,\BF\BA)\BA^{-T}}{\det\BA}=
  \frac{\BA^{-T}}{\det\BA}\BP^{*}(\Bz,\BF\BA)=0.
\]
We can then conclude that if $W(\Bz,\BF)$ is a parametric Lagrangian, then it must have the property
\begin{equation}
  \label{vechom}
  W(\Bz,\BF\BA)=W(\Bz,\BF)\det\BA
\end{equation}
for any $\BA\in GL^{+}(n)$. Conversely, if (\ref{vechom}) holds for
all $\BA\in GL^{+}(n)$, then, differentiating (\ref{vechom}) in $\BA$
at $\BA=\BI_{n}$, we obtain $\BP^{*}(\Bz,\BF)=0$. Therefore,  the
algebraic property (\ref{vechom}) is equivalent to (\ref{Pst0}) and
gives a complete characterization of parametric Lagrangians. Note that a well known  example of parametric invariance in the context of nonlinear elasticity is the   invariance with respect to ``material remodeling", see for instance \cite{epel07}.

 To show that parametric problems are  ubiquitous,  consider again the transformations $
\BGF_{\Ge}(\Bx,\By)=(\BX(\Bx,\By,\Ge)$, $\BY(\Bx,\By,\Ge)) 
$ from (\ref{trgrp})
which treat the $(\Bx,\By)$-space $\mathfrak{X}$ as a single entity. We recall that from this standpoint 
 the field $\By(\Bx)$ is determined by its graph
$\GG\subset\bb{R}^{n}\times\bb{R}^{m}$. The crucial observation is that  the \emph{non-parametric} variational integral
(\ref{non-param}) can be written as a \emph{parametric} integral by making an
arbitrary change of variables 
\begin{equation}
  \label{graphu1}
  \Bx=\BGn(\Bt),
  \end{equation}
where $\BGn:\bra{D}\to\bra{\GO}$ is a
diffeomorphism, and $D$ is an arbitrary domain, such that $\bra{D}$ is
diffeomorphic to $\bra{\GO}$.

Indeed, since all    transformations  \eqref{graphu1}  do not change the graph of $\By$, i.e.
\begin{equation}
  \label{graphu}
  \GG=\{(\Bx,\By(\Bx)):\Bx\in\GO\}=\{(\BGn(\Bt),\By(\BGn(\Bt))):\Bt\in D\},
\end{equation}
it is natural to introduce the notation $$\Bz(\Bt)=[\BGn(\Bt);\By(\BGn(\Bt))],$$
where $[a;b]$ denotes the
column-vector $\vect{a}{b}$.  We can now write 
\[
\Grad\Bz(\Bt)=[\Grad\BGn;\Grad\By\Grad\BGn]=[(\Grad\Bz)_{1};(\Grad\Bz)_{2}],
\]
and therefore,
\[
E[\By]=\int_{D}W(\Bz(\Bt),(\Grad\Bz)_{2}(\Grad\Bz)_{1}^{-1})\det((\Grad\Bz)_{1})d\Bt=:\CE[\Bz;D].
\]
Thus, the original non-parametric Lagrangian $W(\Bx,\By,\BF)$ has been replaced by the
 parametric Lagrangian operating in the extended space 
\[
\CW(\Bz,\BCF)=W(\Bz,\BF_{2}\BF_{1}^{-1})\det\BF_{1},
\]
where $\Bz\in\bb{R}^{n+m}$, $\BCF=[\BF_{1};\BF_{2}]\in\bb{R}^{(m+n)\times n}$, and
$\BF_{1}$ is the upper $n\times n$ submatrix of $\BCF$. Note that even if the
original Lagrangian was smooth on
$\bb{R}^{n}\times\bb{R}^{m}\times\bb{R}^{m\times n}$, the new Lagrangian is
smooth only on a subset $\CO$ of $\bb{R}^{n+m}\times\bb{R}^{(m+n)\times n}$,
on which $\BF_{1}$ is invertible. 

By direct differentiation we obtain
\begin{equation}
  \label{Phat12}
\BCP(\Bz,\BCF)=\CW_{\BCF}=\left[\begin{array}{c}
\BP^{*}(\Bz,\BF_{2}\BF_{1}^{-1})\cof\BF_{1}\\[1ex]
\BP(\Bz,\BF_{2}\BF_{1}^{-1})\cof\BF_{1}
\end{array} \right].
\end{equation}
Note that such  a higher dimensional  Piola stress $\BCP$ mixes the classical low dimensional Piola and Eshelby stresses. We also have, according to    \eqref{Pst0}  , that
\[
\BCP^{*}(\Bz,\BCF)=0
\]
where
$$\BCP^{*}(\Bz,\BCF)=\CW (\Bz,\BCF)\BI_{n+m}-\BCF^{T}\BCP(\Bz,\BCF). $$
We also see that in view of the parametric nature of the problem 
\begin{equation}
  \label{Qhom}
  \CW(\Bz,\BCF\BQ)=\CW(\Bz,\BCF)\det\BQ
\end{equation}
for any invertible $n\times n$ matrix $\BQ$.

Finally, we  show that
for every stationary configuration $\By(\Bx)$ of the original functional
(\ref{non-param}) and any smooth diffeomorphism $\BGn:D\to\GO$ the function
$\Bz(\Bt)=(\BGn(\Bt),\By(\BGn(\Bt)))$ is an equilibrium configuration of
$\CE[\Bz;D]$. Intuitively, the statement is apparent, since any weak outer
variation of $\Bz=(\Bx,\By)$ is a combination of an inner and outer variations
for $\By(\Bx)$ and it can also be verified by a direct calculation.
Indeed, the extended Euler-Lagrange equation
$$\Grad_{\Bt}\cdot\CW_{\BCF}(\Bz,\BCF)=\CW_{\Bz}$$ can be written in terms of
$\By(\Bx)$ as a system
\begin{equation}
  \label{ELt1}
  \begin{cases}
  \Grad_{\Bt}\cdot\left[\BP^{*}(\BGn(\Bt),\By(\BGn(\Bt)),\Grad_{\Bx}\By(\BGn(\Bt)))
\cof\Grad_{\Bt}\BGn(\Bt)\right]=W_{\Bx}\det\Grad_{\Bt}\BGn(\Bt),\\
\Grad_{\Bt}\cdot(\BP(\BGn(\Bt),\By(\BGn(\Bt)),\Grad_{\Bx}\By(\BGn(\Bt)))
\cof\Grad_{\Bt}\BGn(\Bt))=W_{\By}\det\Grad_{\Bt}\BGn(\Bt).
\end{cases}
\end{equation}
It is then clear  that a smooth change of variables $\Bx=\BGn(\Bt)$ converts  the 
 system  (\ref{ELt1}) into an equivalent  already familiar system $$\mathfrak{E}^{*}(W)=0,\,\,\ \mathfrak{E}(W)=0.$$

\subsection{Invariant integrals}

Suppose that the energy density
$W$ is homogeneous, i.e., does not depend explicitly on $\Bx$. Therefore, we apply the transformation (\ref{trgrp}),
given by
\begin{equation}
  \label{translation}
  \BGF_{\Ge}(\Bx,\By)=(\Bx+\Ge\Ba,\By).
\end{equation}
 Then  according to (\ref{Noether0}), we can write 
\[
-\Div\BP^{*}=\mathfrak{E}^{*}_{W}(\Bx)=-\Grad\By^{T}\mathfrak{E}_{W}(\Bx)=0,
\]
and  for any piecewise smooth subdomain $V\subset\GO$, we must have
\begin{equation}
  \label{Jreason}
  \int_{\Md V}\BP^{*}\Bn\,dS=0.
\end{equation}
To show  that such  statement can be highly  nontrivial, it sufficient to refer  to the J-integral in fracture mechancis which  in two space dimensions $\Bn
dl=(dx_{2},-dx_{1})$, is defined as a vector with two components
\[
J_{1}=\int_{C}\left\{Wdx_{2}-\Bt\cdot\dif{\By}{x_{1}}dl\right\},\qquad
J_{2}=-\int_{C}\left\{Wdx_{1}+\Bt\cdot\dif{\By}{x_{2}}dl\right\}.
\]
Here we use the notation $\Bt=\BP\Bn$ for traction forces
 and $C$ is any  path in $\GO$. It is then clear that the general definition of  J-integral is  
\begin{equation}
  \label{Jreason1}
\BJ=\int_{\Md U}\BP^{*}\Bn\,dS,
\end{equation}
where the subdomain $U\subset\GO_{0}$. Due to (\ref{Jreason}), the value of the integral $\BJ$ does not change if we
deform the set $U$ smoothly. If the set $U$ contains a connected component of the
singular set $S$ in its interior, the value of  $\BJ$ does not have to
vanish and it is invariant as long as  the boundary of the deformed set
does not intersects the singular set $S$. More formally, one can say that   the J-integral \eqref{Jreason1} is
either loop homotopy invariant or fixed end-point homotopy invariant in
$\GO=\GO_{0}\setminus S$. The most  important example of a singular set  $S$ of this type is provided by the  mechanical fracture or crack  set
$S\subset\GO_{0}$, usually represented by one or more smooth surfaces of displacement discontinuity  with   elastic
deformation $\By$  remaining  smooth (at least $C^{2}$) on
$\GO=\bra{\GO}_{0}\setminus S$.  Note that the homotopy
invariance of the J-integral is shown here under more general assumptions
than it is usually done in fracture mechanics  \cite{rice68,buri73,chsh77,delph82,cher89}
in that we permit the energy density to depend on both
$\By$ and $\BF$.

Another example of this type comes naturally if we replace translational invariance by rotational invariance. In   mechanics the corresponding  homotopy-invariant quantity, called the L-integral in \cite{buri73},  was introduced  in the context of geometrically linear elasticity that does not
make a distinction between a reference and a deformed configuration.  To give a variational interpretation  of the now classical   L-integral, we will have to define isotropy in a way that is not quite
natural from the point of view of finite elasticity.

\begin{definition}
  We say that the energy density $W(\Bx,\By,\BF)$ is isotropic if 
  \begin{equation}
    \label{gliniso}
    W(\BR\Bx,\BR\By,\BR\BF\BR^{T})=W(\Bx,\By,\BF),\quad\forall\BR\in SO(3).
  \end{equation}
\end{definition}
It is easy to check that the transformation group (\ref{trgrp}), given by
\[
\BGF_{\Ge}(\Bx,\By)=(e^{\Ge\star\BGo}\Bx,e^{\Ge\star\BGo}\By),
\]
is a variational symmetry. Here the $3\times 3$ antisymmetric matrix
$\star\BGo$ is a special case of the Hodge-star operator\footnote{The
  Hodge-star operator maps $p$-forms $\Go$ on an $n$-dimensional
  Euclidean vector space $V$ into twisted $n-p$ forms $\star\Go$,
  which requires a choice of an orientation on $V$. It is defined
  by to the rule $\Go[v_{1},\ldots,v_{p}]=\star\Go[v_{p+1},\ldots,v_{n}]$
  for any positively oriented orthonormal basis $[v_{1},\ldots,v_{n}]$ of $V$. Here we use
  the convention $\Go[v_{1},\ldots,v_{p}]=\Go_{i_{1}\ldots i_{p}}
  v_{1}^{i_{1}}\cdots v_{p}^{i_{p}}$, where the summation over
  repeated indices is assumed. The object $\star\Go$ is a twisted $n-p$
  form because it changes sign, when the orientation of $V$ changes,
  while ordinary forms do not, see \cite[Section~4.5]{boss98}.}
defined in our specific case of $\bb{R}^{3}$ by
$(\star\BGo)\Bv=\Bv\times\BGo$. 
It is clear that the \lhs\ in
(\ref{increm}) is zero and that $\Gd\Bx=\Bx\times\BGo$ and 
$\Gd\By=\By\times\BGo$. Taking into account that $\BGo\in\bb{R}^{3}$ is arbitrary,  we obtain 
\begin{equation}
  \label{Lreason}
  \int_{\Md V}\{\BP\Bn\times\By+\BP^{*}\Bn\times\Bx\}dS=0,
\end{equation}
 where $V\subset\GO$ is any piecewise smooth subdomain. Equation  \eqref{Lreason}  suggests that 
\[
\BL=\int_{\GS}\{\BP\Bn\times\By+\BP^{*}\Bn\times\Bx\}dS,
\]
where $\GS$ is any closed surface in $\GO$ is a homotopy-invariant integral. The nonzero values of $\BL$ come
from noncontractible closed surfaces that are usually the boundaries of smooth
domains in $\GO_{0}$ containg one or more components of the crack set $S$.
We point out again that the homotopy
invariance of the L-integral is proved here under more general assumptions
than in \cite{buri73}, in that we permit the energy density to depend (smoothly) on $\Bx$,
$\By$, and $\BF$, as long as the isotropy property (\ref{gliniso}) holds.

Yet another example of an invariant integral in linear elasticity emerges if we note that the corresponding energy density is a quadratic function of
the displacement gradient. The corresponding scaling invariance  property can be also formulated as a
classical variational symmetry. Suppose for simplicity that  $W=W(\BF)$ is
independent of either  $\Bx$  or  $\By$. We will also assume, more generally, that
\begin{equation}
  \label{scaling}
  W(\Gl\BF)=\Gl^{p}W(\BF),\quad\forall\Gl>0,\ \forall\BF\in\bb{R}^{m\times n},
\end{equation}
for some $p\not=0$. It is easy to find a scaling law that
$W(\Bx,\By,\BF)$ would have to satisfy to possess a variational
symmetry:
\[W(\Gl^{-\frac{n}{p}}\Bx,\Gl^{1-\frac{n}{p}}\By,\Gl\BF)=\Gl^{p}W(\Bx,\By,\BF).\]
It is straightforward  to verify that the corresponding transformation group
\[
\BGF_{\Ge}(\Bx,\By)=(e^{\Ge}\Bx,e^{\frac{p-n}{p}\Ge}\By)
\]
is a variational symmetry. Then applying (\ref{increm}) with $\Gd\Bx=\Bx$,
$\Gd\By=\frac{p-n}{p}\By$ we obtain that 
\begin{equation}
  \label{Mreason}
  \int_{\Md V}\left\{\BP^{*}\Bn\cdot\Bx+\frac{p-n}{p}\BP\Bn\cdot\By\right\}dS=0.
\end{equation}
In the special case $p=2$, $n=2$, \eqref{Mreason} reduces to the  classical  M-integral which we write in the form
\[
M_{2D}=\int_{\GS}\BP^{*}\Bn\cdot\Bx\,dS,
\]
If  $p=2$ and $n=3$, we obtain a more complex expression 
\[
M_{3D}=\int_{\GS}\left\{\BP^{*}\Bn\cdot\Bx-\hf\BP\Bn\cdot\By\right\}dS,
\]
which can be viewed as a generalization of the relation   first obtained  in \cite{buri73}. For more information about path independent integrals in elastostatics and elastodynamics  and their relation to Noether theorems, see \cite{flet76,herr82,delph82,haol98,mark06,luma07,ngkp24}.

\subsection{Case study: Pohozaev-type identities}

In this section we show how to use Noether's formula (\ref{increm}) to 
generalize the so called Pohozaev's identity which is often used in the proof of various uniqueness results for nonlinear PDEs \cite{poho65,wagn02}.
Consider a \bvp\ for a nonlinear PDE of the form
\begin{equation}
  \label{nlpde}
  \begin{cases}
    \Div W_{\BF}(\Grad\By(\Bx))+\Phi_{\By}(\By(\Bx))=0,&\Bx\in\GO,\\
    \By(\Bx)=0,&\Bx\in\dOm,
  \end{cases}
\end{equation}
where the potentials $\Phi$ and $W$ are of class $C^{1}$, $\Phi(0)=0$,
$W(\BF)$ is positively homogeneous of degree $p\ge1$, and
$W(\Bu\otimes\Bv)\ge 0$ for all $\Bu\in\bb{R}^{n}$ and $\Bv\in\bb{R}^{m}$.
We also assume that $\GO\subset\bb{R}^{n}$ is star-shaped with
respect to the origin\footnote{We are free to choose the origin arbitrarily because
  the PDE in (\ref{nlpde}) does not depend on $\Bx$ explicitly.}. The
key observation is that any solution $\By\in
C^{2}(\bra{\GO};\bb{R}^{m})$ must be a stationary extremal of the functional
\begin{equation}
  \label{PDEvar}
  E[\By]=\int_{\GO}\{W(\Grad\By)-\Phi(\By)\}d\Bx.
\end{equation}
To obtain the generalized Pohozaev's identities we apply
Noether's formula (\ref{increm}) to the functional (\ref{PDEvar})
for the family of transformations
\begin{equation}
  \label{Pohdef}
  \Tld{\Bx}=e^{\Ge}\Bx,\quad \Tld{\By}=e^{A\Ge}\By,
\end{equation}
where $A\in\bb{R}$ is arbitrary. In this case, $\Gd\Bx=\Bx$,
$\Gd\By=A\By$, and the Noether's formula (\ref{increm}) becomes
\begin{equation}
  \label{genPoho0}
  \Gd E=\int_{\dOm}\BP^{*}\Bn\cdot\Bx\,dS, 
\end{equation}
since $\Gd\By(\Bx)=A\By(\Bx)=0$ on $\dOm$, in view of the \bc s. Here 
\[
\BP^{*}(\Bx)=W(\Grad\By)\BI_{n}-(\Grad\By)^{T}W_{\BF}(\Grad\By),\quad\Bx\in\dOm,
\]
since $\Phi(0)=0$. We also compute
\[
\Gd E=
\int_{\GO}\{n(W(\Grad\By)-\Phi(\By))+(A-1)\av{\Grad\By,W_{\BF}(\Grad\By)}-A\Phi_{\By}(\By)\cdot\By\}d\Bx.
\]
Since, $A$ is arbitrary, Noether's formula (\ref{genPoho0}) implies
the two generalized Pohozaev's identities for $C^{2}$ solutions of the
\bvp\ (\ref{nlpde}):
\begin{equation}
  \label{genPoho1}
  \int_{\GO}\{\av{\Grad\By,W_{\BF}(\Grad\By)}-\Phi_{\By}(\By)\cdot\By\}d\Bx=0
\end{equation}
and
\begin{multline}
  \label{genPoho2}
  \int_{\GO}\{n(W(\Grad\By)-\Phi(\By))-\av{W_{\BF}(\Grad\By),\Grad\By}\}d\Bx=\\
  \int_{\dOm}\{(W(\Ba\otimes\Bn)-\av{W_{\BF}(\Ba\otimes\Bn),\Ba\otimes\Bn})(\Bn\cdot\Bx)\}dS,
\end{multline}
where we have used the \bc s in (\ref{nlpde}), and denoted
$\Ba(\Bx)=\Md\By/\Md\Bn\in C^{1}(\dOm;\bb{R}^{m})$. The positive
$p$-homogeneity of $W(\BF)$ allows us to simplify Pohozaev's identities:
\begin{equation}
  \label{Pohoid1}
  \int_{\GO}\{pW(\Grad\By)-\Phi_{\By}(\By)\cdot\By\}d\Bx=0,
\end{equation}
\begin{equation}
  \label{Pohoid2}
  \int_{\GO}\{(n-p)W(\Grad\By)-n\Phi(\By)\}d\Bx=(1-p)\int_{\dOm}W(\Ba\otimes\Bn)(\Bn\cdot\Bx)dS.
\end{equation}
We can use (\ref{Pohoid1}) to eliminate $\int_{\GO}W(\Grad\By)$ from
(\ref{Pohoid2}):
\[
\int_{\GO}\{(n-p) \Phi_{\By}(\By)\cdot\By-np\Phi(\By)\}d\Bx=p(1-p)\int_{\dOm}W(\Ba\otimes\Bn)(\Bn\cdot\Bx)dS.
\]
By assumption, $W(\Ba(\Bx)\otimes\Bn(\Bx))\ge 0$ for all $\Bx\in\dOm$,
while on a star-shaped domain we have $\Bn\cdot\Bx\ge 0$ for all
$\Bx\in\dOm$. Therefore,
\begin{equation}
  \label{signdE}
\int_{\GO}\{np\Phi(\By(\Bx))-(n-p)\Phi_{\By}(\By(\Bx))\cdot\By(\Bx)\}d\Bx\ge 0.
\end{equation}
If the potential $\Phi(\By)$ is such that the integrand in
(\ref{signdE}) is negative, except at $\By=0$, then $\By=0$ is the
only smooth solution of the \bvp\ (\ref{nlpde}). Hence, we have proved
the following theorem.
\begin{theorem}
  \label{th:genPoho}
  Suppose that $W\in C^{1}(\bb{R}^{m\times n})$ is a positively
  $p$-homogeneous function, $p\ge 1$, that is nonnegative on the rank-one cone.
  Let $\Phi\in C^{1}(\bb{R}^{m})$ be such that $\Phi(0)=0$ and
\begin{equation}
  \label{signdef}
  np\Phi(\By)\le(n-p)\Phi_{\By}(\By)\cdot\By
\end{equation}
for all $\By\in\bb{R}^{m}$. Then $\By(\Bx)=0$ is the only
$C^{2}$ solution of the \bvp\ (\ref{nlpde}), provided $\{0\}$ is the connected component of
$
  Z=\{\By\in\bb{R}^{m}: np\Phi(\By)=(n-p)\Phi_{\By}(\By)\cdot\By\}
$
that contains $\{0\}$.
\end{theorem}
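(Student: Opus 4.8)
The plan is to close the argument from the point already reached above, where the two $p$-homogeneous Pohozaev identities (\ref{Pohoid1}) and (\ref{Pohoid2}) have been established for any $C^{2}$ solution $\By$ of (\ref{nlpde}), viewed as a stationary extremal of the functional (\ref{PDEvar}) to which Noether's formula (\ref{increm}) applies under the scaling family (\ref{Pohdef}). First I would record the three sign inputs that convert the eliminated identity into the inequality (\ref{signdE}): rank-one nonnegativity gives $W(\Ba\otimes\Bn)\ge 0$ on $\dOm$, star-shapedness of $\GO$ with respect to the origin gives $\Bn\cdot\Bx\ge 0$ on $\dOm$, and the assumption $p\ge 1$ gives $p(1-p)\le 0$. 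Taken together these make the boundary integral $p(1-p)\int_{\dOm}W(\Ba\otimes\Bn)(\Bn\cdot\Bx)\,dS$ nonpositive, which is exactly the content of (\ref{signdE}).

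The heart of the matter is then a collision between a pointwise and an integral sign. Hypothesis (\ref{signdef}) asserts that the integrand $np\Phi(\By)-(n-p)\Phi_{\By}(\By)\cdot\By$ is nonpositive at every value $\By\in\bb{R}^{m}$, so evaluated along the solution it is a nonpositive continuous function of $\Bx\in\bra{\GO}$; yet by (\ref{signdE}) its integral over $\GO$ is nonnegative. Hence the integral vanishes, and a nonpositive continuous integrand with zero integral must vanish identically. Equivalently, $\By(\Bx)\in Z$ for every $\Bx\in\bra{\GO}$, where $Z$ is the set defined in the statement.

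The concluding step is topological. Because $\GO$ is star-shaped it is connected, so the image $\By(\bra{\GO})$ is a connected subset of $\bb{R}^{m}$; by the boundary condition in (\ref{nlpde}) it contains the origin, and by the previous step it is contained in $Z$. Since a connected subset of a topological space is contained in exactly one of its connected components, $\By(\bra{\GO})$ must lie inside the connected component of $Z$ through the origin, which by hypothesis is $\{\Bzr\}$. Therefore $\By\equiv\Bzr$, as claimed.

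I expect the only delicate point to be this last step, and in particular the direction in which the connected-component hypothesis is used: it is the connectivity of the \emph{image} $\By(\bra{\GO})$, together with its inclusion in $Z$ and the fact that it meets $\{\Bzr\}$, that forces the image into the trivial component—the component structure of $Z$ elsewhere is irrelevant. This is precisely what allows the theorem to cover potentials for which the integrand in (\ref{signdef}) may vanish on sets strictly larger than $\{\Bzr\}$, as long as those sets are separated from the origin within $Z$. Everything upstream is the bookkeeping of the scaling variation (\ref{Pohdef}) already carried out, so no further integration-by-parts or regularity subtleties intervene.
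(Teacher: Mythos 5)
Your argument is correct and follows the paper's own route: the same three sign inputs (rank-one nonnegativity of $W$, star-shapedness giving $\Bn\cdot\Bx\ge 0$, and $p(1-p)\le 0$) turn the eliminated Pohozaev identity into (\ref{signdE}), the pointwise hypothesis (\ref{signdef}) then forces the continuous nonpositive integrand to vanish along the solution, and the connected-component condition on $Z$ finishes the job. You in fact make explicit the final topological step (the image $\By(\bra{\GO})$ is connected, contains the origin, and lies in $Z$, hence sits in the trivial component) which the paper leaves implicit after establishing (\ref{signdE}).
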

We note that if $\Phi(\By)$ is positively $k$-homogeneous, and $\Phi(\By)>0$ for all $\By\not=0$, then $\Phi(\By)$ satisfies (\ref{signdef}) whenever
\[
\nth{n}+\nth{k}\le\nth{p}.
\]
Indeed,
\[
np\Phi(\By)-(n-p)\Phi_{\By}(\By)\cdot\By=\Phi(\By)(np-(n-p)k)\le 0.
\]
Therefore, the  conclusion of Theorem~\ref{th:genPoho} applies when
\[
\nth{n}+\nth{k}<\nth{p}.
\]
Note that in the critical case of equality in (\ref{signdef}),
$Z=\bb{R}^{m}$, and we must have
\[(1-p)W\left(\dif{\By}{\Bn}\otimes\Bn\right)(\Bn\cdot\Bx)=0,\quad\Bx\in\dOm.\]
If $p>1$, the domain
is strictly star-shaped and $W$ is positive on all the rank-one matrices, then
we also have $\Md\By/\Md\Bn=0$ on $\dOm$. In may cases, as in the
original Pohozaev's PDE $$\GD u+a^{2}u^{\frac{n+2}{n-2}}=0,$$ this is
sufficient to assert that $\By=0$ is the only solution.

\section{Generalized Clapeyron theorem}
\setcounter{equation}{0}
\label{sec:clapeyron}
In the previous section we presented several  examples  of
variational problems with  partial invariance and  the   correspondence
between partial symmetries and partial conservation laws is exemplified by the classical 
Clapeyron's Theorem of linear elasticity.  Here we show that it  is but one member of a broader class of formulas expressing the stored elastic energy of the body in stable equilibrium in terms of physical and configurational tractions on its boundary. In what follows  we show how different integral relations of this kind  emerge as we vary our  assumptions regarding the partial symmetries of the Lagrangian. 
We refer to a particularly  important formula from this class  associated with scale invarince of an associated variational problem as the Generalized Clapeyron Theorem (GCT).

Assume that $\By(\Bx)$ satisfies all assumptions of Theorem~\ref{th:genoether},
and consider a special class of scaling transformations of the form 
\begin{equation}
  \label{scalingxy}
  \Tld{\Bx}=e^{\Ge}\Bx,\quad \Tld{\By}=e^{\Ge}\By. 
\end{equation}
In this case  $\Gd\Bx=\Bx$, $\Gd\By=\By$, and $\Tld{\By}(\Tld{\Bx})=e^{\Ge}\By(\Bx)=e^{\Ge}\By(e^{-\Ge}\Tld{\Bx})$. We compute
\begin{equation}
  \label{deltaE}
    \Gd  E=\left.\frac{d}{d\Ge}
    \int_{\GO}W(e^{\Ge}\Bx,e^{\Ge}\By(\Bx),\Grad\By(\Bx))e^{n\Ge}d\Bx\right|_{\Ge=0}=
  n E[\By]+\int_{\GO}\{W_{\Bx}\cdot\Bx+W_{\By}\cdot\By(\Bx)\}d\Bx. 
\end{equation}
By our assumptions,  $\mathfrak{E}_{W}=0$ and $
  \mathfrak{E}^{*}_{W}=-p^{*}_{\GS}\Bn\Gd_{\GS}, 
$. Therefore, the  generalized Noether's formula \eqref{incremS} gives the following representation of the elastic energy
\begin{equation}
  \label{genClap}
  nE[\By]=-\int_{\GO}\{W_{\Bx}\cdot\Bx+W_{\By}\cdot\By(\Bx)\}d\Bx
  +\int_{\dOm}\{\BP\Bn\cdot\By+\BP^{*}\Bn\cdot\Bx\}dS-\int_{\GS}p^{*}_{\GS}\Bn\cdot\Bx\,dS,
\end{equation}
where  $-W_{\By}$ has the meaning of externally applied bulk forces, while $-W_{\Bx}$ describes external configurational bulk forces.  Equation  \eqref{genClap} represents the most  general Clapeyron-type  relation which we use below to derive our main results.

\subsection{Scale-free theories}

Suppose further that the simultaneous change of scale $\Bx\mapsto\Gl\Bx$,
$\By\mapsto\Gl\By$, with $\Gl>0$,   does not affect
the energy density.
In other words, we assume that there  exists a partial symmetry  of the form 
\begin{equation}
  \label{scalefree}
  W(\Gl\Bx,\Gl\By,\BF)=W(\Bx,\By,\BF),\quad\forall\Gl>0.
\end{equation}
If $W$ is continuous on $\bb{R}^{n}\times\bb{R}^{m}\times\bb{R}^{m\times n}$, property (\ref{scalefree})
simply means that $W=W(\BF)$ and does not depend on $\Bx$ and $\By$
explicitly. If $W$ is not required to be continuous at $\Bx=0$ or
$\By=0$, then the class of functions satisfying (\ref{scalefree}) is
much broader. In this case assumption (\ref{scalefree}) implies that
\[
W_{\Bx}(\Bx,\By,\BF)\cdot\Bx+W_{\By}(\Bx,\By,\BF)\cdot\By=0
\]
identically. Hence, we have proved the following theorem

\begin{theorem}
  \label{th:clapeyron}
  Suppose that $\GS\subset\GO$ is a smooth surface of jump
  discontinuity of a Lipschitz extremal $\By\in
  C^{2}(\bra{\GO}\setminus\GS)$ of the functional
  (\ref{non-param}). If the energy density function satisfies
  (\ref{scalefree}), then
 \begin{equation}
  \label{clapeyron}
  nE[\By]=\int_{\dOm}\{\BP\Bn\cdot\By+\BP^{*}\Bn\cdot\Bx\}dS -\int_{\GS}p^{*}_{\GS}\Bn\cdot\Bx\,dS.
\end{equation}
If $\By$ is also stationary, i.e. $p^{*}_{\GS}=0$, then 
\begin{equation}
  \label{Clapeyron}
E[\By]=\nth{n}\int_{\dOm}\{\BP\Bn\cdot\By+\BP^{*}\Bn\cdot\Bx\}dS(\Bx).
\end{equation}
\end{theorem}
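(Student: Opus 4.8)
The plan is to read the result off directly from the master identity (\ref{genClap}), whose derivation by means of the isotropic scaling $\Tld{\Bx}=e^{\Ge}\Bx$, $\Tld{\By}=e^{\Ge}\By$ together with the generalized Noether formula of Theorem~\ref{th:genoether} has already been carried out. Since the hypotheses here are exactly those under which (\ref{genClap}) was obtained (a piecewise $C^{2}$ extremal whose only singularity is the smooth jump surface $\GS$), the only work left is to show that the scale-free assumption (\ref{scalefree}) forces the bulk integral $\int_{\GO}\{W_{\Bx}\cdot\Bx+W_{\By}\cdot\By\}d\Bx$ appearing on the right of (\ref{genClap}) to vanish.

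First I would differentiate the invariance relation (\ref{scalefree}) in $\Gl$ and set $\Gl=1$. Because $W$ is of class $C^{1}$ on the range of $(\Bx,\By(\Bx),\Grad\By(\Bx))$, the chain rule yields the pointwise Euler-type identity $W_{\Bx}(\Bx,\By,\BF)\cdot\Bx+W_{\By}(\Bx,\By,\BF)\cdot\By=0$ for every admissible triple with $\Bx\neq\Bzr$ and $\By\neq\Bzr$. Evaluating along the extremal at $(\Bx,\By(\Bx),\Grad\By(\Bx))$ shows that the integrand of the bulk term is identically zero, so that integral drops out of (\ref{genClap}), leaving precisely (\ref{clapeyron}).

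For the second assertion I would invoke stationarity. By (\ref{pst0}) a stationary extremal satisfies $p^{*}_{\GS}=0$ on the surface of jump discontinuity, so the $\GS$-integral in (\ref{clapeyron}) disappears; dividing by $n$ then produces (\ref{Clapeyron}).

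The single point that requires care is the legitimacy of differentiating (\ref{scalefree}) near the (possibly singular) rays $\Bx=\Bzr$ or $\By=\Bzr$, where $W$ need not be continuous once the broader class of scale-free densities is admitted. This is harmless: the pointwise identity holds off a set of measure zero, the integrand $W_{\Bx}\cdot\Bx+W_{\By}\cdot\By$ is bounded on $\GO$ under the Lipschitz and piecewise $C^{2}$ assumptions, and hence the bulk integral vanishes irrespective of the behaviour of $W$ on that null set.
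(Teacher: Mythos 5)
Your proposal is correct and follows essentially the same route as the paper: the authors also obtain (\ref{genClap}) from the generalized Noether formula (\ref{incremS}) applied to the scaling $\Tld{\Bx}=e^{\Ge}\Bx$, $\Tld{\By}=e^{\Ge}\By$, kill the bulk term by differentiating (\ref{scalefree}) in $\Gl$ at $\Gl=1$ to get $W_{\Bx}\cdot\Bx+W_{\By}\cdot\By=0$, and then use $p^{*}_{\GS}=0$ from (\ref{pst0}) for the stationary case. Your added remark about the possible discontinuity of $W$ along $\Bx=\Bzr$ or $\By=\Bzr$ is a reasonable extra precaution that the paper leaves implicit.
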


Note that (\ref{Clapeyron}) is a generalization  to not necessarily smooth extremals and arbitrary values of $m$ and $n$ of the  known results in elasticity theory ($m=n=3$) first obtained in \cite[Eq.~(16)]{eshe70} and independently in \cite[formula~(2.10)]{green73} (see also \cite[Eq.~(2.5)]{knst84} and \cite[Eq.~(38)]{hill86}). 
 
We now turn to the interpretation of the obtained result. Observe   first that  due to the
  presence of the configurational stress $\BP^{*}$ and the prefactor $1/n$,  the relation (\ref{Clapeyron})
  cannot be regarded as a simple  ``energy = work'' relation, even
  though the first term seems to be very similar to the work of tractions
  $\BP\Bn$ over displacements $\Bu=\By-\Bx$ on the boundary of the
  domain, since we can rewrite (\ref{Clapeyron}) as
\[
E[\By]=\nth{n}\int_{\dOm}\{\BP\Bn\cdot\Bu+\BP_{\rm disp}^{*}\Bn\cdot\Bx\}dS(\Bx),
\]
where $$\BP_{\rm disp}^{*}=W\BI_{n}-(\Grad\Bu)^{T}\BP.$$

 Note also that (\ref{Clapeyron}) is clearly fundamentally different
 from the classical Clapeyron's theorem (\ref{Clap}), where the additive term  on the \rhs\  with the Eshelby
  (Hamilton, energy-momentum) tensor   $\BP^{*}$  is absent and even in 3D space instead of $1/n$ the multiple  in front of the surface integral  is $1/2$. As we are going to see in what follows the coefficient 1/2 emerges   when  we
  impose an  additional assumption that  the  energy is a quadratic
  function of deformation gradient. The origin of the   Eshelby stress in (\ref{Clapeyron}) is
more interesting.  It reflects a subtle  fact that an elastic body may
have a nonzero energy even if it is not loaded either internally by ``physical" 
bulk forces or externally by ``physical"  surface  tractions.  

We recall that in the previous
  section we have already discussed    a situation of this type in our Example~\ref{ex1}. Using Theorem~\ref{th:clapeyron} we can generalize it to shed more light on what the loading by configurational tractions $\BP^{*}\Bn$ is all about.
   Suppose that there exists a ``transformation strain'' $\BGve_{0}(\Hat{\Bx})$, such that the energy density $W(\Hat{\Bx},\BGve)$ satisfies, for any $\BGve\in\Sym(\bb{R}^{n})$ and any unit vector $\Hat{\Bx}$, an inequality
  \begin{equation}
    \label{Wlb}
    W(\Hat{\Bx},\BGve)\ge c|\BGve-\BGve_{0}(\Hat{\Bx})|^{p},
  \end{equation}
where $p>0$ and $c>0$. Note that the energy density (\ref{Wdisloc}) with (\ref{tracfree3}) in our Example~\ref{ex1} is of this type.

Now, if there exists a Lipschitz stationary equilibrium $\Bu_{0}$ in $\GO$ satisfying both $\BP\Bn=0$ and $\BP^{*}\Bn=0$ on $\dOm$ in the sense of traces, then, according to  (\ref{Clapeyron}) we must have
\[
n\int_{\GO}W(\Hat{\Bx},e(\Bu_{0}))d\Bx=0.
\]
But then
\[
\int_{\GO}|e(\Bu_{0})-\BGve_{0}(\Hat{\Bx})|^{p}d\Bx\le\nth{c}\int_{\GO}W(\Hat{\Bx},e(\Bu_{0}))d\Bx=0.
\]
It would then mean that $\BGve_{0}(\Hat{\Bx})=e(\Bu_{0})$ is a compatible strain. 

To make clear the connection between the incompatibility of $\BGve_{0}(\Hat{\Bx})$ and the configurational tractions on the boundary we now  rephrase the problem. Suppose that $\BGve_{0}(\Hat{\Bx})$ is an incompatible  strain field, and suppose that $\Bu_{0}$ is the minimizer in
\begin{equation}
  \label{m}
 \mathcal M=\min_{\Bu\in H^{1}(\GO;\bb{R}^{n})}\int_{\GO}W(\Hat{\Bx},e(\Bu))d\Bx.
\end{equation}
Observe that due to (\ref{Wlb}), we can regard $ \mathcal M >0$ as a measure of incompatibility of $\BGve_{0}(\Hat{\Bx})$. At the same time, the minimizer $\Bu_{0}$ in (\ref{m}) must be a stationary extremal, satisfying $\BP\Bn=0$. But then, Theorem~\ref{th:clapeyron} gives
\[
 \mathcal M=\nth{n}\int_{\dOm}\BP^{*}\Bn\cdot\Bx\,dS,
\]
which ultimately links the ``applied'' configurational tractions to the  incompatibility of the ``transformation strain'' $\BGve_{0}(\Hat{\Bx})$.

  Since in  our Example~\ref{ex1} we could  obtain the solution of the  energy minimization problem explicitly, we could also directly  
  compute  the stored elastic energy  which turned out to be
  different from zero despite the absence of ``physical'' loading. Here we independently 
 compute the same energy by applying  
Theorem~\ref{th:clapeyron}; it  is obviously 
  applicable because the energy denisity (\ref{Wdisloc}) with
  (\ref{tracfree3}) satisfies (\ref{scalefree}). We obtain
  \begin{equation}
  \label{Clapeyron1}
  E[\Bu]=\nth{n}\int_{\dOm}\BP^{*}\Bn\cdot\Bx\,dS=
  \nth{n}\int_{\dOm}W(\Bx,\By,\BF)\Bn\cdot\Bx\,dS.
\end{equation}
where we used the fact that in our example ``physical''  tractions were absent and therefore 
$
  \BP^{*}\Bn=W(\Bx,\By,\BF)\Bn.
 $ Furthermore, in our special case
formula (\ref{Clapeyron1}) gives
\[
  E[\Bu]=\frac{R}{n}\int_{\Md B(0,R)}W(\Bx,\Grad\By)dS=
  \frac{R|\Md B(0,R)|}{2n}\left((\eta'(R)-a)^{2}+(n-1)\frac{\eta(R)^{2}}{R^{2}}\right).
\]
Using the explicit solution
\[
\eta(r)=\frac{a}{n}\left(r+(n-1)r\ln\left(\frac{r}{R}\right)\right),
\]
we can then obtain the expression for the energy 
\[
 E[\Bu]=\frac{a^{2}(n-1)}{2n^{2}}|B(0,R)|,
\]
which expectedly  agrees with the rescaled energy (\ref{Epos1}).

We have therefore illustrated the fact  that even if a body  is  unloaded externally
(in view of the absence of  physical forces) it can be still loaded
externally  by ``applied" configurational forces. Such loading would then reveal itself 
through a residual prestress generated by the  acquired
incompatibility.
We reiterate that  such an incompatibility should be viewed as being effectively embedded into the
body during the process of its  creation, which can be in turn  viewed as an outcome of the the work  of
configurational forces   $ \BP^{*}\Bn $ applied on the boundary of the
body.  Specifically, in our  Example~\ref{ex1} a  nonzero configurational loading was represented by the configurational tractions 
\[
\BP^{*}\Bn=\frac{a^{2}(n-1)}{2n^{2}}\Bn\not=0.
\]
We reiterate  that the classical Clapeyron's Theorem does not account for such ``work of creation'' since it  cannot accommodate any transformation strain as it  breaks the homogeneity of degree 2, required for classical Clapeyron to hold. Instead, our GCT remains suitable because of the scale-invariant nature of the transformation strain in our Example~\ref{ex1}.

We finish this section with a remark that will prove  useful in what follows. 
\begin{remark}
   \label{rem:shift}
   When $W$ does not depend explicitly on $\Bx$ and $\By$, the
   stationary extremals satisfy
\[
\int_{\dOm}\BP\Bn\,dS(\Bx)=\Bzr,\qquad\int_{\dOm}\BP^{*}\Bn\,dS(\Bx)=\Bzr.
\]
These relations allow one to re-write the Clapeyron
Theorem~\ref{th:clapeyron} as
\begin{equation}
  \label{purple}
  \int_{\GO}W(\Grad\By)d\Bx=
\nth{n}\int_{\Md\GO}\{(\BP^{*}\Bn,\Bx-\Ba)+(\BP\Bn,\By(\Bx)-\Bb)\}dS(\Bx),
\end{equation}
where $\Ba\in\bb{R}^{n}$ and $\Bb\in\bb{R}^{m}$ are arbitrary constant
vectors.
\end{remark}

\subsection{Problems with constrants}
We now show that the  relation  \eqref{Clapeyron} can be easily generalized to  cover  the case of scale-free energy with scale-free pointwise constraints. A typical example is incompressible elasticity, where every deformation $\By(\Bx)$ must satisfy the incompressibility constraint
\begin{equation}
  \label{incstr}
\det\Grad\By(\Bx)=1
\end{equation}
for all $\Bx\in\GO$. Specifically, one can prove the following theorem:
\begin{theorem}
  \label{th:constr}
  Suppose that the deformation $\Grad\By(\Bx)$ satisfies $i=1,2,3 ...k$ pointwise constraints
  \begin{equation}
    \label{constr}
      C^{i}(\Bx,\By(\Bx),\Grad\By(\Bx))=0, 
  \end{equation}
 where both  functions $W(\Bx,\By,\BF)$ and $C^{i}(\Bx,\By,\BF)$ are  scale-free in the sense of (\ref{scalefree}). Then, if $\By(\Bx)$ is a $C^{1}$ energy minimizer that satisfies the constraints, the following relation holds
  \begin{equation}
  \label{constrClap}
  \int_{\GO}nW(\Bx,\By,\Grad\By)d\Bx= \int_{\dOm}\{(\BP_{W}-\GL_{i}(\Bx)\BP_{C^{i}})\Bn\cdot\By
  +(\BP^{*}_{W}-\GL_{i}(\Bx)\BP^{*}_{C^{i}})\Bn\cdot\Bx\}dS,
\end{equation}
where $\GL_{i}(\Bx)$ are the Lagrange multipliers corresponding to the constraints (\ref{constr}).
\end{theorem}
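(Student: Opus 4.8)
The plan is to reduce the constrained problem to an unconstrained one by Lagrange multipliers and then apply the general Clapeyron relation (\ref{genClap}) to the augmented Lagrangian. Since $\By$ is a $C^{1}$ minimizer subject to the pointwise constraints (\ref{constr}), the Lagrange multiplier principle supplies scalar functions $\GL_{i}(\Bx)$ such that $\By$ is a stationary extremal of the augmented functional $\Tld{E}[\By]=\int_{\GO}\Tld{W}\,d\Bx$ with density
\[
\Tld{W}(\Bx,\By,\BF)=W(\Bx,\By,\BF)-\GL_{i}(\Bx)C^{i}(\Bx,\By,\BF).
\]
The crucial point is that $\Tld{W}$ is in general \emph{not} scale-free in the sense of (\ref{scalefree}), because the multipliers $\GL_{i}(\Bx)$ carry their own $\Bx$-dependence; hence I cannot invoke Theorem~\ref{th:clapeyron} directly and must instead apply the more primitive identity (\ref{genClap}) to $\Tld{E}$ under the scaling family (\ref{scalingxy}), for which $\Gd\Bx=\Bx$ and $\Gd\By=\By$.

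First I would evaluate the bulk term $\int_{\GO}\{\Tld{W}_{\Bx}\cdot\Bx+\Tld{W}_{\By}\cdot\By\}d\Bx$ appearing in (\ref{genClap}) and show it vanishes along the constrained extremal. Differentiating (\ref{scalefree}) for $W$ and for each $C^{i}$ in $\Gl$ at $\Gl=1$ gives the pointwise identities $W_{\Bx}\cdot\Bx+W_{\By}\cdot\By=0$ and $C^{i}_{\Bx}\cdot\Bx+C^{i}_{\By}\cdot\By=0$. Since
\[
\Tld{W}_{\Bx}=W_{\Bx}-\GL_{i}C^{i}_{\Bx}-(\Grad\GL_{i})C^{i},\qquad
\Tld{W}_{\By}=W_{\By}-\GL_{i}C^{i}_{\By},
\]
and the terms carrying $\Grad\GL_{i}$ are multiplied by $C^{i}$, they drop out on the constraint set $C^{i}=0$; what remains is precisely the combination of the two scale-free identities, so $\Tld{W}_{\Bx}\cdot\Bx+\Tld{W}_{\By}\cdot\By=0$ identically along $\By$.

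The remaining step is bookkeeping of the boundary stresses. By (\ref{Esh-tensor1}) one has $\BP_{\Tld{W}}=\Tld{W}_{\BF}=\BP_{W}-\GL_{i}\BP_{C^{i}}$, where $\BP_{W}=W_{\BF}$ and $\BP_{C^{i}}=C^{i}_{\BF}$. Using $C^{i}=0$ on the extremal, definition (\ref{Esh-tensor}) gives $\Tld{W}=W$ and $\BP^{*}_{C^{i}}=-\BF^{T}\BP_{C^{i}}$, whence $\BP^{*}_{\Tld{W}}=\BP^{*}_{W}+\GL_{i}\BF^{T}\BP_{C^{i}}=\BP^{*}_{W}-\GL_{i}\BP^{*}_{C^{i}}$. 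Substituting these expressions together with the vanishing bulk term into (\ref{genClap}), and noting that $\Tld{E}[\By]=E[\By]$ because $C^{i}=0$, yields exactly (\ref{constrClap}); the singular-surface contribution is absent since $\By\in C^{1}$ admits no gradient jumps.

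The main obstacle I anticipate is not the algebra above but the justification of the Lagrange multiplier step at the assumed regularity: one must guarantee the existence of the multiplier functions $\GL_{i}(\Bx)$ and enough regularity (e.g.\ a nondegeneracy/constraint-qualification condition on the matrices $C^{i}_{\BF}$) for $\By$ to be a genuine stationary extremal of $\Tld{E}$, so that $\mathfrak{E}_{\Tld{W}}=0$ and, via the Noether identity of Theorem~\ref{th:NI}, also $\mathfrak{E}^{*}_{\Tld{W}}=0$ in the sense required to apply (\ref{genClap}). For the model case of incompressibility the constraint $C(\BF)=\det\BF-1$ is trivially scale-free and the hydrostatic pressure plays the role of $\GL$, so the hypotheses are transparently satisfied.
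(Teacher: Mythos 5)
Your proposal is correct and follows essentially the same route as the paper: both form the augmented Lagrangian $W-\GL_{i}C^{i}$, apply the scaling $(\Bx,\By)\mapsto(e^{\Ge}\Bx,e^{\Ge}\By)$ while leaving the multipliers untouched, and use the scale-freeness of $W$ and $C^{i}$ together with $C^{i}=0$ along the extremal to reduce everything to $\Gd E=nE$ plus the stated boundary terms. The only cosmetic difference is bookkeeping: the paper treats $\BGL$ as extra dependent variables in an extended field $\hat{\By}=[\By;\BGL]$ and computes the transformed functional globally as $e^{n\Ge}E$, whereas you keep $\GL_{i}(\Bx)$ as coefficients in the density and verify pointwise that the bulk non-invariance term in (\ref{genClap}) vanishes — the two computations are the integrated and differentiated forms of the same identity, and your explicit flagging of the constraint-qualification issue matches the paper's (equally unproved) ``regular case'' assumption.
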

\begin{proof}
From the general theory of constrained variational problems we know that the $C^{1}$ energy minimizers 
can be obtained as stationary extremals of the augmented Lagrangian
\[
L(\Bx,\hat{\By},\hat{\BF})=W(\Bx,\By,\BF)-\BGL(\Bx)\cdot\BC(\Bx,\By,\BF),
\]
Here the functions $\BGL(\Bx)$ are Lagrange multipliers, $\hat{\By}=[\By;\BGL]\in\bb{R}^{m+k}$,  $\Hat{\BF}=[\BF;\BG]\in\bb{R}^{(m+k)\times n}$ and $\BG=\Grad\BGL$. 
Note first that the augmented Lagrangian is affine in $\BGL$ and does not depend explicitly on $\BG$. 

In the regular case, when $\By(\Bx)$ is the energy minimizer satisfying the constraints, there are smooth Lagrange multipliers $\GL_{i}(\Bx)$, such that $\hat{\By}=[\By;\BGL]$   solves the Euler-Lagrange equations $\mathfrak{E}_{L}(\Bq)=0$ for the augmented Lagrangian, and hence also the stationarity equations $\mathfrak{E}^{*}_{L}(\Bq)=0$.

Note next that by assumption both the Lagrangian and the constraints are scale-free.  Therefore we can again consider the action of the following relevant deformation
\[
\BX=e^{\Ge}\Bx,\quad\Hat{\BY}=[e^{\Ge}\By;\GL].
\]
The deformed functional
\[
E_{\Ge}[\Hat{\BY}]=\int_{e^{\Ge}\GO}L(\BX,\Hat{\BY}_{\Ge}(\BX),\Grad\Hat{\BY}_{\Ge}(\BX))d\BX
\]
becomes, after the change of  variables $\Tld{\Bx}=e^{\Ge}\Bx$
\[
E_{\Ge}=\Ge^{n\Ge}\int_{\GO}\{W(e^{\Ge}\Bx,e^{\Ge}\By(\Bx),\Grad\By(\Bx))-
  \BGL(\Bx)\cdot\BC(e^{\Ge}\Bx,e^{\Ge}\By(\Bx),\Grad\By(\Bx))\}d\Bx.
\]
and since   both $W$ and $\BC$ satisfy (\ref{scalefree}), we have
\[
E_{\Ge}=\Ge^{n\Ge}\int_{\GO}\{W(\Bx,\By(\Bx),\Grad\By(\Bx))-
  \BGL(\Bx)\cdot\BC(\Bx,\By(\Bx),\Grad\By(\Bx))\}d\Bx=\Ge^{n\Ge}\int_{\GO}Wd\Bx,
\]
where we took into account that $\By(\Bx)$ satisfies the constraints $\BC(\Bx,\By(\Bx),\Grad\By(\Bx))=0$.
Therefore,
\[
\Gd E=n\int_{\GO}W(\Bx,\By(\Bx),\Grad\By(\Bx))d\Bx,
\]
and Noether identity (\ref{increm}) becomes (\ref{constrClap}),
since
$\Gd\Bx=\Bx$, $\Gd\Hat{\By}=[\By;0]$,
\[
  \BP_{L}=[\BP_{W}-\GL_{i}(\Bx)\BP_{C^{i}};0],\quad
  \BP_{L}^{*}=\BP^{*}_{W}-\GL_{i}(\Bx)\BP^{*}_{C^{i}}.
\]
\end{proof}

In particular, when we deal with  incompressible elasticity and the equilibrium configuration $\By(\Bx)$ has to
satisfy the constraint (\ref{incstr}), the GCT reads
\begin{equation}
  \label{incgrn}
  \int_{\GO}W(\Grad\By(\Bx))d\Bx=\nth{n}\int_{\dOm}\{(\BP-p(\Bx)\cof\BF)\Bn\cdot\By(\Bx)
  +(\BP^{*}+p(\Bx)\BI_{n})\Bn\cdot\Bx\}dS(\Bx).
\end{equation}
Here, following convention,  we have denoted the Lagrange multiplier $\GL(\Bx)$ by $p(\Bx)$ to indicate its physical meaning as pressure, and took into account that
\[
\BP^{*}_{C}=(\det\BF-1)\BI_{n}-\BF^{T}\cof(\BF)=-\BI_{n}.
\]
Note that when $\By(\Bx)$ is an orientation preserving diffeomorphism, we can change variables $\By=\By(\Bx)$ in the first term on the \rhs\ of   \eqref{incgrn} and obtain
\begin{equation}
  \label{mixedgreen}
  \int_{\GO}W(\Grad\By)d\Bx=\nth{n}\int_{\dOm^{*}}(\BGs-p(\By)\BI_{n})\BN\cdot\By\,dS^{*}(\By)
  +\nth{n}\int_{\dOm}(\BP^{*}+p(\Bx)\BI_{n})\Bn\cdot\Bx\,dS(\Bx),
\end{equation}
where $\BGs(\By)$ is the Cauchy stress tensor, $\BN(\By)$ is the outward unit normal to $\dOm^{*}=y(\dOm)$, and $p(\By)=p(\Bx)$, with $\By=\By(\Bx)$.

\subsection{$p$-homogeneous energies}
\label{sub:phom}
Now assume that the Lagrangian $W$ instead of scale invariance has the following property of
$p$-homogeneity
\begin{equation}
  \label{phom}
  W(\Bx,\Gl\By,\Gl\BF)=\Gl^{p}W(\Bx,\By,\BF).
\end{equation}
In this case the deformation 
\begin{equation}
  \label{uscale}
  \Tld{\Bx}=\Bx,\qquad\Tld{\By}=e^{\Ge}\By
\end{equation}
results in $\Tld{\By}(\Tld{\Bx})=e^{\Ge}\By(\Bx)=e^{\Ge}\By(\Tld{\Bx})$, so that
\[
E[\Tld{\By}](\Ge)=\int_{\GO}W(\Tld{\Bx},e^{\Ge}\By(\Tld{\Bx}),e^{\Ge}\Grad\By(\Tld{\Bx}))d\Tld{\Bx}=
e^{p\Ge}E[\By].
\]
We have now  $\Gd\Bx=0$, $\Gd\By=\By$ which  leads   to a new Clapeyron-type relation  satisfied by
equilibrium energy 
\begin{equation}
  \label{semiC}
  \int_{\GO}W(\Bx,\By(\Bx),\Grad\By(\Bx))d\Bx=\nth{p}\int_{\dOm}\BP\Bn\cdot\By\,dS. 
\end{equation}
We stress that the relation  (\ref{semiC}) is an expression of the $p$-homogeneity of the energy
density that manifests itself through a rescaling of the displacements
$\By$  without the concommittant rescaling of the Lagrangian
  coordinates. Therefore  the validity of (\ref{semiC}) does not require that $\By$ be stationary, i.e. solving (\ref{ELT}). Also, since $\Gd\Bx=0$  the relation
(\ref{semiC}) remains  valid even when the Lagrangian $W(\Bx,\By,\BF)$ as a function of $\Bx$ is only
 measurable. We note that  while (\ref{semiC}) can be  interpreted as a direct generalization of the classical Clapeyron's theorem  because the latter is clearly a special   case of the former, we preserve the name GCT for (\ref{Clapeyron}) as it marks a more radical generalization of this classical result linking together the work of physical and configurational forces. 
 
 Thus, below we show that a whole family of  GCT type results emerge when  scale invariance coexist with $p$-homogeneity. In particular we show that  if the energy gets rescaled when we stretch  the Lagrangian
  coordinates, without the concommittant stretching of the displacements,
  \begin{equation}
    \label{otsoh}
    W(\Gl^{-1}\Bx,\By,\Gl\BF)=\Gl^{p}W(\Bx,\By,\BF),
  \end{equation}
then the energy can be expressed as a surface integral involving only the
conjugate momenta $(\BP^{*})^{T}\Bx$.

Indeed, in this case the transformation 
\begin{equation}
  \label{xscale}
  \Tld{\Bx}=e^{\Ge}\Bx,\qquad\Tld{\By}=\By
\end{equation}
results in $\Tld{\By}(\Tld{\Bx})=\By(\Bx)=\By(e^{-\Ge}\Tld{\Bx})$, so that
\begin{multline*}
  E[\Tld{\By}](\Ge)=\int_{e^{\Ge}\GO}W(\Tld{\Bx},\By(e^{-\Ge}\Tld{\Bx}),e^{-\Ge}\Grad\By(e^{-\Ge}\Tld{\Bx}))d\Tld{\Bx}=\\
e^{-p\Ge}\int_{e^{\Ge}\GO}W(e^{-\Ge}\Tld{\Bx},\By(e^{-\Ge}\Tld{\Bx}),\Grad\By(e^{-\Ge}\Tld{\Bx}))d\Tld{\Bx}=
e^{(n-p)\Ge}E[\By].
\end{multline*}
Now we have  $\Gd\Bx=\Bx$, $\Gd\By=0$, and therefore
\begin{equation}
  \label{semiL}
  E[\By]=\int_{\GO}W(\Bx,\By(\Bx),\Grad\By(\Bx))d\Bx=\nth{n-p}\int_{\dOm}\BP^{*}\Bn\cdot\Bx\,dS, 
\end{equation}
 if $n\not=p$, and
\begin{equation}
  \label{n=p}
  \int_{\dOm}\BP^{*}\Bn\cdot\Bx\,dS=0,
\end{equation}
  if $n=p$.

More generally, it is clear that in the case when $W(\Bx,\By,\BF)$ is both scale-free and $p$-homogeneous, in
the sense of (\ref{scalefree}) and (\ref{phom}), respectively, then (\ref{otsoh}) holds as a consequence:
\[
  W(\Gl^{-1}\Bx,\By,\Gl\BF)=W(\Gl^{-1}\Bx,\Gl\Gl^{-1}\By,\Gl\BF)=
  \Gl^{p}W(\Gl^{-1}\Bx,\Gl^{-1}\By,\BF)=\Gl^{p}W(\Bx,\By,\BF).
\]
Thus both formulas (\ref{semiC}) and (\ref{semiL}) (or (\ref{n=p}), if $n=p$) hold.
Combining (\ref{semiC}) and (\ref{semiL}), when $n\not=p$, we also obtain the relation between the two terms on the \rhs\ of (\ref{Clapeyron}):
\begin{equation}
  \label{PPst}
  \nth{p}\int_{\dOm}\BP\Bn\cdot\By\,dS=\nth{n-p}\int_{\dOm}\BP^{*}\Bn\cdot\Bx\,dS.
\end{equation}
Note that in view of \eqref{PPst}  equation (\ref{Clapeyron}) which does not
contain $p$ can be obtained as a linear combination of (\ref{semiC}) and \eqref{semiL}.
We remark that formula (\ref{PPst}) is valid for any subdomain of $\GO$ and can therefore be expressed as an apparently new  concervation law
\begin{equation}
  \label{pi}
 \Div\left(p(\BP^{*})^{T}\Bx-(n-p)\BP^{T}\By\right)=0,
\end{equation}
 corresponding to the  full variational symmetry unifying  scale-invariance  and $p$-homogeneity  and described by the  transformation group $(\Bx,\By)\mapsto(e^{p\Ge}\Bx,e^{(p-n)\Ge}\By)$, $\Ge\in\bb{R}$.

It is instructive to present explicit   examples  of Lagrangians that are both scale-free and $p$-homogeneous.
If $W$ depends on all of its variables, it would have to satisfy
\[
W(\Bx,\By,\BF)=\left(\frac{|\By|}{|\Bx|}\right)^{p}W\left(\hat{\Bx},\hat{\By},\frac{|\Bx|}{|\By|}\BF\right),\quad\hat{\Bx}=\frac{\Bx}{|\Bx|},\ \hat{\By}=\frac{\By}{|\By|}.
\]
If $W$ does not depend explicitly either on $\Bx$ or on $\By$, then
\[
W(\Bx,\By,\BF)=W(\Hat{\Bw},\BF),\qquad W(\Hat{\Bw},\Gl\BF)=\Gl^{p}W(\Hat{\Bw},\BF),
\]
for every $\Gl>0$, where $\Bw$ is either $\Bx$ or $\By$.

Let us now demonstrate that in some sense the $p$-homogeneous Clapeyron-type Theorem
(\ref{semiC}) implies the GCT represented by (\ref{Clapeyron}). This becomes clear if we  realize that at their source, the transformations (\ref{trgrp})
treat the $(\Bx,\By)$-space $\mathfrak{X}$ as a single entity. Therefore, to see the implied relation  we need to go back to the description of the problem in the extended space introduced in Section~\ref{sec:param}.

We recall that in this space  the original Lagrangian $W(\Bx,\By,\BF)$ is  replaced by the
\emph{graph}-Lagrangian 
\[
\CW(\Bz,\BCF)=W(\Bz,\BF_{2}\BF_{1}^{-1})\det\BF_{1},
\]
where $\Bz\in\bb{R}^{n+m}$, $\BCF=[\BF_{1};\BF_{2}]\in\bb{R}^{(m+n)\times n}$, and
$\BF_{1}$ is the upper $n\times n$ submatrix of $\BCF$.

Now, if $\CW$ does not depend explicitly on $\Bz$, i.e., $W=W(\BF)$, then
$\CW(\BCF)$ satisfies (\ref{phom}), with $p=n$. Therefore,

\[
E[\By]=\CE[\Bz,D]=\nth{n}\int_{\Md D}\BCP\Bn\cdot\Bz(\Bt)dS(\Bt).
\]
Using formula (\ref{Phat12}) for $\BCP$ we obtain
\begin{multline*}
E[\By]=\nth{n}\int_{\Md D}\{
\BP^{*}(\Grad_{\Bx}\By(\BGn(\Bt)))\cof(\Grad_{\Bt}\BGn(\Bt))\BGv(\Bt)\cdot\BGn(\Bt)+\\
\BP(\Grad_{\Bx}\By(\BGn(\Bt)))\cof(\Grad_{\Bt}\BGn(\Bt))\BGv(\Bt)\cdot\By(\BGn(\Bt))
\}dS(\Bt),
\end{multline*}
where $\BGv(\Bt)$ is the outward unit normal to $\Md D$. Changing variables in
the above surface integral, using the relation
\[
\cof(\Grad\BGn(\Bt))\BGv(\Bt)dS(\Bt)=\Bn(\Bx)dS(\Bx),
\]
we obtain the scale-invariant GCT represented by (\ref{Clapeyron}).

\subsection{Case study: shock waves}
\label{sub:dyn}
The goal of  this section is to apply Clapeyron's Theorem~\ref{th:clapeyron}
(\ref{Clapeyron}) to dynamics where shock waves (dissipative jump
discontinuities)  emerge generically, and where the
Lagrangian $L(\BCF)$ in the action functional $A[\By]$, given by
(\ref{dyn}) is scale-free because $L$ does not depend explicitly on
$\Bq=(t,\Bx)$ and $\By$. By the Clapeyron Theorem~\ref{th:clapeyron},
  \begin{equation}
  \label{shockClapeyron}
  \int_{\GO^{n+1}}L(\BCF)d\Bq=
  \nth{n+1}\int_{\dOm^{n+1}}\{\BCP\BN\cdot\By+\BCP^{*}\BN\cdot\Bq\}d\CS
  -\int_{\GS}{\mathcal P}^{*}_{\GS}(\Bq)\BN\cdot\Bq d\CS.
\end{equation}

Our goal is to rewrite the ``space-time'' relation \eqref{shockClapeyron} using    the conventional ``space+time'' decomposition. In view of 
 (\ref{DynP}), (\ref{DynEshTens}), and (\ref{NdS}) we can write 
\[
\int_{\dOm^{n+1}}\BCP\BN\cdot\By
d\CS=-\int_{0}^{T}\int_{\dOm(t)}(V_{n}\Bv\cdot\By+\BP\Bn\cdot\By)dS dt
+\left.\int_{\GO(t)}\Bv\cdot\By d\Bx\right|_{0}^{T}.
\]
\[
\int_{\dOm^{n+1}}\BCP^{*}\BN\cdot\Bq d\CS=\int_{0}^{T}\int_{\dOm(t)}\{
t(V_{n}e+\BP\Bn\cdot\Bv)+V_{n}\Bv\cdot\BF\Bx+\BQ_{L}^{*}\Bn\cdot\Bx\}dSdt
-\left.\int_{\GO(t)}\{te+\BF\Bx\cdot\Bv\}d\Bx\right|_{0}^{T}
\]
where
\[
\BQ_{L}^{*}=L\BI_{n}+\BF^{T}\BP=\hf|\Bv|^{2}\BI_{n}-\BP^{*}.
\]
We also have
\[
  \int_{\GS}{\mathcal P}^{*}_{\GS}(\Bq)\BN\cdot\Bq d\CS=
  \int_{0}^{T}\int_{S(t)}(\Bx\cdot\Bn^{\rm sh}-V^{\rm sh}t){\mathcal P}^{*}_{\GS}dSdt.
\]
Combining all these relations  we obtain
\begin{multline}
  \label{dynClap}
  (n+1)\int_{0}^{T}\int_{\GO(t)}Ld\Bx dt=
  \left.\int_{\GO(t)}\left\{\Bv\cdot(\By-\BF\Bx)-te\right\} d\Bx\right|_{0}^{T}+\\
  \int_{0}^{T}\int_{\dOm(t)}\{t(V_{n}e+\BP\Bn\cdot\Bv)
  +V_{n}\Bv\cdot(\BF\Bx-\By)
  -\BP\Bn\cdot\By+\BQ^{*}_{L}\Bn\cdot\Bx\}dSdt\\
  -\int_{0}^{T}\int_{S(t)}(\Bx\cdot\Bn^{\rm sh}-V^{\rm sh}t){\mathcal P}^{*}_{\GS}dSdt.
\end{multline}
Formula (\ref{dynClap}) is valid for any $T>0$. We can therefore differentiate it with respect to $T$:
\begin{multline}
  \label{dynClapdif}
  (n+1)\int_{\GO(t)}Ld\Bx=
  \frac{d}{dt}\int_{\GO(t)}\left\{\Bv\cdot(\By-\BF\Bx)-te \right\} d\Bx\\
  -\int_{\dOm(t)}\{\BP\Bn\cdot(\By-t\Bv)-\BQ^{*}_{L}\Bn\cdot\Bx-tV_{n}e-V_{n}\Bv\cdot(\BF\Bx-\By)+\}dS\\
  -\int_{S(t)}(\Bx\cdot\Bn^{\rm sh}-V^{\rm sh}t){\mathcal P}^{*}_{\GS}dS.
\end{multline}
We now use the previosuly obtained energy balance relation
(\ref{kinetic}) to simplify (\ref{dynClapdif})
\begin{multline}
  \label{predynclap}
  \int_{\GO(t)}\{(n+1)L+e\}d\Bx=
  \frac{d}{dt}\int_{\GO(t)}\Bv\cdot(\By-\BF\Bx)d\Bx\\
  -\int_{\dOm(t)}\{\BP\Bn\cdot\By-\BQ^{*}_{L}\Bn\cdot\Bx-V_{n}\Bv\cdot(\BF\Bx-\By)\}dS(\Bx)\\
  -\int_{S(t)}\Bx\cdot\Bn^{\rm sh}{\mathcal P}^{*}_{\GS}dS(\Bx).
\end{multline}
  
 Using the formula for the derivative of an integral over a moving
volume and taking into account that the integrand has a jump
discontinuity over a moving shock surface $S(t)$,  we compute
\begin{multline}
  \label{inert}
  \frac{d}{dt}\int_{\GO(t)}\Bv\cdot(\By-\BF\Bx)d\Bx=\int_{\dOm(t)}V_{n}\Bv\cdot(\By-\BF\Bx)dS+
\int_{\GO(t)}\left(\dif{}{t}\{\Bv\cdot(\By-\BF\Bx)\}\right)_{\rm reg}d\Bx\\
-\int_{S(t)}V^{\rm sh}\jump{\Bv\cdot(\By-\BF\Bx)}dS.
\end{multline}
We also have
\begin{multline*}
    \left(\dif{}{t}\{\Bv\cdot(\By-\BF\Bx)\}\right)_{\rm reg}=\left(\dif{\Bv}{t}\right)_{\rm reg}\cdot(\By-\BF\Bx)+\Bv\cdot(\Bv-(\Grad\Bv)_{\rm reg}\Bx)=\\
  \left(\dif{\Bv}{t}\right)_{\rm reg} \cdot(\By-\BF\Bx)+|\Bv|^{2}-\hf(\Grad|\Bv|^{2}) _{\rm reg}\cdot\Bx 
\end{multline*}
 Integrating $(\Grad|\Bv|^{2}) _{\rm reg}\cdot\Bx$ by parts, we obtain 
\begin{multline*}
  \int_{\GO(t)}\left(\dif{}{t}\{\Bv\cdot(\By-\BF\Bx)\}\right)_{\rm reg} d\Bx=
  \int_{\GO(t)}\left\{\left(\dif{\Bv}{t}\right)_{\rm reg}\cdot(\By-\BF\Bx)
    +\frac{n+2}{2}|\Bv|^{2}\right\}d\Bx\\
  -\int_{\dOm(t)}\frac{1}{2}|\Bv|^{2}\Bx\cdot\Bn\,dS
  +\hf\int_{S(t)}\jump{|\Bv|^{2}}\Bx\cdot\Bn^{\rm sh}dS.
\end{multline*}
 Substituting this into (\ref{inert}) and back into (\ref{predynclap}),
we obtain the desired dynamical version of the Clapeyron Theorem 
\begin{multline*}
   n\int_{\GO(t)}Ud\Bx=\int_{\GO(t)}\left(\dif{\Bv}{t}\right)_{\rm reg}\cdot(\BF\Bx-\By)d\Bx
  +\int_{\dOm(t)}\{\BP\Bn\cdot\By+\BP^{*}\Bn\cdot\Bx\}dS(\Bx)\\
  +\int_{S(t)}\{V^{\rm sh}\jump{\Bv\cdot(\By-\BF\Bx)}-\jump{\Bv}\cdot\lump{\Bv}(\Bx\cdot\Bn^{\rm sh}) -p^{*}_{S(t)}\Bx\cdot\Bn^{\rm sh}\}dS(\Bx),
\end{multline*}
 where we have passed from the space-time ${\mathcal P}^{*}_{\GS}$ to its spatial version via (\ref{spatialp*}).
Next, using Hadamard's kinematic compatibility relations
  (\ref{Had}) on the shock in space-time we   simplify the \rhs\
  of the relation above, by observing that
\[ V^{\rm sh}\jump{\Bv\cdot(\By-\BF\Bx)}-\jump{\Bv}\cdot\lump{\Bv}(\Bx\cdot\Bn^{\rm sh})=V^{\rm sh}\jump{\Bv}\cdot(\By-\lump{\BF}\Bx),
\]
  and obtain our preliminary version of the dynamic Clapeyron Theorem  
\begin{multline}
  \label{dynClapfin}
   n\int_{\GO(t)}Ud\Bx=\int_{\GO(t)}\left(\dif{\Bv}{t}\right)_{\rm reg}\cdot(\BF\Bx-\By)d\Bx
  +\int_{\dOm(t)}\{\BP\Bn\cdot\By+\BP^{*}\Bn\cdot\Bx\}dS(\Bx)\\
  +\int_{S(t)}\{V^{\rm sh}\jump{\Bv}\cdot(\By-\lump{\BF}\Bx)
  -p^{*}_{S(t)}\Bx\cdot\Bn^{\rm sh}\}dS(\Bx),
\end{multline}

To associate physical meaning with  different terms in the \rhs\ of (\ref{dynClapfin}), we first introduce several notations.  Thus, we observe that 
\[
  \Bb_{i}=-\left(\dif{\Bv}{t}\right)_{\rm reg}
\]
is  the regular component of the inertial (d'Alambert) body force
density.  The Noether identity (\ref{Noether0}) assigns to 
\[
  \Bb_{i}^{*}=\BF^{T}\left(\dif{\Bv}{t}\right)_{\rm reg}=-\BF^{T}\Bb_{i}
\]
the   meaning of the the regular component of the inertial configurational body force
density.

The  distributional part of the   inertial  force density
$ -\Ba= -\Md\Bv/\Md t$ is supported on the shock surface and  can be also   interperted as
inertial traction  on the shock; it takes the form
\[
  \Bt_{i}=V^{\rm sh}\jump{\Bv}.
\]
Similar reasoning suggests that
\[
\Bt_{i}^{*}=-V^{\rm sh}\lump{\BF}^{T}\jump{\Bv}=-\lump{\BF}^{T}\Bt_{i}
\]
can be  identified with the configurational inertial tractions on the
shock.

Using these notations, we can now present  the dynamic GCT in the final form
\begin{equation}
  \label{dynClapin0}
  \int_{\GO(t)}Ud\Bx=\frac{\mathfrak{S}+\mathfrak{I}}{n},
\end{equation}
where we separated  in the \rhs\ of (\ref{dynClapin0}) the ``static'' term $\mathfrak{S}$ from the ``dynamic'', or  inertial term  $\mathfrak{I}$.  The former 
\begin{equation}
  \label{staticClap} \mathfrak{S}=\int_{\dOm(t)}\{\BP\Bn\cdot\By+\BP^{*}\Bn\cdot\Bx\}dS(\Bx)-\int_{S(t)} p^{*}_{S(t)}\Bx\cdot\Bn^{\rm sh}dS(\Bx).
\end{equation}
represents the work of the physical and configurational forces on the
boundary of the domain, minus the energy dissipated on the shock
surface  (see formula (\ref{clapeyron}) in the  statics case).
The latter 
\begin{equation}
  \label{inertClap}
  \mathfrak{I}=\int_{\GO(t)}\{\Bb_{i}\cdot\By+\Bb_{i}^{*}\cdot\Bx\}d\Bx
  +\int_{S(t)}\{\Bt_{i}\cdot\By+\Bt_{i}^{*}\cdot\Bx\}dS(\Bx)
\end{equation}
represents  the work of the inertial physical and  configurational
forces inside  the domain, plus the work of the corresponding inertial
tractions acting on the shock surface.

\section{Applications}
\setcounter{equation}{0}
\label{sec:app}
In this section we show the   utility of the Clapeyron Theorem in various problems of mechanics and physics rooted in the Calculus of Variations.  Throughout this section we assume that $\By(\Bx)$ is a Lipschitz stationary extremal in $\GO$ that is class $C^{2}$ on a \nbh\ of $\dOm$, as in Remark~\ref{rem:reg}.
 
\subsection{Necessary condition of metastability}
\label{sec:dbc}
The problem of metastability is central for nonlinear elasticity theory and, more generally, for vectorial variational problems with nonconvex energy density.  In our special setting, it concerns the existence of strong local minimia  that are not automatically global.
To analyze this problem we can take advantage of the fact that GCT gives a formula for the energy   in terms of the boundary values of the solution and its gradients. The idea is to  consider the difference of two such expressions and rearrange it advantageously.
As our subsequent analysis shows a  helpful hint is  to  isolate in the boundary integral an  expression involving a Weierstrass excess function 
\begin{equation}
  \label{Weierstrass}
  \CE(\BF,\BG)=W(\BG)-W(\BF)-(W_{\BF}(\BF),\BG-\BF), 
\end{equation}
and then  compare the energies of two different configurations with  the same Dirchlet data. The result of the implied rearrangement can be formulated in the form of a theorem which relies essentially on the GCT. 
\begin{theorem}
  \label{cor:aff}
  Assume that the Lipschitz deformations $\By_{1}(\Bx)$ and $\By_{2}(\Bx)$ are
  stationary, of class $C^{2}$ near $\dOm$ and that
\begin{equation}
  \label{bc}
\By_{1}(\Bx)=\By_{2}(\Bx)=\By_{0}(\Bx),\quad \Bx\in\dOm.
\end{equation}
Then
\begin{equation}
\int_{\GO}\{W(\BF_{2})-W(\BF_{1})\}d\Bx=\nth{n}\int_{\Md\GO}
\{\CE(\BF_{1},\BF_{2})(\Bx,\Bn)+(\BP_{1}-\BP_{2})\Bn\cdot(\BF_{2}\Bx-\By_{0})\}dS(\Bx),
\label{naff1}
\end{equation}
where
\[
\BF_{j}=\Grad\By_{j}(\Bx),\qquad\BP_{j}=W_{\BF}(\Grad\By_{j}(\Bx)),\qquad
j=1,2.
\]
\end{theorem}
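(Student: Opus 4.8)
The plan is to apply the Generalized Clapeyron Theorem~\ref{th:clapeyron} in its scale-free stationary form \eqref{Clapeyron} to each of the two extremals separately and then subtract. Since $\By_1$ and $\By_2$ are both stationary, the surface contribution $p^{*}_{\GS}$ in \eqref{clapeyron} vanishes for each of them, so \eqref{Clapeyron} applies verbatim; and since they share the Dirichlet data $\By_1=\By_2=\By_0$ on $\dOm$ by \eqref{bc}, I obtain
\[
E[\By_j]=\nth{n}\int_{\dOm}\{\BP_j\Bn\cdot\By_0+\BP_j^{*}\Bn\cdot\Bx\}dS,\qquad j=1,2.
\]
Subtracting the two identities reduces the entire statement to showing that the boundary integrand
\[
(\BP_2-\BP_1)\Bn\cdot\By_0+(\BP_2^{*}-\BP_1^{*})\Bn\cdot\Bx
\]
agrees with the target integrand $\CE(\BF_1,\BF_2)(\Bn\cdot\Bx)+(\BP_1-\BP_2)\Bn\cdot(\BF_2\Bx-\By_0)$ of \eqref{naff1}, pointwise on $\dOm$.

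The next step is purely algebraic and uses only the definition \eqref{Esh-tensor} of the Eshelby tensor. Because $\BP^{*}=W\BI_{n}-\BF^{T}\BP$, one has the pointwise relation
\[
(\BP^{*}\Bn)\cdot\Bx=W(\Bn\cdot\Bx)-(\BP\Bn)\cdot(\BF\Bx),
\]
valid for any $\BF$. Applying this to $\BF_1$ and $\BF_2$ converts the $\BP_j^{*}$ terms into expressions in $W(\BF_j)$, $\BP_j\Bn$ and $\BF_j\Bx$. I would then expand the Weierstrass excess function $\CE(\BF_1,\BF_2)=W(\BF_2)-W(\BF_1)-(\BP_1,\BF_2-\BF_1)$ from \eqref{Weierstrass} in the target integrand. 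Matching the two sides, the $(\BP_2-\BP_1)\Bn\cdot\By_0$ terms cancel immediately, the $[W(\BF_2)-W(\BF_1)](\Bn\cdot\Bx)$ terms cancel, the $-(\BP_2\Bn)\cdot(\BF_2\Bx)$ terms cancel, and I expect everything to reduce to the single scalar identity
\[
(\BP_1\Bn)\cdot\big((\BF_2-\BF_1)\Bx\big)=(\BP_1,\BF_2-\BF_1)(\Bn\cdot\Bx),
\]
which then must be checked on $\dOm$.

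The one genuine point — and the main (if modest) obstacle — is that this last identity is \emph{false} for a generic pair $\BF_1,\BF_2$; it holds only because of the shared boundary data. Here I would invoke \eqref{bc}: since $\By_1=\By_2$ on $\dOm$ and both maps are $C^{2}$ near $\dOm$, their tangential derivatives coincide there, so $(\BF_2-\BF_1)\BGt=0$ for every $\BGt$ tangent to $\dOm$. This forces the rank-one structure $\BF_2-\BF_1=\Bc\otimes\Bn$ on $\dOm$ for some field $\Bc:\dOm\to\bb{R}^{m}$ — exactly the Hadamard-type relation \eqref{Hadamard}, now arising from matching boundary values rather than from a jump. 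Substituting $\BF_2-\BF_1=\Bc\otimes\Bn$, the left side becomes $(\BP_1\Bn)\cdot\big(\Bc(\Bn\cdot\Bx)\big)=\big((\BP_1\Bn)\cdot\Bc\big)(\Bn\cdot\Bx)$, while the right side is $(\BP_1,\Bc\otimes\Bn)(\Bn\cdot\Bx)=\big((\BP_1\Bn)\cdot\Bc\big)(\Bn\cdot\Bx)$; the two coincide, the integrands match pointwise, and the proof is complete.
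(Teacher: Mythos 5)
Your proof is correct and follows essentially the same route as the paper: apply the scale-free GCT \eqref{Clapeyron} to each stationary extremal, subtract, expand $\BP^{*}$ via \eqref{Esh-tensor} and the Weierstrass function, and close the remaining identity $(\BP_1\Bn)\cdot((\BF_2-\BF_1)\Bx)=\av{\BP_1,\BF_2-\BF_1}(\Bn\cdot\Bx)$ using the rank-one structure $\BF_2-\BF_1=\Bc\otimes\Bn$ on $\dOm$ forced by the shared Dirichlet data. The paper's proof is the same argument with the same key step.
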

\begin{proof}
  Since the energy is scale-free, we can use  the nonlinear Clapeyron Theorem (\ref{Clapeyron})  and write 
  \[
\int_{\GO}W(\BF_{i})d\Bx=\nth{n}\int_{\dOm}\{\BP_{i}\Bn\cdot\By_{0}+\BP_{i}^{*}\Bn\cdot\Bx\}dS,
\]
Subtracting these two equations and expanding $\BP^{*}_{j}$ via (\ref{Esh-tensor}) we obtain
\begin{multline*}
  \int_{\GO}\{W(\BF_{2})-W(\BF_{1})\}d\Bx=\nth{n}\int_{\Md\GO}\{
  \CE(\BF_{1},\BF_{2})(\Bn\cdot\Bx)+
  (\BP_{1}-\BP_{2})\Bn\cdot(\BF_{2}\Bx-\By_{0})+\\
    \av{\BP_{1},\BF_{2}-\BF_{1}}(\Bn\cdot\Bx)
  +(\BP_{1}\Bn,(\BF_{1}-\BF_{2})\Bx)\}dS.
\end{multline*}
Since $\By_{1}$ and $\By_{2}$ are of class $C^{2}$ near $\dOm$ and $\By_{1}-\By_{2}=0$ on $\dOm$, we conclude that there exists a $C^{1}$ field $\Ba:\dOm\to\bb{R}^{m}$, such that $\BF_{1}-\BF_{2}=\Ba\otimes\Bn$. Therefore,
\[
(\BP_{1}\Bn,(\BF_{1}-\BF_{2})\Bx)=(\BP_{1}\Bn\cdot\Ba)(\Bn\cdot\Bx)=\av{\BP_{1},\BF_{1}-\BF_{2}}(\Bn\cdot\Bx),
\]
and  (\ref{naff1})   follows.
\end{proof}
\begin{remark}
Switching $\By_{1}$ and $\By_{2}$ we also have
  \begin{equation}
    \label{naff2}
    \int_{\GO}\{W(\BF_{2})-W(\BF_{1})\}d\Bx=\nth{n}\int_{\Md\GO}
\{-\CE(\BF_{2},\BF_{1})(\Bx,\Bn)+(\BP_{1}-\BP_{2})\Bn\cdot(\BF_{1}\Bx-\By_{0})\}dS(\Bx).
  \end{equation}
  Taking the average of (\ref{naff1}) and (\ref{naff2}) we obtain a more symmetric version of the energy increment formula
   \begin{multline*}
      \label{symd}
  \int_{\GO}\{W(\BF_{2})-W(\BF_{1})\}d\Bx= \\
  \nth{n}\int_{\Md\GO}\left\{
\hf(\CE(\BF_{1},\BF_{2})-\CE(\BF_{2},\BF_{1}))(\Bx,\Bn)+(\BP_{1}-\BP_{2},\lump{\BF}\Bx-\By_{0})
\right\}dS(\Bx),
   \end{multline*}
where $\lump{\BF}=(\BF_{1}+\BF_{2})/2$. Subtracting (\ref{naff1}) and (\ref{naff2}) we also obtain
    \begin{equation}
      \label{syms}
      \int_{\dOm}(\CE(\BF_{1},\BF_{2})+\CE(\BF_{2},\BF_{1}))(\Bx,\Bn)dS(\Bx)=
\int_{\dOm}((\BP_{2}-\BP_{1})\Bn,(\BF_{2}-\BF_{1})\Bx)dS(\Bx)
    \end{equation}
  \end{remark}

Observe further that if $\By_{1}$ and $\By_{2}$, satisfying (\ref{bc}), are smooth near $\dOm$, then $\BF_{2}-\BF_{1}=\Ba(\Bx)\otimes\Bn(\Bx)$ for all $\Bx\in\dOm$, where $\Bn(\Bx)$ is an outward unit normal to $\dOm$. Observe also that if $W(\BF)$ is rank-one convex, we have  for all $\Bx\in\dOm$
\[
\CE(\BF_{1},\BF_{2})=W(\BF_{2})-W(\BF_{1})-\av{\BP_{1}, \BF_{2}-\BF_{1}}\ge 0.
\]
In view of these observations and the fact that, if the origin is a star point in $\GO$, which can always be arranged
by a translation due to Remark~\ref{rem:shift},
then $(\Bx,\Bn)\ge 0$ for all $\Bx\in\GO$, we can conclude that  whenever $\By_{1}$ and $\By_{2}$ are two smooth extremals satisfying (\ref{bc}),  the following inequality holds
\begin{equation}
  \label{R1ineq}
  \int_{\GO}\{W(\BF_{2})-W(\BF_{1})\}d\Bx\ge\nth{n}\int_{\Md\GO}
\{(\BP_{1}-\BP_{2})\Bn\cdot(\BF_{2}\Bx-\By_{0})\}dS(\Bx).
\end{equation}
Now, if $\By_{2}(\Bx)$ is a global minimizer and $\By_{1}(\Bx)$ is a stationary extremal satisfying
\begin{equation}
  \label{nslmcrit}
 \int_{\Md\GO}(\BP_{1}-\BP_{2})\Bn\cdot(\BF_{2}\Bx-\By_{0})dS(\Bx)\ge 0,
\end{equation}
then
\[
0\ge\int_{\GO}\{W(\BF_{2})-W(\BF_{1})\}d\Bx\ge \int_{\Md\GO}(\BP_{1}-\BP_{2})\Bn\cdot(\BF_{2}\Bx-\By_{0})dS(\Bx)\ge 0,
\]
and hence $\By_{1}(\Bx)$ must also be a global minimizer. 
One can see that to allow for metastaility condition
(\ref{nslmcrit}) should be violated and since  the \lhs\ of
(\ref{nslmcrit}) can be regarded as a measure of non-affinity of
$\By_{2}(\Bx)$, such non-affinity can be then viewed as a necessary
condition of metastability \cite{grtrhard}.

\subsection{Quasiconvex envelope}

To formulate the problem considered in this section  we would need first to recall several standard definitions. The most important one is the definition of the quasiconvex envelope of the energy density $W(\BF)$---the largest quasiconvex function that does not exceed $W(\BF)$. It is denoted $QW$ and is often referred to  as quasiconvexification of $W$. There is a formula for $QW$ \cite{Dak08}
\begin{equation}
  \label{QW2}
  QW(\BF)|D|=\inf_{\myatop{\By|_{\Md D}=\BF\Bx}{\By\in W^{1,\infty}(D;\bb{R}^{m})}}\int_{D}W(\Grad\By)d\Bx.
\end{equation}
We would also need  the  notion of  rank-one convexity, i.e. convexity along all straight lines connecting two points that differ by a rank-one matrix. Similarly to the quasiconvex envelope of $W$ we define the rank-one convex envelope $RW$ to be the largest rank-one convex function that does not exceed $W$. It was shown in \cite{morr52} that every quasiconvex function is rank-one convex and therefore, $RW(\BF)\ge QW(\BF)$. The  question  whether $QW(\BF)=RW(\BF)$  has been answered in the negative in \cite{sver92}, except in the case $m=2$. Nonetheless examples where $QW(\BF)\not=RW(\BF)$ are extremely rare \cite{grab18}.

While, in many cases the variational problem (\ref{QW2}) has no solutions, it is known that the
quasiconvexification $QW$ of $W$ does not depend on the choice of the domain $D$ in (\ref{QW2}), and there are cases when one is be able to choose a particular domain $D$, where Lipschitz solutions do exist \cite{grtrhard}. In those  cases, the minimizers must be stationary extremals for which GCT is applicable.

 In this context one can regard such minimizers as known and pose the question whether   GCT   provides information about the function $QW$ in general, rather than about a specific solution of equilibrium equations. One  representative  result of this type, allowing one  to transfer the knowledge about  quasiconvexity from the boundary to the interior of the domain, is  formulated below in the form of a theorem:
\begin{theorem}
  \label{th:ELA}
Let $\GO\subset{\bb{R}^{d}}$ be a star-shaped domain with Lipschitz boundary. Assume that
\begin{itemize}
\item[(BVP)] For a given $\BF_{0}\in\bb{R}^{m\times n}$  the function  $\bra{\By}\in W^{1,\infty}(\GO;\bb{R}^{m})$ solves 
\begin{equation}
  \label{ELN1}
  \Div\BP(\Grad\bra{\By})=\Bzr,\quad\Div\BP^{*}(\Grad\bra{\By})=\Bzr,
\end{equation}
in the sense of distributions, and satisfies $\bra{\By}(\Bx)=\BF_{0}\Bx$ for all $\Bx\in\dOm$;
\item[(REG)] $\bra{\By}$ is of class $C^{2}$ near $\dOm$;
\item[(RCX)] $W(\Grad\bra{\By}(\Bx))=RW(\Grad\bra{\By}(\Bx))$ for a.e. $\Bx\in\dOm$;
\item[(Q=R)] $RW(\BF_{0})=QW(\BF_{0})$.
\end{itemize}
Then
\begin{enumerate}
\item[(i)] $RW(\BF_{0})=QW(\BF_{0})=\dashint_{\GO}W(\Grad\bra{\By})d\Bx$, i.e. $\bra{\By}(\Bx)$ is the global minimizer in (\ref{QW2}).
\item[(ii)] $QW(\Grad\bra{\By}(\Bx))=W(\Grad\bra{\By}(\Bx))$ for a.e. $\Bx\in\GO$.
\end{enumerate}
\end{theorem}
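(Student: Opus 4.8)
The plan is to pinch the quantity $\int_{\GO}W(\Grad\bra{\By})\,d\Bx$ between two bounds whose common value is $RW(\BF_{0})|\GO|=QW(\BF_{0})|\GO|$, and then extract (ii) from the resulting equality case. The easy bound is admissibility: since $W$ is scale-free and continuous we have $W=W(\BF)$, and $\bra{\By}$, which carries the affine data $\BF_{0}\Bx$ on $\dOm$, is a competitor in the characterization (\ref{QW2}) of $QW$ (domain-independence lets us use our $\GO$). Hence $\dashint_{\GO}W(\Grad\bra{\By})\,d\Bx\ge QW(\BF_{0})$, and the whole content of (i) is the reverse inequality, for which the GCT is the engine.

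To get the reverse bound I would place the origin at a star point of $\GO$ (legitimate by Remark~\ref{rem:shift}, so that $\Bn\cdot\Bx\ge 0$ on $\dOm$) and compare $\bra{\By}$ with the affine extremal $\By_{1}(\Bx)\equiv\BF_{0}\Bx$, which is stationary and shares the boundary data. Applying formula (\ref{naff2}) of Theorem~\ref{cor:aff} with $\By_{1}\equiv\BF_{0}\Bx$, its term $(\BP_{1}-\BP_{2})\Bn\cdot(\BF_{1}\Bx-\By_{0})$ vanishes identically because $\BF_{1}\Bx-\By_{0}=\BF_{0}\Bx-\BF_{0}\Bx=0$ on $\dOm$, leaving
\[
\int_{\GO}W(\Grad\bra{\By})\,d\Bx = W(\BF_{0})|\GO|-\nth{n}\int_{\dOm}\CE(\Grad\bra{\By},\BF_{0})\,(\Bn\cdot\Bx)\,dS ,
\]
with $\CE$ the Weierstrass excess (\ref{Weierstrass}). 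Here (REG) and the boundary condition force $\Grad\bra{\By}=\BF_{0}+\Ba\otimes\Bn$ on $\dOm$ (tangential derivatives of $\bra{\By}-\BF_{0}\Bx$ vanish), exactly as in the proof of Theorem~\ref{cor:aff}, so the excess is evaluated along a rank-one direction.

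The crux is the lower bound on the boundary integrand. On $\dOm$ hypothesis (RCX) gives $W(\Grad\bra{\By})=RW(\Grad\bra{\By})$, so $W$ is a $C^{1}$ function touching $RW$ from above at $\Grad\bra{\By}$. Restricting to the rank-one segment from $\Grad\bra{\By}$ to $\BF_{0}$ (recall $\BF_{0}-\Grad\bra{\By}=-\Ba\otimes\Bn$) and using convexity of $RW$ along that segment, I expect to obtain the supporting inequality $RW(\BF_{0})\ge RW(\Grad\bra{\By})+(W_{\BF}(\Grad\bra{\By}),\BF_{0}-\Grad\bra{\By})$, which rearranges to $\CE(\Grad\bra{\By},\BF_{0})\ge W(\BF_{0})-RW(\BF_{0})$ pointwise on $\dOm$. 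Substituting this together with $\Bn\cdot\Bx\ge 0$ and $\int_{\dOm}(\Bn\cdot\Bx)\,dS=n|\GO|$ into the displayed identity yields $\int_{\GO}W(\Grad\bra{\By})\,d\Bx\le RW(\BF_{0})|\GO|$; invoking (Q=R) closes the loop against admissibility and proves $\int_{\GO}W(\Grad\bra{\By})\,d\Bx=QW(\BF_{0})|\GO|$, i.e.\ statement (i). I expect this rank-one supporting step to be the main obstacle: one must justify carefully that the \emph{equality} of the $C^{1}$ function $W$ with $RW$ at the boundary gradient transfers to a first-order (gradient-matching) comparison for $RW$, which is precisely where rank-one convexity of $RW$ and its one-sided directional derivatives must be used.

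Finally, (ii) follows by squeezing. Quasiconvexity of $QW$ (every quasiconvex function is rank-one convex, \cite{morr52}, and $QW$ is quasiconvex) makes the affine map a minimizer of the $QW$-energy under affine data, so $\int_{\GO}QW(\Grad\bra{\By})\,d\Bx\ge QW(\BF_{0})|\GO|$; on the other hand $QW\le W$ pointwise combined with (i) gives $\int_{\GO}QW(\Grad\bra{\By})\,d\Bx\le\int_{\GO}W(\Grad\bra{\By})\,d\Bx=QW(\BF_{0})|\GO|$. Hence $\int_{\GO}\{W(\Grad\bra{\By})-QW(\Grad\bra{\By})\}\,d\Bx=0$ with a nonnegative integrand, so $W(\Grad\bra{\By}(\Bx))=QW(\Grad\bra{\By}(\Bx))$ for a.e.\ $\Bx\in\GO$, which is exactly (ii).
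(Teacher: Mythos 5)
Your proposal is correct and follows essentially the same route as the paper for part (i): the Clapeyron/GCT boundary identity comparing $\bra{\By}$ with the affine state (you reach it through Theorem~\ref{cor:aff}, the paper applies (\ref{Clapeyron}) directly, which is equivalent), then the rank-one supporting inequality $RW(\BF_{0})\ge W(\Grad\bra{\By})+\BP(\Grad\bra{\By})\cdot(\BF_{0}-\Grad\bra{\By})$ at contact points of $W$ and $RW$ --- which the paper imports as Lemma~\ref{lem:r1cx} from \cite{grtrnc} and which your ``touching from above along a rank-one line'' argument correctly re-derives --- followed by star-shapedness and the squeeze via (Q=R). The only real divergence is part (ii), where you give a self-contained argument (quasiconvexity of $QW$ under affine data plus $QW\le W$ and part (i) force the nonnegative integrand $W-QW$ to vanish a.e.) in place of the paper's citation of \cite{tahe02}; both are valid, and yours is the more elementary.
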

\begin{proof}
  Assumption (BVP) implies that the Clapeyron theorem is applicable to the stationary equilibrium $\bra{\By}(\Bx)$ of (\ref{QW2}) we obtain, by virture of (REG),
\begin{equation}
  \label{ClapRW}
  \int_{\GO}W(\Grad\bra{\By})d\Bx=\nth{n}\int_{\dOm}(W(\BF_{0}+\Ba\otimes\Bn)-
  \BP(\BF_{0}+\Ba\otimes\Bn)\Bn\cdot\Ba)(\Bn\cdot\Bx)dS. 
\end{equation}

Next, we appeal to \cite[Lemma~4.2]{grtrnc}.
We quote the part of the lemma needed for the proof.
\begin{lemma}
  \label{lem:r1cx}
Let $V(\BF_{0})$ be a rank-one convex function such that $V(\BF_{0})\le W(\BF_{0})$. Let
\[
\CA_{V}=\{\BF_{0}\in\bb{R}^{m\times n}:W(\BF_{0})=V(\BF_{0})\}.
\]
Then for every $\BF_{*}\in\CA_{V}$, $\Bb\in\bb{R}^{m}$, and $\Bm\in\bb{R}^{n}$
\begin{equation}
  \label{coolineq}
  V(\BF_{*}+\Bb\otimes\Bm)\ge W(\BF_{*})+\BP(\BF_{*})\Bm\cdot\Bb.
\end{equation}
\end{lemma}
We now apply Lemma~\ref{lem:r1cx} by choosing $V(\BF_{0})=RW(\BF_{0})$,
$\BF_{*}=\Grad\bra{\By}(\Bx)=\BF_{0}+\Ba\otimes\Bn$, and $\Bb\otimes\Bm=-\Ba\otimes\Bn$,
for each $\Bx\in\dOm$. The
assumption (RCX) implies that Lemma~\ref{lem:r1cx} is applicable, and therefore,
\begin{equation}
  \label{QWineq}
RW(\BF_{0})\ge W(\BF_{0}+\Ba\otimes\Bn)-\BP(\BF_{0}+\Ba\otimes\Bn)\Bn\cdot\Ba.
\end{equation}
Finally, we use the assumption that $\GO$ is star-shaped. If we choose the
origin at the star point, then the function $\Bn(\Bx)\cdot\Bx$ is always
non-negative at all points on $\dOm$. Therefore, inequality (\ref{QWineq})
together with the identity (\ref{ClapRW}) delivered by the Clapeyron theorem,
implies
\[
\int_{\GO}W(\Grad\bra{\By})d\Bx\le \frac{RW(\BF_{0})}{n}\int_{\dOm}(\Bn\cdot\Bx)dS=|\GO|RW(\BF_{0}).
\]
Since $|\GO|QW(\BF_{0})$ is the minimal value of the functional in (\ref{QW2}) we
obtain
\begin{equation}
  \label{RWQW}
  |\GO|RW(\BF_{0})\ge\int_{\GO}W(\Grad\bra{\By})d\Bx\ge|\GO|QW(\BF_{0}). 
\end{equation}
By the assumption (Q=R) we must have equality in both inequalities in (\ref{RWQW}) and conclude that $\bra{\By}(\Bx)$ is the global minimizer in (\ref{QW2}), proving (i). Property (ii) is a necessary condition for any strong local minimizer \cite{tahe02}, and hence must hold as well.
\end{proof}

\subsection{Probing the binodal}

It is well-known \cite{ball7677} that if $\By(\Bx)$ is a strong local minimizer of (\ref{non-param}), then the function $\BF\mapsto W(\Bx,\By(\Bx),\BF)$ must be quasiconvex at $\BF$, i.e. satisfy the inequality
\begin{equation}
  \label{qcxF}
  \int_{D}\{W(\BF+\Grad\BGf(\Bx))-W(\BF)\}d\Bx\ge 0\quad\forall\BGf\in W_{0}^{1,\infty}(D;\bb{R}^{m}), 
\end{equation}
where $D$ is any domain in $\bb{R}^{n}$. Thus, the gradient of any metastable configuration must avoid the binodal region of the phase space
\[
\CB=\left\{\BF\in\bb{R}^{m\times n}:\exists\BGf\in W_{0}^{1,\infty}(D;\bb{R}^{m}): \int_{D}\{W(\BF+\Grad\BGf(\Bx))-W(\BF)\}d\Bx<0\right\}.
\]
Its boundary $\Md\CB$ is called the \emph{binodal}.

 Since the function is quasiconvex, \IFF its binodal region is empty,  
the binodal region can then be  characterized by the inequality \cite{huss95}
$$
\CB=\{\BF\in\bb{R}^{m\times n}:QW(\BF)<W(\BF)\}.
$$
 Note  also that if $W(\BF_{0})$ is strictly quasiconvex at $\BF_{0}\in\bb{R}^{m\times n}$, then $\bra{\By}(\Bx)=\BF_{0}\Bx$ is the unique solution of (\ref{QW2}) \cite{tahe03}. If $\BF_{0}$ lies on the binodal, we expect the emergence of nontrivial equilibria in unbounded domains, where the volume fraction of the region where $\Grad\bra{\By}$
is a finite distance from $\BF_{0}$ to be zero, even though  in finite domains we still expect that $\bra{\By}(\Bx)=\BF_{0}\Bx$ be the unique solution. To follow this path  one can, for instance,  try to  find radial solutions of (\ref{ELN1}) in the entire space. As we show below, in case of the success,  Clapeyron's Theorem can   deliver a formula for $QW(\BF)$ for a range of $\BF$.

Suppose $m=n$, and $W(\BF)$ is objective and isotropic $$W(\BF)=w(v_{1},\ldots,v_{n}),$$ where $v_{j}$ are
the sigular values of $\BF$ and the function $w$ does not change after a permutation of its arguments.
We look for a radial solution $\By(\Bx)=\eta(r)\Hat{\Bx}$ of (\ref{ELN1}).
We first observe that
\[
\Grad\By=\eta'\tns{\Hat{\Bx}}+\frac{\eta}{r}(\BI_{n}-\tns{\Hat{\Bx}}).
\]
Hence, $\BF=\Grad\By$ has a singular value $|\eta'(r)|$ and $n-1$ singular values $|\eta(r)/r|$.
We will assume that $\eta(r)\ge 0$ and $\eta'(r)\ge 0$. In that case we define two functions
\[
  w_{1}(r)=\dif{w}{v_{1}}(\eta'(r),\eta(r)/r,\ldots,\eta(r)/r),\quad
  w_{2}(r)=\dif{w}{v_{2}}(\eta'(r),\eta(r)/r,\ldots,\eta(r)/r).
\]
The Piola stress tensor is \cite[p.~564]{ball82}
\[
\BP=w_{1}\tns{\Hat{\Bx}}+w_{2}(\BI_{n}-\tns{\Hat{\Bx}}).
\]
Therefore, $\eta(r)$ must solve the nonlinear second order ODE
\begin{equation}
  \label{rODE}
w_{11}\eta''+(n-1)w_{12}\left(\frac{\eta}{r}\right)'+\frac{n-1}{r}(w_{1}-w_{2})=0,
\end{equation}
where
\[
  w_{11}(r)=\hess{w}{v_{1}}(\eta'(r),\eta(r)/r,\ldots,\eta(r)/r),\quad
  w_{12}(r)=\mix{w}{v_{1}}{v_{2}}(\eta'(r),\eta(r)/r,\ldots,\eta(r)/r).
\]
If $\eta'(r)$ suffers a jump discontinuity at $r=r_{0}$, then for (\ref{ELN1}) to hold we must have
\[\jump{w_{1}}(r_{0})=0,\quad\jump{w}(r_{0})=w_{1}(r_{0})\jump{\eta'}.\]
Typically, we would look for a solution of the form 
\begin{equation}
  \label{radsol}
  \By(\Bx)=
  \begin{cases}
    f_{0}r\Hat{\Bx}, &|\Bx|<1,\\
    \eta(r)\Hat{\Bx},&|\Bx|>1
  \end{cases}
\end{equation}
In this case we must have
\[
  w_{1}(\eta'(1^{+}),f_{0},\ldots,f_{0})=w_{1}(f_{0},f_{0},\ldots,f_{0})
\]
\[
  w(\eta'(1^{+}),f_{0},\ldots,f_{0})=w(f_{0},f_{0},\ldots,f_{0})+w_{1}(f_{0},f_{0},\ldots,f_{0})
  (\eta'(1^{+})-f_{0}).
\]
These can be regarded as equations determining the values of $f_{0}$ and $\eta'(1^{+})$.
Ultimately, the radial solution $\eta(r)$ will be fixed by
\[
f_{\infty}=\lim_{r\to\infty}\frac{\eta(r)}{r}=\lim_{r\to\infty}\eta'(r).
\]
In Appendix~\ref{app:A2} we prove that
\begin{equation}
  \label{etainf}
  \eta(r)=f_{\infty}r+\frac{A}{r^{n-1}}+O(r^{-n}),\text{ as }r\to\infty.
\end{equation}

Next we show  how Clapeyron's Theorem can use  the knowledge of quasiconvexity at infinity (the boundary of all-space),  transfers it to the interior and  finally   delivers an explicit formula for $QW(f\BI_{n})$ for all $f\in[f_{0},f_{\infty}]$.

\begin{theorem}
  \label{th:QWrad}
Suppose that $\eta\in C^{2}([1,+\infty))$ is such that
\begin{itemize}
\item[(i)] $\displaystyle\bra{\By}(\Bx)=
  \begin{cases}
    f_{0}\Bx, &|\Bx|<1,\\
    \eta(|\Bx|)\Hat{\Bx},&|\Bx|>1
  \end{cases}
$ solves (\ref{ELN1}) in $\bb{R}^{n}$ in the sense of distributions.
\item[(ii)] $\displaystyle f_{\infty}=\lim_{r\to\infty}\frac{\eta(r)}{r}=\lim_{r\to\infty}\eta'(r)$ is such that
$W(f_{\infty}\BI_{n})=QW(f_{\infty}\BI_{n})$.
\end{itemize}
Then 
\begin{itemize}
\item[(a)] $W(f_{0}\BI_{n})=QW(f_{0}\BI_{n})$;
\item[(b)] for all $r\ge 1$
\begin{equation}
  \label{QWR}
    W\left(\eta'(r)\tns{\Hat{\Bx}}+\frac{\eta(r)}{r}(\BI_{n}-\tns{\Hat{\Bx}})\right)=
  QW\left(\eta'(r)\tns{\Hat{\Bx}}+\frac{\eta(r)}{r}(\BI_{n}-\tns{\Hat{\Bx}})\right). 
\end{equation}
\item[(c)] for every $r\ge 1$
  \begin{equation}
    \label{QWfI}
    QW\left(\frac{\eta(r)}{r}\BI_{n}\right)=w_{1}(r)\frac{\eta(r)}{r}+w(r)-\eta'(r)w_{1}(r).
  \end{equation}
\end{itemize}
\end{theorem}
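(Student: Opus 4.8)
The plan is to convert the radial profile into information about $QW$ by evaluating the Generalized Clapeyron Theorem on balls $B(0,R)$ and then letting $R\to\infty$, where assumption~(ii) is the only external input. First I would apply the scale-free GCT \eqref{Clapeyron} to $\bra{\By}$ on $B(0,R)$ for each $R\ge 1$ (legitimate since $W=W(\BF)$ is scale-free and $\bra{\By}$ is a stationary extremal, of class $C^{2}$ near $\Md B(0,R)$ for $R>1$, the interior jump at $r=1$ contributing nothing because $p^{*}_{\GS}=0$). Using $\Bn=\Hat{\Bx}$ and $\bra{\By}=\frac{\eta(R)}{R}\Bx$ on the sphere together with the radial forms of $\BP$ and $\BP^{*}$, the boundary integral collapses to
\[
\dashint_{B(0,R)}W(\Grad\bra{\By})\,d\Bx = w_{1}(R)\frac{\eta(R)}{R}+w(R)-\eta'(R)w_{1}(R)=:g(R),
\]
which is exactly the right-hand side of \eqref{QWfI}. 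Since $\bra{\By}$ restricted to $B(0,R)$ is an admissible competitor for the affine datum $\frac{\eta(R)}{R}\Bx$ in \eqref{QW2}, this already yields the easy inequality $QW(\frac{\eta(R)}{R}\BI_{n})\le g(R)$; I set $D(R):=g(R)-QW(\frac{\eta(R)}{R}\BI_{n})\ge 0$.

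The core of the argument is a monotonicity identity for $D$. Given any competitor $\Bv$ on $B(0,R)$ with $\Bv=\frac{\eta(R)}{R}\Bx$ on $\Md B(0,R)$, I would extend it by $\bra{\By}$ on $\{|\Bx|>R\}$ to a Lipschitz map $\Hat{\Bv}$ that equals $\frac{\eta(\rho)}{\rho}\Bx$ on $\Md B(0,\rho)$ for every $\rho>R$, so \eqref{QW2} gives $\int_{B(0,\rho)}W(\Grad\Hat{\Bv})\,d\Bx\ge |B(0,\rho)|\,QW(\frac{\eta(\rho)}{\rho}\BI_{n})$. Subtracting the fixed annular energy $\int_{\{R<|\Bx|<\rho\}}W(\Grad\bra{\By})\,d\Bx=|B(0,\rho)|g(\rho)-|B(0,R)|g(R)$ (from the first step) and taking the infimum over $\Bv$ produces $|B(0,\rho)|D(\rho)\ge |B(0,R)|D(R)$; that is, $r^{n}D(r)$ is nondecreasing on $[1,\infty)$.

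The decisive and most delicate step is to show $\liminf_{\rho\to\infty}\rho^{n}D(\rho)=0$, since together with monotonicity and $D\ge 0$ this forces $D\equiv 0$, which is precisely part~(c). Here I would feed in the asymptotics \eqref{etainf}: writing $\eta(r)=f_{\infty}r+Ar^{1-n}+O(r^{-n})$ one gets $\eta'-f_{\infty}=(1-n)Ar^{-n}$ and $\frac{\eta}{r}-f_{\infty}=Ar^{-n}$, so the far-field strain increment is trace-free to leading order and $g(\rho)=W(f_{\infty}\BI_{n})+n\tilde{w}_{1}^{\infty}A\rho^{-n}+O(\rho^{-2n})$, where $\tilde{w}_{1}^{\infty}$ is the common singular-value derivative of $w$ at $(f_{\infty},\dots,f_{\infty})$. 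The matching expansion of $QW(\frac{\eta(\rho)}{\rho}\BI_{n})$ about the quasiconvex point $f_{\infty}\BI_{n}$ must have the \emph{same} $\rho^{-n}$ coefficient: the relation $QW\le W$ with equality at $f_{\infty}\BI_{n}$ (assumption~(ii)) gives the upper bound, pinning the dilatational one-sided derivative of $s\mapsto QW(s\BI_{n})$ to $\frac{d}{ds}W(s\BI_{n})|_{f_{\infty}}=n\tilde{w}_{1}^{\infty}$, while the matching lower bound is what I would extract from rank-one convexity of $QW$ via Lemma~\ref{lem:r1cx} at $f_{\infty}\BI_{n}\in\{W=QW\}$. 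I fully expect \emph{this cancellation of the leading $\rho^{-n}$ terms} — equivalently the two-sided differentiability of $QW$ at the contact point $f_{\infty}\BI_{n}$ in the identity direction, which is rank $n$ and so not reachable by a single rank-one move — to be the main obstacle, and I would isolate the required decay estimate (leaning on \eqref{etainf} and the contact-set analysis of \cite{grtrnc}) in a separate lemma rather than attempt it inline.

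Finally, parts~(a) and~(b) follow from~(c). Evaluating \eqref{QWfI} at $r=1$ and simplifying with the interior jump relations $\jump{w_{1}}(1)=0$ and $\jump{w}(1)=w_{1}(1)\jump{\eta'}$ gives $g(1)=W(f_{0}\BI_{n})$, whence $QW(f_{0}\BI_{n})=W(f_{0}\BI_{n})$, which is~(a). For~(b), the equality $D(R)=0$ says that $\bra{\By}$ attains the infimum in \eqref{QW2} on every $B(0,R)$, so it is a global — hence strong local — minimizer; the standard necessary condition that the gradient of such a minimizer lie in the quasiconvexity contact set almost everywhere \cite{tahe02,ball7677} then yields $W(\Grad\bra{\By})=QW(\Grad\bra{\By})$, i.e. \eqref{QWR}, for a.e. $r$ and hence, by continuity, for all $r\ge 1$.
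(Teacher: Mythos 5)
Your architecture inverts the paper's: the paper proves (a) and (b) \emph{first}, by importing the all-space Weierstrass-excess machinery of \cite{grtrmms} (Theorem~4.6 there gives $\int_{\bb{R}^{n}}\CE(f_{\infty}\BI_{n},\Grad\bra{\By})d\Bx=0$ from the decay (\ref{etainf}); Theorem~3.10 converts assumption~(ii) into nonnegativity of that integral over all competitors; Theorem~4.3 then yields (a), (b)), and only afterwards deduces (c) from (b) via Theorem~\ref{th:ELA} combined with the GCT. Your computation of the boundary integral, the inequality $QW(\tfrac{\eta(R)}{R}\BI_{n})\le g(R)$, and the monotonicity of $r^{n}D(r)$ (by extending competitors on $B(0,R)$ radially and using the annular energy identity) are all correct and form a genuinely different, quite elegant skeleton that avoids the external theorems entirely.

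However, the proposal has a real gap exactly where you flag one: the claim $\liminf_{\rho\to\infty}\rho^{n}D(\rho)=0$ is asserted, not proved, and it is the entire analytic content of the theorem. You correctly observe that $g(\rho)=W(f_{\infty}\BI_{n})+nw_{1}^{\infty}A\rho^{-n}+o(\rho^{-n})$, so what is needed is the matching lower bound $QW\bigl((f_{\infty}+A\rho^{-n})\BI_{n}\bigr)\ge W(f_{\infty}\BI_{n})+nw_{1}^{\infty}A\rho^{-n}+o(\rho^{-n})$, i.e.\ two-sided differentiability of $s\mapsto QW(s\BI_{n})$ at the contact point with derivative $\tfrac{d}{ds}W(s\BI_{n})$. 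The tools you cite do not deliver this: the bound $QW\le W$ with equality at $f_{\infty}\BI_{n}$ gives only the one-sided inequality in the wrong direction, and Lemma~\ref{lem:r1cx} (rank-one convexity at a contact point) controls directional derivatives of $QW$ only along rank-one lines, whereas $\BI_{n}$ has rank $n$; differentiability along each of $n$ rank-one directions does not yield differentiability along their sum for a function that is merely rank-one convex. Deferring this to an unstated ``separate lemma'' leaves part (c) unproved, and since you derive (a) and (b) from (c), the whole proof is incomplete. The missing estimate is precisely what the paper obtains by a different mechanism: quasiconvexity at $f_{\infty}\BI_{n}$ is transported along the radial solution to every finite ball through the vanishing of the all-space excess integral, which is why the published proof leans on \cite[Theorems 3.10 and 4.6]{grtrmms} rather than on a pointwise expansion of $QW$ at the contact point. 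To salvage your route you would need either that transport argument or an independent proof that $QW$ is differentiable at points of contact with a $C^{1}$ integrand in non-rank-one directions.
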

\begin{proof}
According to (\ref{qcxF}), quasiconvexity is defined in terms of a bounded domain. In \cite{grtrmms} we have linked quasiconvexity to test functions in all-space. 
According to \cite[Theorem~4.6]{grtrmms} the existence of the radial solution $\bra{\By}(\Bx)$ of (\ref{ELN1}) as in assumption (i) implies that
\[
\int_{\bb{R}^{n}}\CE(f_{\infty}\BI_{n},\Grad\bra{\By})d\Bx=0,
\]
as
\[
  \Grad\bra{\By}-f_{\infty}\BI_{n}\in\CS=\{\BGf\in W^{1,\infty}(\bb{R}^{n};\bb{R}^{n}):
  \Grad\BGf\in L^{2}(\bb{R}^{n}; \bb{R}^{m\times n})\},
\]
according to (\ref{etainf}). Then,
by \cite[Theorem~3.10]{grtrmms}, the assumption (ii) implies
\[
\int_{\bb{R}^{n}}\CE(f_{\infty}\BI_{n},\Grad\By)d\Bx\ge 0
\]
for all $\By(\Bx)$, such that $\Grad\By(\Bx)-f_{\infty}\BI_{n}\in\CS$.
In that case conclusions (a) and (b) follow from  \cite[Theorem~4.3]{grtrmms}.

To prove part (c) we observe that radial solutions have the property that
$\bra{\By}(\Bx)=f_{R}\Bx$ for all $\Bx\in\Md B(0,R)$, where $f_{R}=\eta(R)/R$. Hence, by the already eastablished conclusion (b), Theorem~\ref{th:ELA}  applies and says that
\begin{equation}
  \label{QWrad}
  |B(0,R)|QW\left(\frac{\eta(R)}{R}\BI_{n}\right)=\int_{B(0,R)}W(\Grad\bra{\By})d\Bx.
\end{equation}
We complete the proof by the application of the Clapeyron theorem:
\[
  \BP\Bn=w_{1}\Hat{\Bx},\quad
  \BP^{*}=w\BI_{n}-\eta'w_{1}\tns{\Hat{\Bx}}+w_{2}\frac{\eta}{r}(\BI_{n}-\tns{\Hat{\Bx}})
\]
Thus, according to (\ref{Clapeyron}),
\[
  \int_{B(0,R)}W(\Grad\bra{\By})d\Bx=\nth{n}\int_{\Md B(0,R)}\left\{w_{1}\Hat{\Bx}\cdot\eta\Hat{\Bx}+
    (w-\eta'w_{1})\Hat{\Bx}\cdot R\Hat{\Bx}\right\}dS.
\]
We conclude that
\[
QW\left(\frac{\eta(R)}{R}\BI_{n}\right)=w_{1}(R)\frac{\eta(R)}{R}+w(R)-\eta'(R)w_{1}(R).
\]
The theorem is now proved.

\end{proof}

\subsection{Formation of a void}
\label{sub:void}

In this section we show how GCT can be  
used for solving   specific problems coming from engineering
applications. Specifically, we consider here a classical problem of
small void nucleation in a preloaded body which plays important role
in a broad range of mechanical theories \cite{mana87,ligu52b,cahn62,bafi93,mco96,ngtr02,trva04,gagne2005simulations,kko13,bks11,trci19,horg92,leal11,need87,ridr67,buri73}.  In this subsection we assume that both $W$ and all extremals are of class $C^{2}$.

Our starting point is  a classical hyperelastic body  with nonlinear energy density $W(\BF)$ which occupies a domain $\GO$ in its reference state and is loaded at the boundary by
tractions $\Bt(\Bx)$. The corresponding energy functional  is 
\begin{equation}
  \label{Eundam}
  E=\int_{\GO}W(\Grad\By)d\Bx-\int_{\dOm}\Bt(\Bx)\cdot\By\,dS(\Bx),
\end{equation} 
 Suppose   that the applied 
tractions   are  fully equilibrated   and therefore
\[
\int_{\dOm}\Bt\,dS=0,\quad\int_{\dOm}\{\Bt\otimes\Bx-\Bx\otimes\Bt\}dS=0.
\]
The resulting deformation
$\By(\Bx)$ can be found by solving the system of equations 
\begin{equation}
  \label{tracbc}
  \begin{cases}
    \Div W_{\BF}(\Grad\By)=0,&\Bx\in\GO,\\
  W_{\BF}(\Grad\By(\Bx))\Bn=\Bt(\Bx),&\Bx\in\dOm.
  \end{cases}
\end{equation}
and we assume that the solution is of class $C^{2}$.

To describe the process of void formation, we introduce  $\Go\subset\bb{R}^{n}$,   a bounded domain, such that $0\in\Go$.
Suppose that during the implied point ablation,  at  $0\in\GO$ a small stress-free void $\Ge\Go$ is
formed inside $\GO$ through the removal of mass. Suppose further that  $\Ge>0$ is so small that $\Ge\Go\subset\GO$, and $\GO\setminus\Ge\Go$ is connected. 

The  equilibrium deformation $\By_{\Ge}$, resulting from the described mass removal, is assumed to be of
class $C^{2}(\bra{\GO}\setminus\Ge\Go)$.  It must then solve the system of equations
\begin{equation}
  \label{Peps}
  \left\{
  \begin{array}[c]{cl}
\Div\BP(\Grad\By_{\Ge})=\Bzr,\quad &\Bx\in\GO\setminus\Ge\Go\\
\BP(\Grad\By_{\Ge})\Bn=\Bzr,\quad &\Bx\in\Md(\Ge\Go)\\
\BP(\Grad\By_{\Ge})\Bn=\Bt(\Bx).\quad &|\Bx|\in\dOm
  \end{array}\right.
\end{equation}
The potential energy of the resulting  configuration  is (up to a constant) 
\begin{equation}
  \label{Edam}
  E_{\Ge}=\int_{\GO\setminus\Ge\Go}W(\Grad\By_{\Ge})d\Bx-\int_{\dOm}\Bt(\Bx)\cdot\By_{\Ge}dS(\Bx),
\end{equation}

We can now  show how  GCT can be used to obtain  an expression for the 
energy release associated with the implied void formation and defined by the formula
\begin{equation}
  \label{DltE}
  \GD E=\lim_{\Ge\to 0}\frac{E_{\Ge}-E}{\Ge^{n}}.
\end{equation}
This apparently straightforward question has caused considerable
controversy in the past \cite{cotterell1997,kien08,mana87,kff06,sile67,spen65,eftis87,luma99,kesi96} 
and to the best of
our knowledge has never been so far resolved in the fully nonlinear
setting.

Note first that one can  expect  the  solution $\By_{\Ge}$ of the equilibrium problem
(\ref{Peps}) in the ablated medium to exhibit  an effective  boundary
layer near the cavity. Such  boundary layer would be  different from
classical boundary layers in singularly perturbed problems, where the exponential decay of the solutions within the
boundary layer sets the length scale. In the  case of a small cavity, it is the size
$\Ge$ of the defect that sets the length scale. In all other respects
the treatment of the solution with such  boundary layer can be expected  to follow  the
standard procedure of matching the inner and outer solutions.

To shortcut  the   formal mathematical study and simplify the presentation, we  make upfront the following assumptions which will be later at least partially justified by our explicit constructions.

\begin{itemize}
\item[(A1)] The \bvp s (\ref{tracbc}) and (\ref{Peps}) have unique, up
  to an additive constant solutions $\By(\Bx)$ and $\By_{\Ge}(\Bx)$,
  respectively.
\item[(A2)] There exists $\Bw\in C^{2}(\bra{\GO}\setminus\{0\})$, such
  \begin{equation}
  \label{far}
  \lim_{\Ge\to 0}\frac{\By_{\Ge}(\Bx)-\By(\Bx)}{\Ge^{n}}=\Bw(\Bx),
\end{equation}
in $C^{1}$ on all compact subsets of $\bra{\GO}\setminus\{0\}$. The
assumption \eqref{far} is motivated by the behavior of radial
solutions in the context of linear elasticity, see our
Appendix~\ref{app:A3} for details. In particular, we also have
\[
\lim_{\Ge\to 0}\frac{\Grad\By_{\Ge}(\Bx)-\Grad\By(\Bx)}{\Ge^{n}}=\Grad\Bw(\Bx)
\]
is uniform on compact subsets of $\GO\setminus\{0\}$. The function
$\By_{\rm out}(\Bx)=\By(\Bx)+\Ge^{n}\Bw(\Bx)$ will be called the outer
solution with the understanding that  the normalized
displacement increment $\Bw(\Bx)$ governs the ``far field'' induced by  the formation of a void. 
\item[(A3)] The inner solution in the immediate vicinity of the cavity is
\begin{equation}
  \label{local}
  \Grad\By_{\Ge}(\Ge\Bz)\to\Grad\By_{\rm in}(\Bz),\quad\Bz\not\in\Go
\end{equation}
uniformly, on compact subsets of $\bb{R}^{n}\setminus\Go$, where 
$\By_{\rm in}$ is the unique solution of
\begin{equation}
  \label{Pinf}
  \begin{cases}
\Div\BP(\Grad\By_{\rm in})=\Bzr, &\Bz\in\bb{R}^{n}\setminus\Go\\  
\BP(\Grad\By_{\rm in})\Bn=\Bzr, &\Bz\in\Md\Go\\  
\Grad\By_{\rm in}(\Bz)\to\Grad\By(0),\quad &|\Bz|\to\infty.
  \end{cases}
\end{equation}
This assumption  regarding  the inner solution in \eqref{local} is again motivated by the behavior of radial
solutions   of linearly elastic isotropic
   exterior problem which  in our case can be written in the form (see again our Appendix \ref{app:A3} for details):  
\begin{equation}
  \label{ulinresc}
  \Bu_{\rm in}^{\rm lin}(\Bz)=\frac{p\Bz}{n\Gk}+\frac{p\Bz}{2\mu(n-1)|\Bz|^{n}}=
  \lim_{\Ge\to 0}\frac{\Bu_{\Ge}(\Ge\Bz)}{\Ge},\quad|\Bz|>1.
\end{equation}
Since the deformation field $\By_{\rm in}(\Bz)$ characterizes the local field in the immediate vicinity of the void,  it will be interpreted as  the inner solution. 

\item[(A4)] There exists a constant $n\times m$ matrix $\BS$, such that
\[
\By_{\rm in}(\Bz)=\Grad\By(0)\Bz+\frac{\BS\Bz}{|\Bz|^{n}}+O(|\Bz|^{-n}),\text{ as }|\Bz|\to\infty,
\]
where $\By(\Bx)$ solves (\ref{tracbc}).
\item[(A5)] 
 One  can replace
  $\Grad\By_{\rm out}(\Ge\Bz)$ with $\Grad\By_{\rm in}(\Bz)$, when
  $|\Bz|$ is large, while $\Ge|\Bz|$ is small.  
\end{itemize}
 
We can now formulate our main result in the form of a theorem:
\begin{theorem}
  \label{th:nlGrif}
  Assume that assumptions (A1)--(A5) hold. Then
\begin{equation}
  \label{DeltaEfin}
\GD E=-\nth{n}\int_{\Md\Go}W(\Grad\By_{\rm in}(\Bz))(\Bn\cdot\Bz)dS(\Bz).
\end{equation}
\end{theorem}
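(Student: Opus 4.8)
The plan is to feed the two equilibrium configurations $\By$ and $\By_{\Ge}$ into the scale-free Clapeyron identity \eqref{Clapeyron} and subtract. First I would apply \eqref{Clapeyron} to $\By$ on $\GO$ and to $\By_{\Ge}$ on the perforated domain $\GO\setminus\Ge\Go$, using the outer traction condition $\BP\Bn=\BP_{\Ge}\Bn=\Bt$ on $\dOm$ and the traction-free condition $\BP_{\Ge}\Bn=\Bzr$ on $\Md(\Ge\Go)$ from \eqref{tracbc} and \eqref{Peps}. After folding in the dead-load work terms of \eqref{Eundam} and \eqref{Edam}, both Clapeyron relations become $E=\nth{n}\int_{\dOm}\{\BP^{*}\Bn\cdot\Bx-(n-1)\Bt\cdot\By\}dS$ and its $\Ge$-analog with an extra cavity integral, so subtraction yields the master identity
\[
  E_{\Ge}-E=\nth{n}\int_{\dOm}\{(\BP^{*}_{\Ge}-\BP^{*})\Bn\cdot\Bx-(n-1)\Bt\cdot(\By_{\Ge}-\By)\}dS+\nth{n}\int_{\Md(\Ge\Go)}\BP^{*}_{\Ge}\Bn\cdot\Bx\,dS,
\]
where $\Bn$ is the outward normal to $\GO\setminus\Ge\Go$. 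The whole problem is now reduced to extracting the $\Ge^{n}$-order part of these two boundary integrals.

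For the cavity integral I would rescale $\Bx=\Ge\Bz$, so that $dS(\Bx)=\Ge^{n-1}dS(\Bz)$ and $\Bx=\Ge\Bz$, turning it into $-\Ge^{n}\int_{\Md\Go}\BP^{*}_{\Ge}(\Ge\Bz)\Hat{\Bn}\cdot\Bz\,dS(\Bz)$ with $\Hat{\Bn}$ the outward normal to $\Go$ (the extra sign reflecting that the outward normal of $\GO\setminus\Ge\Go$ points into the void). By (A3), \eqref{local} gives $\Grad\By_{\Ge}(\Ge\Bz)\to\Grad\By_{\rm in}(\Bz)$, and the traction-free condition in \eqref{Pinf} forces $\BP_{\rm in}\Hat{\Bn}=\Bzr$ on $\Md\Go$, whence $\BP^{*}_{\rm in}\Hat{\Bn}=W(\Grad\By_{\rm in})\Hat{\Bn}$. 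Passing to the limit, the cavity integral contributes precisely $-\nth{n}\int_{\Md\Go}W(\Grad\By_{\rm in}(\Bz))(\Hat{\Bn}\cdot\Bz)\,dS(\Bz)$, which is the right-hand side of \eqref{DeltaEfin}.

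It then remains to prove that the outer-boundary integral is $o(\Ge^{n})$. Using the far-field expansion (A2), \eqref{far}, in $C^{1}$ up to $\dOm$, this integral equals $\Ge^{n}D_{\dOm}+o(\Ge^{n})$ with $D_{\dOm}=\int_{\dOm}\{\dot\BP^{*}\Bn\cdot\Bx-(n-1)\Bt\cdot\Bw\}dS$, where $\dot\BP=W_{\BF\BF}(\Grad\By)[\Grad\Bw]$ and $\dot\BP^{*}$ are the first-order stress increments carried by $\Bw$. The essential point is that, since $\By$ and $\By_{\Ge}$ are both equilibria of a homogeneous energy, $\Div\BP=\Div\BP^{*}=\Div\BP_{\Ge}=\Div\BP^{*}_{\Ge}=\Bzr$ away from the defect, so the increments satisfy $\Div\dot\BP=\Div\dot\BP^{*}=\Bzr$ in $\GO\setminus\{0\}$, while $\dot\BP\Bn=\Bzr$ on $\dOm$ because the outer traction is held fixed. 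These facts let me convert $D_{\dOm}$, by the divergence theorem, into a $\rho$-independent integral over a small sphere $\Md B(0,\rho)$; the identity $\Trc\dot\BP^{*}-(n-1)\BP{:}\Grad\Bw=-\BF{:}\dot\BP=-\Div(\dot\BP^{T}\By)$ is what makes the bulk contributions telescope onto the sphere. Evaluating the sphere integral as $\rho\to0$ with the affine background $\By\approx\Grad\By(0)\Bx$ and the dipole tail $\Bw\approx\BS\Bx/|\Bx|^{n}$ dictated by (A4), the three terms collapse—after the cancellation of the $\dot\BP\Hat{\Bn}\cdot\Bx$ pieces and Euler's relation $(\Grad\Bw)\Bx=(1-n)\Bw$—to $\rho\,\BP(\Grad\By(0)){:}\!\int_{\Md B(0,\rho)}\Grad\Bw\,dS$, and a direct computation gives $\int_{\Md B(0,\rho)}\Grad\Bw\,dS=\Bzr$ for a dipole. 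Hence $D_{\dOm}=0$.

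Combining the two contributions and dividing by $\Ge^{n}$ then yields \eqref{DeltaEfin}. I expect the third step to be the main obstacle: it is where the matched-asymptotic hypotheses (A2)--(A5) must be used quantitatively, both to justify that the $\dOm$ integral is genuinely $\Ge^{n}D_{\dOm}+o(\Ge^{n})$ and to control the subleading (non-affine background, higher-multipole) corrections so that the sphere integral may be taken in the $\rho\to0$ limit. The cancellation $D_{\dOm}=0$ is the precise sense in which, under dead loading, the far field created by the nascent void performs no net work to leading order, so that the entire energy release localizes on the cavity surface.
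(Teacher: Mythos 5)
Your proposal follows essentially the same route as the paper's (first) proof of Theorem~\ref{th:nlGrif}: apply the scale-free Clapeyron identity (\ref{Clapeyron}) to $\By$ on $\GO$ and to $\By_{\Ge}$ on $\GO\setminus\Ge\Go$, use the traction data to reduce both energies to boundary integrals, extract the rescaled cavity integral (via (A3) and $\BP_{\rm in}\Bn=\Bzr$, hence $\BP^{*}_{\rm in}\Bn=W\Bn$) as the right-hand side of (\ref{DeltaEfin}), and then prove that the residual $\dOm$-integral vanishes. Your $D_{\dOm}$ is (up to the factor $n$) exactly the paper's $J$ of (\ref{J})--(\ref{Jw}), and your mechanism for killing it --- two divergence-theorem passes using $\Div\dot\BP=\Div\dot\BP^{*}=\Bzr$ and $\dot\BP\Bn=\Bzr$ on $\dOm$, then evaluation on a shrinking sphere with the dipole tail from (A4) --- is the same as the computation in Appendix~\ref{app:A5}; the moment identity behind your $\int_{\Md B(0,\rho)}\Grad(\BS\Bx/|\Bx|^{n})\,dS=\Bzr$ is the same $\int_{\bb{S}^{n-1}}\tns{\BGx}\,dS=\tfrac{|\bb{S}^{n-1}|}{n}\BI_{n}$ that closes the paper's argument. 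One detail you should make explicit in the endgame: the term $\dot\BP\Bn\cdot\By$ on $\Md B(0,\rho)$ is a priori $O(\rho^{-1})$, since $\dot\BP=O(\rho^{-n})$ there; you must first use $\int_{\Md B(0,\rho)}\dot\BP\Bn\,dS=\int_{\dOm}\dot\BP\Bn\,dS=\Bzr$ to replace $\By$ by $\By-\By(0)$ before taking $\rho\to 0$ --- the paper performs precisely this normalization by noting that $\Bw$ may be shifted by an arbitrary constant. With that adjustment your sketch is a faithful, if slightly reorganized, version of the paper's proof.
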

 Note that a related result was obtained in \cite{ridr67} and its relation with Theorem~\ref{th:nlGrif} is discussed in Appendix~\ref{app:RD}.
  \begin{proof}
Observe first that if we apply  GCT (\ref{Clapeyron})  to the solution
$\By$ of (\ref{tracbc}) 
we obtain  
\[
E=\nth{n}\int_{\dOm}\{\BP\Bn\cdot\By+\BP^{*}\Bn\cdot\Bx\}dS-\int_{\dOm}\Bt\cdot\By\,dS.
\]
In view of  the \bc s $\BP\Bn=\Bt$ we also know that 
\[
\BP^{*}\Bn=W(\Grad\By)\Bn-(\Grad\By)^{T}\BP\Bn=W(\Grad\By)\Bn-(\Grad\By)^{T}\Bt.
\]
Therefore
\begin{equation}
  \label{E0}
  E=
\nth{n}\int_{\dOm}\{\Bt\cdot\By+W(\Grad\By)(\Bn\cdot\Bx)-
\Bt\cdot\Grad\By\Bx\}dS-\int_{\dOm}\Bt\cdot\By\,dS.
\end{equation}
Similarly, since the solution $\By_{\Ge}$  of (\ref{Peps}) has the same \bc s as $\By$ on $\dOm$ and
traction-free \bc $\BP(\Grad\By_{\Ge})\Bn=0$ on $\Md\Go$,  application of the  GCT (\ref{Clapeyron})  finally  gives
\begin{multline}
  \label{Eeps}
  E_{\Ge}=\nth{n}\int_{\dOm}\{\Bt\cdot\By_{\Ge}+W(\Grad\By_{\Ge})(\Bn\cdot\Bx)
-\Bt\cdot\Grad\By_{\Ge}\Bx\}dS\\
-\nth{n}\int_{\Md(\Ge\Go)}W(\Grad\By_{\Ge})(\Bn\cdot\Bx)dS
-\int_{\dOm}\Bt\cdot\By_{\Ge}dS.
\end{multline}

Subtracting (\ref{E0}) and (\ref{Eeps}), dividing by $\Ge^{n}$ and passing to the limit
using our assumptions (\ref{far}) and (\ref{local}), we compute
\begin{align} \label{prel}
\begin{split}
\GD E &=\nth{n}\int_{\dOm}\{\Bt\cdot\Bw+
\av{\BP(\Grad\By),\Grad_{\dOm}\Bw}(\Bn\cdot\Bx)-\Bt\cdot\Grad_{\dOm}\Bw\Bx\}dS\\ 
 &-\nth{n}\int_{\Md\Go}\{W(\Grad\By_{\rm in})(\Bn\cdot\Bz)dS(\Bz)-\int_{\dOm}\Bt\cdot\Bw\,dS.
 \end{split}
\end{align}
This is our preliminary formula for the  desired  energy increment and now we  show that this formula can be simplified and expressed
entirely in terms of the inner  solution. Performing in \eqref{prel}  integration by parts in the terms containing $\Grad_{\dOm}\Bw$ and 
using the formula
\begin{equation}
\int_{\dOm}\Bf\cdot\Grad_{\dOm}\phi\,dS(\Bx)=
\int_{\dOm}\{\phi(\Bf\cdot\Bn)\Grad_{\dOm}\cdot\Bn-\phi\Grad_{\dOm}\cdot\Bf\}dS,
\label{ibp}
\end{equation}
we obtain
\begin{equation}
  \label{DeltaE}
\GD E=J-\nth{n}\int_{\Md\Go}W(\Grad\By_{\rm in})(\Bn\cdot\Bz)dS(\Bz),
\end{equation}
where
\begin{equation}
  \label{J}
J=\nth{n}\int_{\dOm}\{(\Grad_{\dOm}\BP)[\Bx,\Bn,\Bw]-(\Bn\cdot\Bx)(\Grad_{\dOm}\cdot\BP)\cdot\Bw\}dS,
\end{equation}
and
\[
(\Grad_{\dOm}\BP)[\Bx,\Bn,\Bw]=(\Grad_{\dOm}P_{i}^{\Ga}\cdot\Bx)n_{\Ga}w^{i}; 
\]
note that we assumed summation over repeated indices. To finalize the proof of the
theorem we need to show that $J=0$ which is done in our Appendix~\ref{app:A5}.
\end{proof}
\begin{remark} 
We emphasize  that   Theorem~\ref{th:nlGrif} represents the energy release due to the  formation of a traction-free void in terms of the negative of the work of configurational forces as the formula we have actually  used is 
\begin{equation}
  \label{E1}
\GD E=-\nth{n}\int_{\Md\Go}\BP^{*}(\Grad\By_{\in})\Bn\cdot\Bz\,dS.
\end{equation}
This illustrates the physical meaning of the second term in GCT (\ref{Clapeyron}), as the energy associated with creation (in this case destruction) of a body.
\end{remark} 

The derivation presented  above is not the only way how  GCT can be used in this problem  and below we  provide an alternative  proof of Theorem~\ref{th:nlGrif}.
Here we  take advantage of the known  properties of the Weierstrass excess function
$\CE(\BF,\BG)$, defined in (\ref{Weierstrass}).
\begin{proof}[Alternative proof of Theorem~\ref{th:nlGrif}]
We begin by rewriting
the work of the loading device in the problems  (\ref{Eundam}) and (\ref{Edam}) in the form of
volume integral, respectively, as 
$$
  E=\int_{\GO}\{W(\Grad\By)-\av{\BP(\Grad\By),\Grad\By}\}d\Bx,$$
  and 
$$E_{\Ge}=\int_{\GO\setminus(\Ge\Go)}\{W(\Grad\By_{\Ge})-\av{\BP(\Grad\By_{\Ge}),\Grad\By_{\Ge}}\}d\Bx.
$$
Rearranging the terms we can write
$E_{\Ge}-E=I_{1}+I_{2}+I_{3}$, where
\[ I_{1}=\int_{\GO\setminus(\Ge\Go)}\CE(\BF,\BF_{\Ge})d\Bx+
  \int_{\GO\setminus(\Ge\Go)}\av{\BP(\Grad\By)-\BP(\Grad\By_{\Ge}),\Grad\By_{\Ge}-\Grad\By}\}d\Bx,
\]
\[
  I_{2}=\int_{\GO\setminus(\Ge\Go)}\av{\BP(\Grad\By)-\BP(\Grad\By_{\Ge}),\Grad\By}\}d\Bx,
\]
\[
  I_{3}=-\int_{\Ge\Go}\{W(\Grad\By)-\av{\BP(\Grad\By),\Grad\By}\}d\Bx.
\]
Integration by parts in $I_{2}$, while taking the \bc s from
(\ref{tracbc}) and (\ref{Peps}) into account, gives
\[
I_{2}=-\int_{\Md(\Ge\Go)}\BP(\Grad\By)\Bn\cdot\By\,dS(\Bx)=-\int_{\Ge\Go}\av{\BP(\Grad\By),\Grad\By}d\Bx,
\]
where a divergence theorem on $\Ge\Go$ was used in the second equality above.
Thus,
\[
I_{2}+I_{3}=-\int_{\Ge\Go}W(\Grad\By)d\Bx.
\]
Changing variables $\Bz=\Ge\Bx$ and passing to the limit, we obtain
\[
\lim_{\Ge\to 0}\Ge^{-n}(I_{2}+I_{3})=-|\Go|(W(\BF_{0}).
\]
We also have
\[
  \Ge^{-n}I_{1}=\int_{(\Ge^{-1}\GO\setminus\Go)}\{\CE(\Grad\By(\Ge\Bz),\Grad\By_{\Ge}(\Ge\Bz))+\av{\BP(\Grad\By(\Ge\Bz))-\BP(\Grad\By_{\Ge}(\Ge\Bz)),\Grad\By_{\Ge}(\Ge\Bz)-\Grad\By(\Ge\Bz)}\}d\Bz,
\]
To pass to the limit, we observe that $\Grad\By_{\Ge}(\Bx)\to\Grad\By(\Bx)$, as $\Ge\to 0$,
when $|\Bz|\to\infty$, such that $\Bx=\Ge\Bz$ is finite. 

We note that the main   reason for using the Weierstrass excess function $\CE(\BF,\BG)$ in this calculation is the estimate
\begin{equation}
  \label{Wdecay}
|\CE(\BF,\BG)|\le C|\BF-\BG|^{2},
\end{equation}
provided both $\BF$ and $\BG$ belong to a fixed compact subset of $\bb{R}^{m\times n}$, determining the value of the constant $C$ in (\ref{Wdecay}). 

Note next that the second term in $I_{1}$ clearly satisfies the same estimate, and that is what allows us to replace $\Grad\By_{\Ge}(\Ge\Bz)$ by $\Grad\By_{\rm in}(\Bz)$, $\Grad\By(\Ge\Bz)$ by $\BF_{0}=\Grad\By(0)$, and $\Ge^{-1}\GO$ by $\bb{R}^{n}$ in the expression for $\Ge^{-n}I_{1}$. Hence,
\begin{equation}
  \label{I1}
    \lim_{\Ge\to 0}\Ge^{-n}I_{1}=\int_{\bb{R}^{n}\setminus\Go}\{\CE(\BF_{0},\Grad\By_{\rm in}(\Bz))+
  \av{\BP_{0}-\BP(\Grad\By_{\rm in}(\Bz)),\Grad\By_{\rm in}(\Bz)-\BF_{0}}\}d\Bz, 
\end{equation}
where $\BF_{0}=\Grad\By(0)$, and $\BP_{0}=\BP(\Grad\By(0))$.

Since $\Grad\By_{\rm in}$ is a smooth extremal of $\Hat{W}(\BF)=\CE(\BF_{0},\BF)$, and because of the estimate (\ref{Wdecay}), we can use GCT (\ref{Clapeyron}) in the first term in (\ref{I1}) and integrate by parts in the second one, while we remain in the \emph{ infinite} domain $\bb{R}^{n}\setminus\Go$. We obtain
\begin{multline*}
  \int_{\bb{R}^{n}\setminus\Go}\CE(\BF_{0},\Grad\By_{\rm in}(\Bz))d\Bz=-\nth{n}\int_{\Md\Go}
\{(\BP(\Grad\By_{\rm in})-\BP_{0})\Bn\cdot\By_{\rm in}+\CE(\BF_{0},\Grad\By_{\rm in})(\Bn,\Bz)-\\
(\BP(\Grad\By_{\rm in})-\BP_{0})\Bn\cdot\Grad\By_{\rm in}(\Bz)\Bz\}dS=
-\nth{n}\int_{\Md\Go}\{\BP_{0}\Bn\cdot(\Grad\By_{\rm in}\Bz-\By_{\rm in})+
\CE(\BF_{0},\Grad\By_{\rm in})(\Bn,\Bz)\}dS,
\end{multline*}
where we took into account the \bc\ $\BP(\Grad\By_{\rm in})\Bn=0$ on $\Md\Go$.
Similarly,
\begin{multline*}
  \int_{\bb{R}^{n}\setminus\Go}\av{\BP_{0}-\BP(\Grad\By_{\rm in}(\Bz)),\Grad\By_{\rm in}(\Bz)-\BF_{0}}d\Bz=
  -\int_{\Md\Go}\BP_{0}\Bn\cdot(\By_{\rm in}(\Bz)-\BF_{0}\Bz)dS=\\
  |\Go|\av{\BP_{0},\BF_{0}}-\int_{\Md\Go}\BP_{0}\Bn\cdot\By_{\rm in}(\Bz)dS.
\end{multline*}
Expanding $\CE(\BF_{0},\Grad\By_{\rm in})$, we obtain
\begin{multline*}
    -\nth{n}\int_{\Md\Go}\CE(\BF_{0},\Grad\By_{\rm in}(\Bz))(\Bn,\Bz)dS=
  -\nth{n}\int_{\Md\Go}W(\Grad\By_{\rm in}(\Bz))(\Bn,\Bz)d\BS+\\(W(\BF_{0})-\av{\BP_{0},\BF_{0}})|\Go|+
  \nth{n}\int_{\Md\Go}\av{\BP_{0},\Grad\By_{\rm in}}(\Bn,\Bz)dS 
\end{multline*}
Thus,
\[
\GD E=-\nth{n}\int_{\Md\Go}W(\Grad\By_{\rm in})(\Bn\cdot\Bz)dS(\Bz)+J^*,
\]
where
\begin{equation}
  \label{Jst}
  J^*=-\nth{n}\int_{\Md\Go}\left\{\BP_{0}\Bn\cdot(\Grad\By_{\rm in}\Bz-\By_{\rm in})-\av{\BP_{0},\Grad\By_{\rm in}}\Bz\cdot\Bn +n\BP_{0}\Bn\cdot\By_{\rm in}\right\}dS.
\end{equation}
To complete the proof it remains to  show that $J^*=0$, which is  shown by direct computation  in our Appendix~\ref{app:A6}. 
\end{proof}

\begin{remark}
The obtained results allow us to elucidate  a delicate point, which has led to a famous ``Griffith's error" in the analogous calculation conducted  in the context of linear elasticity \cite{grif21,kesi96}. Indeed, if  instead of our way of computing the limit of $\Ge^{-n}I_{2}$ above, we  would have   passed to the limit ``directly'' and replaced $\Grad\By_{\Ge}(\Ge\Bz)$ by $\Grad\By_{\rm in}(\Bz)$, $\Grad\By(\Ge\Bz)$ by $\BF_{0}=\Grad\By(0)$, and $\Ge^{-1}\GO$ by $\bb{R}^{n}$, then  in the re-scaled expression $\Ge^{-n}I_{2}$,  we would have obtained
\begin{multline*}
  \Tld{I}_{2}=\lim_{R\to\infty}\int_{B(0,R)\setminus\Go}\av{\BP_{0}-\BP(\Grad\By_{\rm in}),\BF_{0}}\}d\Bz=
  -\int_{\Md\Go}\BP_{0}\Bn\cdot\BF_{0}\Bz\,dS+\\
  \lim_{R\to\infty}\int_{\Md B(0,R)}(\BP_{0}-\BP(\Grad\By_{\rm in}))\Bn\cdot\BF_{0}\Bz\,dS=
  -|\Go|\av{\BP_{0},\BF_{0}}+\\
  \lim_{R\to\infty}\int_{\Md B(0,R)}(\BP_{0}-\BP(\Grad\By_{\rm in}))\Bn\cdot\BF_{0}\Bz\,dS, 
\end{multline*}
which suggests that   the discrepancy $G$ between the correct answer $-|\Go|\av{\BP_{0},\BF_{0}}$ and the result obtained by Griffith, who originally followed  the ``direct" approach presented above, was ``hiding at infinity'':
 \begin{equation}
  \label{GrerrG}
  G=\lim_{R\to\infty}\int_{\Md B(0,R)}(\BP_{0}-\BP(\Grad\By_{\rm in}))\Bn\cdot\BF_{0}\Bz\,dS.
 \end{equation}
The value of $G$ in the special case  of linear elasticity studied by Griffith in \cite{grif21}  is computed explicitly  in Appendix~\ref{app:A8}.  Griffith himself corrected the formula for $\GD E$ in the follow up paper  \cite{grif24}, however, since he did not provide any details, the correct result was independently reconstructed by many other authors \cite{sned46,eshe51,irwin57,buec58,sand60,spen65,ridr67,sile67,cher67,bies68}.
\end{remark}

\subsection{Case study: linear elasticity}
\label{sub:linel}

 As we have already mentioned, the defining commonality of all nonlinear versions of  Clapeyron's Theorem is that they express the elastic energy of a stable configuration in terms of boundary tractions, which implies both physical and configurational tractions. In this section we will juxtapose the classical Clapeyron's Theorem (\ref{Clap}) against the  formulas (\ref{Clapeyron}), (\ref{semiC}) and (\ref{semiL}) which are all relevant,  given that   in classical  linear elasticity the energy density function is both scale-invariant and 2-homogeneous. Indeed, consider the energy density
\begin{equation}
  \label{linel}
  W(\Bx,\BF)=\hf\av{\SFC(\Bx)\BF_{\rm sym},\BF_{\rm sym}},
\end{equation}
where 
\[
  \BF_{\rm sym}=\frac{1}{2}( \BF+\BF^{T}),
\]
 and where the tensor of elastic moduli
$\SFC(\Bx)$ takes values in the space of positive definite quadratic forms on $\Sym(\bb{R}^{n})$---the space of $n\times n$ symmetric real matrices, and can otherwise be an arbitrary bounded and measurable function of $\Bx$. Note first that according to  (\ref{linel}) the function 
$W(\Bx,\BF)$ is $p$-homogenous in the sense of (\ref{phom}) with $p=2$.   This  means, in particular,  that, given a   solution of
\[
\Div(\SFC(\Bx)e(\Bu))=0,
\]
understood   in the sense of distributions, we can apply formula (\ref{semiC})
and obtain the classical Clapeyron theorem (\ref{Clap})
\begin{equation}\label{classclap}
E[\Bu]:=\hf\int_{\GO}\av{\SFC(\Bx)e(\Bu),e(\Bu)}d\Bx=\hf\int_{\dOm}\BGs(\Bx)\Bn\cdot\Bu(\Bx)dS(\Bx),
\end{equation}

Note next that when the  body is homogeneous and the tensor of elastic moduli $\SFC$ is constant, the energy density function (\ref{linel}) is both scale-free and 2-homogeneous in the sense of (\ref{scalefree}) and (\ref{phom}), with $p=2$, respectively. In this case  one can also show that all equilibrium solutions are smooth and therefore, by the Noether formula (\ref{Noether0}), they are also stationary. Hence, in addition to (\ref{semiC}) that has the form (\ref{classclap}) in the linearly elastic context, formulas (\ref{semiL}), (\ref{PPst}), and (\ref{Clapeyron}) are also applicable. However, in contrast to (\ref{classclap}), these formulas involve the skew-symmetric part $w(\Bu)$ of $\Grad\Bu$,
\[
w(\Bu)=\frac{1}{2} (\Grad\Bu-(\Grad\Bu)^{T}),
\]
representing the infinitesimal rotations. Indeed, formula (\ref{semiL}) takes the form
\begin{equation}
  \label{Clapeyron223}
  \hf\int_{\GO}\av{\SFC e(\Bu),e(\Bu)}d\Bx=\nth{n-2}\int_{\dOm}\left\{
\hf\av{\BGs,e(\Bu)}\Bn\cdot\Bx-\BGs\Bn\cdot (e(\Bu)+w(\Bu))\Bx\right\}dS(\Bx).
\end{equation}
Formula (\ref{PPst}) relating the \rhs s of (\ref{classclap}) and (\ref{Clapeyron223}) can be now  rewritten as an (apparently new)  integral relation satisfied on the boundary by the antisymmetric part of the gradient of an equilibrium solution:
\begin{equation}
  \label{neweq}
  \int_{\dOm}\BGs\Bn\cdot w(\Bu)\Bx\,dS=\int_{\dOm}\left\{\hf\av{\BGs,e(\Bu)}(\Bn\cdot\Bx)-
  \BGs\Bn\cdot e(\Bu)\Bx-\frac{n-2}{2}\BGs\Bn\cdot\Bu\right\}dS.
\end{equation}
In the special case of $n=2$ formula (\ref{neweq}) simplifies
\[
\int_{\dOm}\BGs\Bn\cdot w(\Bu)\Bx\,dS=\int_{\dOm}\left\{\hf\av{\BGs,e(\Bu)}(\Bn\cdot\Bx)-
  \BGs\Bn\cdot e(\Bu)\Bx\right\}dS.
\]
Finally, formula (\ref{Clapeyron}), representing the  linear version of GCT, takes the   form
\begin{equation}
  \label{LinClapew}
  \hf\int_{\GO}\av{\SFC(\Bx)e(\Bu),e(\Bu)}d\Bx=\nth{n}\int_{\dOm}\left\{\BGs\Bn\cdot(\Bu-w(\Bu)\Bx)-
    \BGs\Bn\cdot e(\Bu)\Bx+\hf\av{\BGs,e(\Bu)}(\Bn\cdot\Bx)\right\}dS. 
\end{equation}
which is clearly different from the classical Clapeyron's Theorem.

Note also that the conservation law (\ref{pi}) can be now written as a differential equation for $w(\Bu)$ in terms of $e(\Bu)$:
\[
\Div(\BGs w(\Bu)\Bx)=\Div\left(\hf\av{\BGs,e(\Bu)}\Bx-\BGs e(\Bu)\Bx-\frac{n-2}{2}\BGs\Bu\right).
\]
Interestingly, we can also rewrite it as a differential equation for $\Grad\Bu$
\begin{equation}
  \label{newdeq}
  \Div(\BGs\Grad\Bu\Bx)=\hf\Div(\av{\BGs,e(\Bu)}\Bx-(n-2)\BGs\Bu),
\end{equation}
which in two space dimension simplifies as a representation in terms $e(\Bu)$ only:
\[
\Div(\BGs\Grad\Bu\Bx)=\hf\Div(\av{\BGs,e(\Bu)}\Bx).
\]
To the best of our knowledge,   formulas (\ref{Clapeyron223})--(\ref{newdeq}) are new. It is likely that they did not appear in the literature on linear elasticity due to the rather striking appearance in them  of the skew-symmetric part of the gradient $w(\Bu)$. Nonetheless, formulas (\ref{Clapeyron223})-- (\ref{neweq})  may be useful in applications, as we show, for instance,  in our Appendix~\ref{app:A7}.

Despite their fundamentally  variational origin,  formulas (\ref{Clapeyron223})-- (\ref{neweq})  could have been be also discovered  by direct computation.  For instance, the direct differentiation in (\ref{newdeq}), taking into account the equilibrium equations $\Div\BGs=0$ and $\BGs^{T}=\BGs$ gives
\[
\av{\BGs,\Grad\Grad\Bu\Bx}=\hf\Grad\av{\BGs,e(\Bu)}\cdot\Bx.
\]
This equation can be now  verified directly using a version of Betti reciprocity which still relies essentially on the fact that  the  elastic tensor is constant
\[
\av{\dif{\BGs}{x^{i}},\BGve}=\av{\BGs, \dif{\BGve}{x^{i}}}.
\]

If we turn to applications, i is  instructive to first of all reconsider  in the context of linear elasticity the problem of void formation which we have  discussed in the general nonlinear setting in Section~\ref{sub:void}. 

Recall that our goal in this setting  is  to compute the difference between the potential energies
\[
E=\hf\int_{\GO}\av{\SFC e(\Bu),e(\Bu)}d\Bx-\int_{\dOm}\Bt\cdot\Bu\,dS,
\]
and
\[
E_{\Ge}=\hf\int_{\GO_{\Ge}}\av{\SFC e(\Bu_{\Ge}),e(\Bu_{\Ge})}d\Bx-\int_{\dOm}\Bt\cdot\Bu_{\Ge}dS
\]
of the undamaged and damaged bodies loaded by boundary tractions
\begin{equation}
  \label{uueps}
  \left\{
  \begin{array}[c]{cl}
\Div\SFC e(\Bu)=\Bzr,\quad &\Bx\in\GO\\
\SFC e(\Bu)\Bn=\Bt(\Bx),\quad &\Bx\in\dOm
  \end{array}\right.,\qquad
  \left\{
  \begin{array}[c]{cl}
\Div\SFC e(\Bu_{\Ge})=\Bzr,\quad &\Bx\in\GO\setminus\Ge\Go\\
\SFC e(\Bu_{\Ge})\Bn=\Bzr,\quad &\Bx\in\Md(\Ge\Go)\\
\SFC e(\Bu_{\Ge})\Bn=\Bt(\Bx),\quad &\Bx\in\dOm.
  \end{array}\right.
\end{equation}
Specifically, we would like  to compute
\begin{equation}
  \label{DeltaElin}
  \GD E=\lim_{\Ge\to 0}\frac{E_{\Ge}-E}{\Ge^{n}}. 
\end{equation}

Instead of adapting to the case of linear elasticity our general Theorem~\ref{th:nlGrif} which relies on GCT, we  use here the classical Clapeyron's Theorem (\ref{classclap}) to prove  an apparently different formula for $\GD E$: 
\begin{theorem}
  \label{th:mazya}
  \begin{equation}
    \label{mazya}
\GD E=-\hf\int_{\Md\Go}\BGs(0)\Bn(\Bz)\cdot\Bu_{\rm in}(\Bz)dS(\Bz),
  \end{equation}
  where $\Bu_{\rm in}$ solves an exterior \bvp
\begin{equation}
  \label{Linf}
  \begin{cases}
\Div\SFC e(\Bu_{\rm in})=\Bzr, &\Bz\in\bb{R}^{n}\setminus\Go\\
(\SFC e(\Bu_{\rm in}))\Bn=\Bzr, &\Bz\in\Md\Go\\
e(\Bu_{\rm in})\to\BGve_{0}, &|\Bz|\to\infty,
  \end{cases}
\end{equation}
  where $\BGve_{0}=e(\Bu)(0)$, and where $\BGs(\Bx)=\SFC e(\Bu(\Bx))$.
\end{theorem}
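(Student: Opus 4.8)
The plan is to bypass the configurational machinery entirely and work directly with the \emph{classical} Clapeyron identity \eqref{classclap}, which applies verbatim to both the intact and the ablated body. Applying \eqref{classclap} to the solution $\Bu$ of the left problem in \eqref{uueps}, whose only loaded boundary is $\dOm$ with $\SFC e(\Bu)\Bn=\Bt$, gives $\hf\int_{\GO}\av{\SFC e(\Bu),e(\Bu)}d\Bx=\hf\int_{\dOm}\Bt\cdot\Bu\,dS$, so that $E=-\hf\int_{\dOm}\Bt\cdot\Bu\,dS$. Applying it to the solution $\Bu_{\Ge}$ of the right problem in \eqref{uueps} over $\GO\setminus\Ge\Go$, and noting that the extra boundary component $\Md(\Ge\Go)$ is traction-free and hence contributes nothing to the Clapeyron surface integral, yields in the same way $E_{\Ge}=-\hf\int_{\dOm}\Bt\cdot\Bu_{\Ge}\,dS$. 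Subtracting, the bulk energies disappear and I am left with the \emph{exact} identity $E_{\Ge}-E=-\hf\int_{\dOm}\Bt\cdot(\Bu_{\Ge}-\Bu)\,dS$, valid for every admissible $\Ge$.

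Next I would transport this boundary integral from $\dOm$ to the cavity surface $\Md(\Ge\Go)$ using Betti's reciprocal theorem, which holds here because the elastic tensor $\SFC(\Bx)$ is pointwise symmetric---the very symmetry that underlies \eqref{classclap}. Writing $\BGs[\Bv]=\SFC e(\Bv)$ and setting $\Bd_{\Ge}=\Bu_{\Ge}-\Bu$, both $\Bu$ and $\Bd_{\Ge}$ are equilibrium fields on $\GO\setminus\Ge\Go$; on $\dOm$ one has $\BGs[\Bd_{\Ge}]\Bn=\Bt-\Bt=0$, while on $\Md(\Ge\Go)$ the void condition $\BGs[\Bu_{\Ge}]\Bn=0$ forces $\BGs[\Bd_{\Ge}]\Bn=-\BGs[\Bu]\Bn$. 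Betti's identity $\int_{\Md(\GO\setminus\Ge\Go)}\BGs[\Bu]\Bn\cdot\Bd_{\Ge}\,dS=\int_{\Md(\GO\setminus\Ge\Go)}\BGs[\Bd_{\Ge}]\Bn\cdot\Bu\,dS$ then collapses, after inserting these data and recombining $\Bd_{\Ge}+\Bu=\Bu_{\Ge}$, to $\int_{\dOm}\Bt\cdot\Bd_{\Ge}\,dS=\int_{\Md(\Ge\Go)}\BGs[\Bu]\Bn\cdot\Bu_{\Ge}\,dS$, where $\Bn$ is now the outward normal of the void. Hence $E_{\Ge}-E=-\hf\int_{\Md(\Ge\Go)}\BGs[\Bu]\Bn\cdot\Bu_{\Ge}\,dS$ exactly, with no far-field contribution and no need for the outer expansion (A2).

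Finally I would rescale by $\Bx=\Ge\Bz$, so that $dS(\Bx)=\Ge^{n-1}dS(\Bz)$ and
\[
\frac{E_{\Ge}-E}{\Ge^{n}}=-\hf\int_{\Md\Go}\BGs[\Bu](\Ge\Bz)\Bn\cdot\frac{\Bu_{\Ge}(\Ge\Bz)}{\Ge}\,dS(\Bz).
\]
Passing to the limit using the continuity $\BGs[\Bu](\Ge\Bz)\to\BGs(0)$ and the inner expansion $\Bu_{\Ge}(\Ge\Bz)/\Ge\to\Bu_{\rm in}(\Bz)$ of \eqref{Linf} then produces \eqref{mazya}. The main obstacle is exactly the justification of this last limit: the boundary-layer claim that the \emph{rescaled displacements} converge (not merely their gradients, as in (A3)) uniformly on the compact surface $\Md\Go$, which is supplied by the explicit linearly elastic exterior analysis of Appendix~\ref{app:A3}. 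A reassuring check I would record is that the right-hand side of \eqref{mazya} is invariant under adding any rigid motion to $\Bu_{\rm in}$: a translation drops out since $\int_{\Md\Go}\BGs(0)\Bn\,dS=\Bzr$, and an infinitesimal rotation drops out since $\BGs(0)$ is symmetric while its generator is skew; thus the residual rigid-motion ambiguity in $\Bu_{\rm in}$ is immaterial.
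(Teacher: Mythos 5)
Your argument is correct in substance but takes a genuinely different route from the paper's, and the difference is worth recording. The paper introduces the normalized difference field $\Bw_{\Ge}=\Bu_{\Ge}-\Bu-\Ba_{\Ge}$, expands its quadratic energy, applies Clapeyron a second time to $\Bw_{\Ge}$, and must then carry the extra volume term $\hf\int_{\Ge\Go}\av{\SFC e(\Bu),e(\Bu)}d\Bx$ to the end, where it cancels against the $\Grad\Bu(0)\Bz$ part of the cavity integral. Your Betti-reciprocity step lands directly on the exact identity $E_{\Ge}-E=-\hf\int_{\Md(\Ge\Go)}\BGs\Bn\cdot\Bu_{\Ge}\,dS$ (with $\BGs=\SFC e(\Bu)$ and $\Bn$ the outward normal of the void), with no compensating volume term; I checked the reciprocity bookkeeping and the signs, and this identity agrees with the paper's intermediate formula after one integration by parts over $\Ge\Go$. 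Both proofs ultimately rest on the major symmetry of $\SFC$, but yours isolates it in a single transparent step and avoids the cancellation entirely.

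The one genuine gap is the limit passage. The claim $\Bu_{\Ge}(\Ge\Bz)/\Ge\to\Bu_{\rm in}(\Bz)$ is false in general: near the cavity $\Bu_{\Ge}(\Ge\Bz)=\Bu(0)+O(\Ge)$, so the rescaled displacement diverges like $\Bu(0)/\Ge$ unless $\Bu(0)=\Bzr$. Appendix~\ref{app:A3} cannot supply the convergence you need, since it treats only the radial example, where $\Bu(\Bx)=p\Bx/(n\Gk)$ happens to vanish at the origin; moreover \eqref{Linf} determines $\Bu_{\rm in}$ only up to an infinitesimal rigid motion, so the asserted limit is not even a well-posed statement. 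The repair is short and uses tools already on your table: since $\Div\BGs=\Bzr$ in $\Ge\Go$, one has $\int_{\Md(\Ge\Go)}\BGs\Bn\,dS=\Bzr$ \emph{exactly}, so any constant vector --- in particular the divergent part $\Bu_{\Ge}(\Ge\Bz_{0})/\Ge$ --- can be subtracted from the integrand at no cost; the remaining field $(\Bu_{\Ge}(\Ge\Bz)-\Bu_{\Ge}(\Ge\Bz_{0}))/\Ge$ does converge (to $\Bu_{\rm in}(\Bz)-\Bu_{\rm in}(\Bz_{0})$ modulo a rigid motion), and your own closing observation --- that translations and infinitesimal rotations integrate to zero against $\BGs(0)\Bn$ --- removes the residual ambiguity. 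This is precisely the bookkeeping the paper performs through the constant $\Ba_{\Ge}$ in its definition of $\Bw_{\Ge}$; without it, your final step does not go through as written.
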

For equivalent results in different specialized settings, see  \cite{sned46,eshe51,irwin57,buec58,sand60,spen65,ridr67,cher67,bies68}.
\begin{proof}
We first note that by the classical Clapyeron theorem (\ref{Clap})
\begin{equation}
  \label{linE}
  E=-\hf\int_{\GO}\av{\SFC e(\Bu),e(\Bu)}d\Bx,
\end{equation}
and
\begin{equation}
  \label{linEeps}
 E_{\Ge}=-\hf\int_{\GO_{\Ge}}\av{\SFC e(\Bu_{\Ge}),e(\Bu_{\Ge})}d\Bx,
\end{equation}
where $\Bu_{\Ge}$ and $\Bu(\Bx)$ are the displacement vectors in the
damaged and the undamaged domains, respectively. 

Let $\Bw_{\Ge}(\Bx)=\Bu_{\Ge}(\Bx)-\Bu(\Bx)-\Ba_{\Ge}$, where
$\Ba_{\Ge}=\Bu_{\Ge}(\Ge\Bz_{0})-\Bu(\Ge\Bz_{0})$, and $\Bz_{0}\in\Md\Go$
is an arbitrarily selected, and henceforth fixed, point.
Then, the vector field $\Bw_{\Ge}$ satisfies
\begin{equation}
  \label{weps}
\left\{\begin{array}[c]{cl}
\Div\SFC e(\Bw_{\Ge})=\Bzr,\quad &\Bx\in \GO\setminus\Ge\Go\\
\SFC e(\Bw_{\Ge})\Bn=-\BGs(\Bx)\Bn,\quad &\Bx\in\Md(\Ge\Go)\\
\SFC e(\Bw_{\Ge})\Bn=\Bzr,\quad &\Bx\in\Md\GO.
  \end{array}\right.
\end{equation}
Observe that
\[
\hf\int_{\GO_{\Ge}}\av{\SFC e(\Bw_{\Ge}),e(\Bw_{\Ge})}d\Bx= \hf\int_{\GO_{\Ge}}\av{\SFC
e(\Bu_{\Ge}),e(\Bu_{\Ge})}d\Bx+ \hf\int_{\GO_{\Ge}}\av{\SFC e(\Bu),e(\Bu)}d\Bx-
\int_{\GO_{\Ge}}\av{\SFC e(\Bu_{\Ge}),e(\Bu)}d\Bx
\]
Integrating the last term by parts, and using $\SFC e(\Bu_{\Ge})\Bn=0$ on $\Md(\Ge\Go)$, we obtain
\[
\int_{\GO_{\Ge}}\av{\SFC e(\Bu_{\Ge}),e(\Bu)}d\Bx=
\int_{\dOm}\BGs(\Bx)\Bn\cdot\Bu(\Bx)dS(\Bx)=\int_{\GO}\av{\SFC
e(\Bu),e(\Bu)}d\Bx.
\]
Therefore,
\[
\hf\int_{\GO_{\Ge}}\av{\SFC e(\Bw_{\Ge}),e(\Bw_{\Ge})}d\Bx=E-E_{\Ge}-\hf\int_{\Ge\Go}\av{\SFC e(\Bu),e(\Bu)}d\Bx.
\]
Then
\[
E_{\Ge}-E=-\hf\int_{\GO_{\Ge}}\av{\SFC e(\Bw_{\Ge}),e(\Bw_{\Ge})}d\Bx-
\hf\int_{\Ge\Go}\av{\SFC e(\Bu),e(\Bu)}d\Bx.
\]
Applying Clapeyron's theorem again, this time to $\Bw_{\Ge}$, we get
\[
E_{\Ge}-E=-\hf\int_{\Md\Go_{\Ge}}\BGs(\Bx)\Bn\cdot\Bw_{\Ge}dS(\Bx)-
\hf\int_{\Ge\Go}\av{\SFC e(\Bu),e(\Bu)}d\Bx,
\]
where $\Bn$ is the outer unit normal to $\Md\Go$.
Now let us change variables $\Bx=\Ge\Bz$ and obtain
\[
E_{\Ge}-E=-\frac{\Ge^{n}}{2}\left(
\int_{\Md\Go}\BGs(\Ge\Bz)\Bn(\Bz)\cdot\Tld{\Bu}_{\Ge}(\Bz)dS(\Bz)-
\int_{\Go}\av{\SFC e(\Bu)(\Ge\Bz),e(\Bu)(\Ge\Bz)}d\Bz\right),
\]
where $\Tld{\Bu}_{\Ge}(\Bz)=\Ge^{-1}\Bw_{\Ge}(\Ge\Bz)$.
We observe that
$\Grad\Tld{\Bu}_{\Ge}=\Grad\Bu_{\Ge}(\Ge\Bz)-\Grad\Bu(\Ge\Bz)\to
\Grad\Bu_{\rm in}(\Bz)-\Grad\Bu(0)$, uniformly on compact subsets of
$\bb{R}^{n}\setminus\{0\}$, while $\Tld{\Bu}_{\Ge}(\Bz_{0})=0$. It follows
that $\Tld{\Bu}_{\Ge}(\Bz)\to\Bu_{\rm in}(\Bz)-\Grad\Bu(0)\Bz-\Ba_{0}$,
where $\Ba_{0}=\Bu_{\rm in}(\Bz_{0})-\Grad\Bu(0)\Bz_{0}$.
Therefore,
\[
\GD E=\lim_{\Ge\to 0}\frac{E_{\Ge}-E}{\Ge^{n}}=
-\frac{1}{2}\int_{\Md\Go}\BGs(0)\Bn(\Bz)\cdot(\Bu_{\rm in}(\Bz)-\Grad\Bu(0)\Bz-\Ba_{0})dS(\Bz)
-\frac{|\Go|}{2}\av{\BGs(0),\BGve_{0}}.
\]
Observing that
\[
\int_{\Md\Go}\BGs(0)\Bn(\Bz)\cdot\Ba_{0}\,dS(\Bz)=0,
\]
and
\[
  \int_{\Md\Go}\BGs(0)\Bn(\Bz)\cdot\Grad\Bu(0)\Bz\,dS(\Bz)=|\Go|\av{\BGs(0),\Grad\Bu(0)}=
  |\Go|\av{\BGs(0),\BGve_{0}},
\]
we obtain
\[
\GD E=\lim_{\Ge\to 0}\frac{E_{\Ge}-E}{\Ge^{n}}=
-\frac{1}{2}\int_{\Md\Go}\BGs(0)\Bn(\Bz)\cdot\Bu_{\rm in}(\Bz) \,dS(\Bz).
\]
\end{proof}
The apparent difference between the conclusions  of a  general  Theorem~\ref{th:nlGrif} and  a special  Theorem~\ref{th:mazya} is discussed in our Appendix \ref{app:A7}. Specifically, we show  that formulas  (\ref{DeltaEfin}) and (\ref{mazya}) are in full agreement, and that their different appearance is due to the fact that in their derivations we used two very different versions of Clapeyron's theorem. Specifically, in the proof of Theorem~\ref{th:nlGrif} we relied on the version (\ref{Clapeyron}) expressing
the scale-free nature of elasticity theory. Instead, in our proof of the Theorem~\ref{th:mazya} we used  the version (\ref{semiC}) manifesting the homogeneity of degree 2 of linear elastic energy.
 The different  forms  \eqref{Clapeyron223} and (\ref{LinClapew}) taken by the nonlinear versions of the  Claperon's Theorem  in the setting of linear elasticity relate the two approaches. It is also important to mention that a different path of addressing the void formation problem in the nonlinear setting was taken in \cite{ridr67} where the authors obtained the result  more  resembling (\ref{mazya}), than (\ref{DeltaEfin}). This is discussed  vis-a-vis our own approach in Appendix~\ref{app:RD}.

\section{Conclusions}
\setcounter{equation}{0}
\label{sec:conc}
In this paper we   revisited the classical Clapeyron's Theorem of  the  linear elasticity theory
that expresses the energy of an equilibrium configuration in terms of the work of physical forces on the boundary.  We derived various  nonlinear analogs of this well known engineering result  by revealing its  link  with  the classical work of E. Noether on variational symmetries in nonlinear field theories.  We   showed that in view of this connection,  the obtained  family of formulas expressing ``energy in terms of work''  can be  interpreted as a rather general statement within Calculus of Variations reflecting partial  variational symmetries of the corresponding theories. Specifically, we showed that in the case of scale invariance  combined with $p$-homogeneity one  obtains   not only a direct generalization of the Clapeyron's theorem of linear elasticity containing the stress tensor but also its dual version expressed through the  corresponding Eshelby tensor.    In this respect an important   mechanical aspect   of our work  is that it   brings together   physical and configurational forces and allows one to  differentiate between the part of the elastic energy stored due to the action of applied loads and the ``cold work type" part of the energy emerging due to elastic incompatibility. More generally,  the proposed  broader reading of the Clapeyron's theorem allowed us to link it to various apparently unrelated classical results,   including the Green's formula in nonlinear elastostatics   and the expression of  the minimal value of the equilibrium energy through the Weierstrass excess function.  Of particular interest are the applications of our generalized Clapeyron Theorem (GCT) to elastodynamics where it naturally accounts for the inevitable  formation of  shocks.  We have also shown that our study   provides new  useful tools allowing one to advance in various  applied variational problems where explicit solutions are not readily available. In particular,  we showed that  GCT can be potentially useful in generating stability limits for elastic solids undergoing fracture  and phase transitions.

\medskip

\textbf{Acknowledgments.} YG was supported by the
National Science Foundation under Grant No. DMS-2305832. The work of LT was supported by the French   grant ANR-10-IDEX-0001-02 PSL. The authors appreciate support from  the  Oberwolfach Fellows program. 

\appendix
\section{Appendix}
\setcounter{equation}{0}
\label{sec:apps}

\subsection{The far-field asymptotics of solutions in \eqref{rODE}}
\label{app:A2}
Here we prove the asymptotics (\ref{etainf}) of the solutions of \eqref{rODE} by substituting the ansatz
\begin{equation}
  \label{etasym}
  \eta(r)=f_{\infty}r+Ar^{-\Ga}
\end{equation}
into \eqref{rODE}. The asymptotics at $r=\infty$ of various terms in \eqref{rODE} are as follows:
\[
  w_{11}\eta''=w_{11}^{\infty}\frac{\Ga(\Ga+1)A}{r^{\Ga+2}}+o(r^{-(\Ga+2)}),\quad
 w_{12}\left(\frac{\eta}{r}\right)'=-w_{12}^{\infty}\frac{(\Ga+1)A}{r^{\Ga+2}}+o(r^{-(\Ga+2)}),
\]
\[
  w_{1}=w_{1}^{\infty}-w_{11}^{\infty}\frac{\Ga A}{r^{\Ga+1}}
  +(n-1)w_{12}^{\infty}\frac{A}{r^{\Ga+1}}+o(r^{-(\Ga+1)}),
\]
\[
  w_{2}=w_{2}^{\infty}-w_{21}^{\infty}\frac{\Ga A}{r^{\Ga+1}}+w_{22}^{\infty}\frac{A}{r^{\Ga+1}}+
  (n-2)w_{23}^{\infty}\frac{A}{r^{\Ga+1}}+o(r^{-(\Ga+1)}),
\]
where
\[
w_{i}^{\infty}=\dif{w}{v_{i}}(f_{\infty},\ldots,f_{\infty}),\quad w_{ij}^{\infty}=\mix{w}{v_{i}}{v_{j}}(f_{\infty},\ldots,f_{\infty}).
\]
Thus, the indicial equation for $\Ga$ is, taking into account that
\[
  w_{1}^{\infty}=w_{2}^{\infty},\quad w_{23}^{\infty}=w_{21}^{\infty}=w_{12}^{\infty},\quad w_{22}^{\infty}=w_{11}^{\infty},
\]  
\[
  w_{11}^{\infty}\Ga(\Ga+1)-(n-1)(\Ga+1)w_{12}^{\infty}+(n-1)(\Ga+1)(w_{12}^{\infty}-w_{11}^{\infty})=0.
\]
Dividing both sides by $\Ga+1$ (since $\Ga=-1$ corresponds to the first term $f_{\infty}r$ in the asymptotics (\ref{etasym})) we obtain
\[
w_{11}^{\infty}\Ga-(n-1)w_{12}^{\infty}+(n-1)(w_{12}^{\infty}-w_{11}^{\infty})=0.
\]
This becomes $w_{11}^{\infty}(\Ga-n+1)=0$, and therefore, $\Ga=n-1$, as claimed.

\subsection{Outer solution for a spherical void in the linear elastic media}
 \label{app:A3}
Here we motivate the assumption that
\begin{equation}
  \label{far1}
  \lim_{\Ge\to 0}\frac{\By_{\Ge}(\Bx)-\By(\Bx)}{\Ge^{n}}=\Bw(\Bx),
\end{equation}
by the observed behavior of solutions in a
radially symmetric isotropic linear elastic example, where
\[
W_{\rm lin}(\BF)=\frac{\Gl}{2}(\Trc\BF)^{2}+\mu|\BF_{\rm sym}|^{2},
\]
$\GO=B(0,1)$ is a unit ball under uniform tension $\BGs\Bn=p\Bn$, $|\Bx|=1$, and
$\Go=B(0,1)$ is a spherical cavity. In this case the displacements in the original unablated medium are 
\begin{equation}
  \label{ulin0}
  \Bu(\Bx)=\frac{p\Bx}{n\Gk}, 
\end{equation}
where $\Gk$ is the bulk modulus of the material, while after ablation
they are given by
\begin{equation}
  \label{ulin}
  \Bu^{\rm lin}_{\Ge}(\Bx)=\frac{p\Bx}{n\Gk(1-\Ge^{n})}
  +\frac{p\Ge^{n}\Bx}{2\mu(n-1)(1-\Ge^{n})|\Bx|^{n}}. 
\end{equation}
We easily verify that
\begin{equation}
  \label{wlin}
  \Bw^{\rm lin}(\Bx)=\lim_{\Ge\to   0}\frac{\Bu_{\Ge}(\Bx)-\Bu(\Bx)}{\Ge^{n}}=\frac{p\Bx}{n\Gk}+
  \frac{p\Bx}{2\mu(n-1)|\Bx|^{n}},\quad\Bx\in B(0,1)\setminus\{0\},
\end{equation}
where the convergence is uniform on compact subsets of $\bra{B(0,1)}\setminus\{0\}$.

\subsection{Completion of the proof of Theorem~\ref{th:nlGrif}}
\label{app:A5}
To finalize the proof of the
Theorem~\ref{th:nlGrif} we show here that indeed   $J=0$.  
We first observe that in (\ref{J}) we can replace $\Grad_{\dOm}\BP$ with $\Grad\BP$ and $\Grad_{\dOm}\cdot\BP$ with $\Div\BP=0$. Hence,
\begin{equation}
  \label{Jw}
    J=\int_{\dOm}\dif{P_{i}^{\Ga}(\Grad\By)}{x^{\Gb}}x^{\Gb}n_{\Ga}w^{i}dS(\Bx). 
\end{equation}
We first observe that
\[
\int_{\dOm}\dif{P_{i}^{\Ga}(\Grad\By)}{x^{\Gb}}x^{\Gb}n_{\Ga}dS=\int_{\GO}\dif{}{x_{\Ga}}\left(\dif{P_{i}^{\Ga}(\Grad\By)}{x^{\Gb}}x^{\Gb}\right)dS=0,
\]
since $\Div\BP(\Grad\By(\Bx))=0$ in $\GO$. It follows that we can replace $\Bw$ in the formula for $J$ with $\Bw+\Bw_{0}$ for any constant $\Bw_{0}$. This constant will be chosen later as convenient.
For now $\Bw$ will stand for $\Bw+\Bw_{0}$. 

Now we want to apply the divergence theorem to the \rhs\ of (\ref{Jw}), except we need to keep in mind that $\Bw(\Bx)$ has a singularity at $\Bx=0$. Therefore, we can apply the divergence theorem to $\GO\setminus(B(0,M\Ge))$, where $M$ is a large but fixed constant.
We have
\[
  J=\lim_{\Ge\to 0}\int_{\GO\setminus(B(0,M\Ge))}\dif{P_{i}^{\Ga}(\Grad\By)}{x^{\Gb}}
  \dif{w^{i}}{x^{\Ga}}x^{\Gb}d\Bx+\int_{\Md(B(0,M\Ge))}\dif{P_{i}^{\Ga}(\Grad\By)}
  {x^{\Gb}}x^{\Gb}n_{\Ga}w^{i}dS(\Bx),
\]
where in the last integral $\Bn(\Bx)$ stands for the outward unit normal to $\Md(B(0,M\Ge))$. 
Taking into account that
\[
\dif{P_{i}^{\Ga}(\Grad\By)}{x^{\Gb}}=\SFL_{ij}^{\Ga\Gg}(\Grad\By)\mix{y^{j}}{x^{\Gg}}{x^{\Gb}},\quad
\SFL_{ij}^{\Ga\Gg}(\BF)=\mix{W(\BF)}{F^{i}_{\Ga}}{F^{j}_{\Gg}},
\]
we have
\[
  J=\lim_{\Ge\to 0}\int_{\GO\setminus(B(0,M\Ge))}(\SFL(\Grad\By)\Grad\Bw)_{j}^{\Gg}
  \mix{y^{j}}{x^{\Gg}}{x^{\Gb}}x^{\Gb}+\int_{\Md(B(0,M\Ge))}\SFL_{ij}^{\Ga\Gg}(\Grad\By)
  \mix{y^{j}}{x^{\Gg}}{x^{\Gb}}x^{\Gb}n_{\Ga}w^{i}dS(\Bx).
\]
Observing that
\[
\mix{y^{j}}{x^{\Gg}}{x^{\Gb}}x^{\Gb}=\Grad(\Grad\By(\Bx)\Bx-\By(\Bx)+\By(0))^{j}_{\Gg},
\]
and that $\Div (\SFL(\Grad\By)\Grad\Bw)=0$, while $(\SFL(\Grad\By)\Grad\Bw)\Bn=0$ on $\dOm$, we obtain
\begin{equation}
  \label{preanswer}
  J=\lim_{\Ge\to 0}\int_{\Md(B(0,M\Ge))}\SFL_{ij}^{\Ga\Gg}(\Grad\By)
  \left(\mix{y^{j}}{x^{\Gg}}{x^{\Gb}}x^{\Gb}n_{\Ga}w^{i}-(\dif{y^{j}}{x^{\Gb}}x^{\Gb}
    -y^{j}+y(0))n_{\Gg}\dif{w^{i}}{x^{\Ga}}\right)dS.
\end{equation}
We note that in order to proceed further we need to apply the matching principle, whereby
$\Grad\By_{\Ge}(\Bx)\approx\Grad\By(\Bx)+\Ge^{n}\Grad\Bw(\Bx)$, when $|\Bx|$ is not too small, must match $\Grad\By_{\Ge}(\Ge\Bz)\approx\Grad\By_{\rm in}(\Bz)$, when $|\Bz|$ is not too large.
Hence,
$\Bx=\Ge\Bz$, and $\Bz\in B(0,M)$ for some large constant $M$, represents the region of validity of both approximations. Thus, in order to evaluate $J$, we replace
\[
\Grad\Bw(\Ge\Bz)\to\Ge^{-n}(\Grad\By_{\rm in}(\Bz)-\Grad\By(\Ge\Bz)),\quad\Bz\in B(0,M),
\]
and
\[
\Bw(\Ge\Bz)\to\Ge^{-n}(\Ge\By_{\rm in}(\Bz)-\By(\Ge\Bz)+\By(0)).
\]
where the free constant in $\Bw$ is now chosen so that the above approximation is valid.
Making the replacement in (\ref{preanswer}) and changing variables $\Bx=\Ge\Bz$, we obtain
\begin{multline*}
  J=\lim_{M\to\infty}\lim_{\Ge\to 0}\int_{\Md B(0,M)}\SFL_{ij}^{\Ga\Gg}(\Grad\By(\Ge\Bz))
  \left[\mix{y^{j}}{x^{\Gg}}{x^{\Gb}}(\Ge\Bz)z^{\Gb}n_{\Ga}(\Bz)(\Ge y^{i}_{\rm in}(\Bz)-y^{i}(\Ge\Bz)+y^{i}(0))-\right.\\\left. \left(\dif{y^{j}}{x^{\Gb}}(\Ge\Bz)z^{\Gb} -\frac{y^{j}(\Ge\Bz)-y^{j}(0)}{\Ge}\right)\left(\dif{y_{\rm in}^{i}}{z^{\Ga}}(\Bz)-\dif{y^{i}}{x^{\Ga}}(\Ge\Bz)\right)n_{\Gg}(\Bz)\right]\,dS(\Bz)=0,
\end{multline*}
because
\[
  \lim_{\Ge\to 0}\left(\dif{y^{j}}{x^{\Gb}}(\Ge\Bz)z^{\Gb}-\frac{y^{j}(\Ge\Bz)-y^{j}(0)}{\Ge}\right)=
  \dif{y^{j}}{x^{\Gb}}(0)z^{\Gb}-\dif{y^{j}}{x^{\Ga}}(0)z^{\Ga}=0.
\]
We conclude that (\ref{DeltaEfin}) holds.

\subsection{Completion of the alternative proof of the Theorem~\ref{th:nlGrif}}
\label{app:A6}
To finalize the alternative proof of the Theorem~\ref{th:nlGrif} we now show that indeed $J^*=0$, where $J^{*}$ is given by (\ref{Jst}).  
First we observe that we can write $-nJ^*=\int_{\Md\Go}\Bv(\Bz)\cdot\Bn(\Bz)dS$,
where
\[
  v^{\Ga}=P_{0i}^{\Ga}\left(\dif{y^{i}}{z^{\Gb}}z^{\Gb}-y^{i}\right)
  -P_{0i}^{\Gb}\dif{y^{i}}{z^{\Gb}}z^{\Ga}+nP_{0i}^{\Ga}y^{i}.
\]
Hence,
\[
  -nJ^*=\lim_{R\to\infty}\left[\int_{\Md(B(0,R)\setminus\Go)}\Div\Bv d\Bz-
    \int_{\Md B(0,R)}\Bv\cdot\Bn\,dS\right].
\]
We compute
\[
  \Div\Bv=P_{0i}^{\Ga}\mix{y^{i}}{z^{\Gb}}{z^{\Ga}}z^{\Gb}
  -nP_{0i}^{\Gb}\dif{y^{i}}{z^{\Gb}}-P_{0i}^{\Gb}\mix{y^{i}}{z^{\Gb}}{z^{\Ga}}z^{\Ga}
  +nP_{0i}^{\Ga}\dif{y^{i}}{z^{\Ga}}=0.
\]
Therefore,
\[
nJ^*=\lim_{R\to\infty}\int_{\Md B(0,R)}\left\{\BP_{0}\Bn\cdot(\Grad\By_{\rm in}\Bz-\By_{\rm in})-\av{\BP_{0},\Grad\By_{\rm in}}(\Bn,\Bz) +n\BP_{0}\Bn\cdot\By_{\rm in}\right\}dS.
\]
We observe that we can replace $\By_{\rm in}(\Bz)$ in the formula for $nJ$ with $\Tld{\By}(\Bz)=\By_{\rm in}(\Bz)-\BF_{0}\Bz$. We also note that there exists an $n\times n$ matrix $\BS$ (``polarization'' tensor \cite{amka2007}), depending on the shape of $\Go$, such that
\begin{equation}
  \label{farfield}
    \Tld{\By}(\Bz)=\frac{\BS\Bz}{|\Bz|^{n}}+O(|\Bz|^{-n}),\text{ as }|\Bz|\to\infty. 
\end{equation}
Substituting this asymptotics into the formula for $nJ$ we obtain
\begin{equation}
  \label{polarII}
  nJ^*=\int_{\bb{S}^{n-1}}\{n\BP_{0}\BGx\cdot\BS\BGx-\av{\BP_{0},\BS}\}dS(\BGx).
\end{equation}
It is clear that
\[
  \BM=\int_{\bb{S}^{n-1}}\tns{\BGx}dS(\BGx)=\Gl\BI_{n},
\]
since $\BR\BM\BR^{T}=\BM$ for any rotation matrix $\BR$. Taking traces we obtain $n\Gl=|\bb{S}^{n-1}|$.
It follows that
\[
  \int_{\bb{S}^{n-1}}n\BP_{0}\BGx\cdot\BS\BGx\,dS(\BGx)=n\av{\BS^{T}\BP_{0},\Gl\BI_{n}}=
n\Gl\av{\BP_{0},\BS}=\int_{\bb{S}^{n-1}}\av{\BP_{0},\BS}dS.
\]
Thus, we have proved that $nJ^*=0$. Hence, formula (\ref{DeltaEfin}) is
established by a different method, where the Clapeyron theorem has once again played a key role.

\subsection{Reconciliation of Theorems \ref{th:nlGrif} and \ref{th:mazya}}
\label{app:A7}
Here we show that the  statements of a  general  Theorem~\ref{th:nlGrif} and  a special  Theorem~\ref{th:mazya}  are in full agreement. To this end we temporarily denote the \rhs\ of the nonlinear result (\ref{DeltaEfin}) by $\GD_{\rm NL}E$ and the \rhs\ of the linear result (\ref{mazya}) by $\GD_{L}E$.
Our starting point  will be again  the Weierstrass $\CE$-function that ensures the proper decay at infinity for the Clapeyron theorem to operate in an exterior domain. When $W(\BF)$ is given  y (\ref{linel}) with constant tensor of material moduli $\SFC$, we compute
\begin{multline*}
  E_{\infty}=\int_{\bb{R}^{n}\setminus\Go}\CE(\BGve_{0},\BGve_{\rm in})d\Bz=
  \int_{\bb{R}^{n}\setminus\Go}\left\{\hf\av{\SFC\BGve_{\rm in}\BGve_{\rm in}}
    -\hf\av{\SFC\BGve_{0}\BGve_{0}}-\av{\SFC\BGve_{0},\BGve_{\rm in}-\BGve_{0}}\right\}dS=\\
  \hf\int_{\bb{R}^{n}\setminus\Go}\av{\SFC\Tld{\BGve},\Tld{\BGve}}d\Bz,
\end{multline*}
where $\Tld{\BGve}=\BGve_{\rm in}-\BGve_{0}$. The energy density $\hf\av{\SFC\Tld{\BGve},\Tld{\BGve}}$ is both 2-homogeneous and scale-invariant function of $\Tld{\BF}=\Grad\Tld{\Bu}$, where $\Tld{\Bu}=\Bu_{\rm in}-\BF_{0}\Bz-\Bu_{0}$, is a $C^{2}$ function, which solves
\[
  \begin{cases}
    \Div\SFC e(\Tld{\Bu})=0,&\Bz\not\in\Go\\
    (\SFC e(\Tld{\Bu}))\Bn=-\BGs_{0}\Bn,&\Bz\in\Md\Go,\\
    \Tld{\Bu}(\Bz)\to 0,&|\Bz|\to\infty.
  \end{cases}
\]
Hence, both versions (\ref{LinClapew}) and (\ref{classclap}) of Clapeyron's theorem are applicable.
Moreover, our use of the Weierstrass $\CE$-function allows us to use formulas (\ref{LinClapew}) and (\ref{classclap}) in the exterior domain. Therefore, on the one hand we have
\begin{equation}
  \label{GCTuse}
E_{\infty}=-\nth{n}\int_{\Md\Go}\left\{(\SFC\Tld{\BGve})\Bn\cdot\Tld{\Bu}+\hf\av{\SFC\Tld{\BGve},\Tld{\BGve}}(\Bz\cdot\Bn)-(\SFC\Tld{\BGve})\Bn\cdot\Grad\Tld{\Bu}\Bz\right\} d\Bz,
\end{equation}
due to (\ref{LinClapew}), and on the other
\begin{equation}
  \label{CTuse}
  E_{\infty}=-\hf\int_{\Md\Go}(\SFC\Tld{\BGve})\Bn\cdot\Tld{\Bu}dS=
  \hf\int_{\Md\Go}\BGs_{0}\Bn\cdot\Tld{\Bu}dS,
\end{equation}
due to (\ref{classclap}). The negative signs in (\ref{GCTuse}) and (\ref{CTuse}) come from
our convention that $\Bn$ denotes the \emph{outward} unit normal to $\Md\Go$, which the the inward-pointing normal for the boundary of $\bb{R}^{n}\setminus\Go$.
The similarity between formula (\ref{CTuse}) for $E_{\infty}$ and formula (\ref{mazya}) for $\GD_{L}E$ is apparent. Therefore, we can rewrite (\ref{CTuse}) as
\begin{equation}
  \label{EinfIII}
  E_{\infty}=\GD_{L}E+\hf\int_{\Md\Go}\BGs_{0}\Bn\cdot(\Tld{\Bu}+\Bu_{\rm in})dS=
 \GD_{L}E+\int_{\Md\Go}\BGs_{0}\Bn\cdot\Tld{\Bu}dS+\hf\av{\BGs_{0},\BGve_{0}}|\Go|.
\end{equation}

Replacing $(\SFC\Tld{\BGve})\Bn$ by $-\BGs_{0}\Bn$ in (\ref{GCTuse}) and expanding
\[
  \av{\SFC\Tld{\BGve},\Tld{\BGve}}=\av{\SFC\BGve_{\rm in},\BGve_{\rm in}}
  -2\av{\SFC\BGve_{\rm in},\BGve_{0}}+\av{\SFC\BGve_{0},\BGve_{0}} =
  \av{\SFC\BGve_{\rm in},\BGve_{\rm in}}-2\av{\SFC\Tld{\BGve},\BGve_{0}}-\av{\SFC\BGve_{0},\BGve_{0}},
\]
formula (\ref{GCTuse}) becomes
\begin{equation}
  \label{EinfII}
    E_{\infty}=\GD_{NL}E-\nth{n}\int_{\Md\Go}\BGs_{0}\Bn\cdot(\Grad\Tld{\Bu}\Bz-\Tld{\Bu})
  -\av{\BGs_{0},\Tld{\BGve}}(\Bz\cdot\Bn)dS+\hf\av{\BGs_{0},\BGve_{0}}|\Go|. 
\end{equation}
Hence, formulas (\ref{EinfIII}) and (\ref{EinfII}) show that $\GD_{NL}E=\GD_{L}E$ \IFF $J=0$, where
\[
J=\int_{\Md\Go}\{\BGs_{0}\Bn\cdot(\Grad\Tld{\Bu}\Bz-\Tld{\Bu})
  -\av{\BGs_{0},\Tld{\BGve}}(\Bz\cdot\Bn)+n\BGs_{0}\Bn\cdot\Tld{\Bu}\}dS.
\]
As in the analysis in Appendix~\ref{app:A6} we first write write $J=\int_{\Md\Go}\Bv\cdot\Bn\,dS$ and then verify that $\Div\Bv=0$.
Indeed,
\[ v^{\Ga}=\Gs_{0i}^{\Ga}\left(\dif{u^{i}}{z^{\Gb}}z^{\Gb}-u^{i}\right)-
  \Gs_{0i}^{\Gb}\dif{u^{i}}{z^{\Gb}}z^{\Ga}+n\Gs_{0i}^{\Ga}u^{i}.
\]
Hence,
\[ \Div\Bv=\Gs_{0i}^{\Ga}\left(\mix{u^{i}}{z^{\Ga}}{z^{\Gb}}z^{\Gb}+\dif{u^{i}}{z^{\Gb}}\Gd_{\Ga}^{\Gb}
    -\dif{u^{i}}{z^{\Ga}}\right)-\Gs_{0i}^{\Gb}\mix{u^{i}}{z^{\Ga}}{z^{\Gb}}z^{\Ga}
  -n\Gs_{0i}^{\Gb}\dif{u^{i}}{z^{\Gb}}+n\Gs_{0i}^{\Ga}\dif{u^{i}}{z^{\Ga}}=0.
\]
It follows that
\[
J=-\lim_{R\to\infty}\int_{\Md B_{R}}\{\BGs_{0}\Bn\cdot(\Grad\Tld{\Bu}\Bz-\Tld{\Bu})
  -\av{\BGs_{0},\Tld{\BGve}}(\Bz\cdot\Bn)+n\BGs_{0}\Bn\cdot\Tld{\Bu}\}dS.
\]
As in Appendix~\ref{app:A6} we use (\ref{farfield})
and the calculation fully analogous to (\ref{polarII}) shows that  $J=0$, establishing the equality of $\GD_{NL}E$ and $\GD_{L}E$.
\subsection{Rice and Drucker's analysis of void formation}
\label{app:RD}
In this section we revisit, for the sake of completeness, the calculation of the total energy release upon the removal of a subregion $\Go\subset\GO$, due to Rice and Drucker \cite{ridr67} because their result is in some sense intermediate between the asymptotic formulas (\ref{DeltaEfin}) and (\ref{mazya}) and because their argument is conducted almost entirely in the context of nonlinear elasticity, while leading to a formula resembling (\ref{mazya}).

Consider a body $\GO$ loaded by a combination of ``fixed grips'' (displacement \bc s) on $\dOm_{D}$, tractions on $\dOm_{N}$, and body forces in $\GO$. The total energy functional is\footnote{In \cite{ridr67} Rice and Drucker were working in the context of geometrically linear, physically nonlinear elasticity. However, nothing in their analysis, reproduced below appealed to the assumption of geometric linearity.}
\[
E[\By]=\int_{\GO}\{W(\Bx,\Grad\By)-\Bb\cdot\By\}d\Bx-\int_{\dOm_{N}}\By\cdot\Bt\,dS,
\]
where for notational convenience we used $\By(\Bx)$ instead of the displacements $\Bu(\Bx)=\By(\Bx)-\Bx$ in the work of external body forces and boundary tractions.
We emphasize that the explicit dependence of the energy density $W$ on $\Bx$ permits heterogeneous materials and location-dependent prestress.

Let $\By_{0}$ be the energy minimizer solving
\begin{equation}
  \label{eq0}
  \begin{cases}
    \Div\BP(\Bx,\BF_{0})=\Bb,&\Bx\in\GO,\\
    \By_{0}(\Bx)=\Bg(\Bx),&\Bx\in\dOm_{D},\\
 \BP(\Bx,\BF_{0})\Bn=\Bt,&\Bx\in\dOm_{N},   
  \end{cases}
\end{equation}
where $\Bg(\Bx)-\Bx$ are the prescribed displacments of the Dirichlet part $\dOm_{D}$ of the boundary.
Let $\By_{d}$ be the energy minimizer of
\[
E_{d}[\By]=\int_{\GO\setminus\Go}\{W(\Bx,\Grad\By)-\Bb\cdot\By\}d\Bx-\int_{\dOm_{N}}\By\cdot\Bt\,dS,
\]
where $\Go\subset\GO$ is the damaged region that has been
destroyed. The deformation $\By_{d}$ solves the \bvp
\begin{equation}
  \label{eqd}
  \begin{cases}
    \Div\BP(\Bx,\BF_{d})=\Bb,&\Bx\in\GO\setminus\Go,\\
    \By_{d}(\Bx)=\Bg(\Bx),&\Bx\in\dOm_{D},\\
    \BP(\Bx,\BF_{d})\Bn=\Bt,&\Bx\in\dOm_{N},\\
    \BP(\Bx,\BF_{d})\Bn=0,&\Bx\in\Md\Go.
  \end{cases}
\end{equation}
Our goal is to find the formula for the energy increment
\[
  \GD E=E[\By_{0}]-E_{d}[\By_{d}]
\]
expressed entirely in terms of the fields on $\Md\Go$.

We begin with the obvious expression
\[
  E[\By_{0}]-E[\By_{d}]=E_{\Go}+\int_{\GO\setminus\Go}\{W(\Bx,\BF_{0})-W(\Bx,\BF_{d})
  -\Bb\cdot(\By_{0}-\By_{d})\}d\Bx-\int_{\dOm_{N}}\Bt\cdot(\By_{0}-\By_{d})dS,
\]
where we used the shorthand $\BF_{0}=\Grad\By_{0}$ and $\BF_{d}=\Grad\By_{d}$, and where
\[
E_{\Go}=\int_{\Go}\{W(\Bx,\BF_{0})-\Bb\cdot\By_{0}\}d\Bx.
\]
In what follows we use the shorthand
\[\GD'E=E[\By_{0}]-E[\By_{d}]-E_{\Go}
\]
in order not to write $E_{\Go}$ in
every formula.
The first observation of Rice and Drucker is that for any smooth test field $\Bu(\Bx)$ vanishing on $\dOm_{D}$, we have, via integration by parts and equilibrium equations (\ref{eqd}),
\begin{equation}
  \label{five}
 \int_{\GO\setminus\Go}\av{\BP(\Bx,\BF_{d}),\Grad\Bu}d\Bx=\int_{\GO\setminus\Go}\Bb\cdot\Bu d\Bx+\int_{\dOm_{N}}\Bt\cdot\Bu\,dS.
\end{equation}
We then use formula (\ref{five}) with $\Bu=\By_{0}-\By_{d}$ and obtain 
\[
  \GD' E=\int_{\GO\setminus\Go}\{W(\Bx,\BF_{0})-W(\Bx,\BF_{d})-
  \av{\BP(\Bx,\BF_{d}),\BF_{0}-\BF_{d}}\}d\Bx=\int_{\GO\setminus\Go}\CE(\BF_{d},\BF_{0})d\Bx,
\]
where the Weierstrass $\CE$-function is defined in (\ref{Weierstrass}).

Next, Rice and Drucker use the following construction.
Let $\Bt_{\Go}(\Bx)=\BP(\Bx,\BF_{0}(\Bx))\Bn_{\Md\Go}(\Bx)$,
$\Bx\in\Md\Go$, where $\Bn_{\Md\Go}$ denotes the \emph{inward}
unit normal to $\Md\Go$ (i.e., the outward unit normal for $\Md(\GO\setminus\Go)$).
For any $s\in[0,1]$ let $\By_{s}(\Bx)$ be the the \emph{equilibrium}
in $\GO\setminus\Go$ with the same \bc s on $\dOm$, the same body
force loading on $\GO\setminus\Go$ as $\By_{0}$, and satisfying
$\BP(\Bx,\BF_{s})\Bn_{\Md\Go}=\Bt_{s}(\Bx)$, where $\Bt_{s}$ depends
smoothly on $s$ and is such that $\Bt_{0}=\Bt_{\Go}$, while
$\Bt_{1}=0$. For example, $\Bt_{s}=(1-s)\Bt_{\Go}$ is the a natural choice. We note that $\By_{s}=\By_{0}$, when $s=0$, and
$\By_{1}=\By_{d}$. 
Next Rice and Drucker use the observations:
\[
  W(\Bx,\BF_{0})-W(\Bx,\BF_{d})=-\int_{0}^{1}\av{\BP(\Bx,\BF_{s}),\Grad\dot{\By}_{s}}ds,\quad
 \BF_{0}-\BF_{d}=-\int_{0}^{1} \Grad\dot{\By}_{s}ds,
\]
where $\dot{\By}_{s}=\dif{\By_{s}}{s}$, so that
\[
  \GD' E=\int_{\GO\setminus\Go}\int_{0}^{1}\{\av{\BP(\Bx,\BF_{d})-
    \BP(\Bx,\BF_{s}),\Grad\dot{\By}_{s}}dsd\Bx.
\]
Now, recalling that $\By_{s}$ is an equilibrium configuration for each $s\in[0,1]$, switching the order of integration, and integrating by parts we obtain 
\[
  \GD' E=\int_{0}^{1}\int_{\Md\Go}(\BP(\Bx,\BF_{d})\Bn_{\Md\Go}
  -\BP(\Bx,\BF_{s})\Bn_{\Md\Go})\cdot\dot{\By}_{s}dS ds
\]
Recalling that $\BP(\Bx,\BF_{d})\Bn_{\Md\Go}=0$ we obtain formula~(12) from \cite{ridr67} for the the energy increment in the fully nonlinear and heterogeneous setting.
\begin{equation}
  \label{nonlin}
  \GD E=\int_{\Go}\{W(\Bx,\BF_{0})-\Bb\cdot\By_{0}\}d\Bx
  -\int_{\Md\Go}\int_{0}^{1}\Bt_{s}\cdot\dot{\By}_{s}dsdS.
\end{equation}
Integrating by parts in the integral in $s$, and using the formula
\[
\int_{\Md\Go}\Bt_{\Go}\cdot\By_{0}=\int_{\Go}\{\Bb\cdot\By_{0}-\av{\BP(\Bx,\BF_{0}),\BF_{0}}\}d\Bx,
\]
we obtain our final result---a simplification of Rice-Drucker formula (\ref{nonlin})
\begin{equation}
  \label{nonlin0}
  \GD E=\int_{\Go}\{W(\Bx,\BF_{0})-\av{\BP(\Bx,\BF_{0}),\BF_{0}}\}d\Bx
  -\int_{\Md\Go}\int_{0}^{1}\Bt_{\Go}\cdot\By_{s}dsdS,
\end{equation}
when $\Bt_{s}=(1-s)\Bt_{\Go}$.

If one does not look too closely, formula (\ref{nonlin}) resembles more (\ref{mazya}) for linear elasticity than the fully nonlinear fromula (\ref{DeltaEfin}). Taking a more detailed look,
formula (\ref{nonlin0}) involves a virtually intractable object $\By_{s}[\Bt_{s}]$, solving 
\begin{equation}
  \label{eqds}
  \begin{cases}
    \Div\BP(\Bx,\BF_{s})=\Bb,&\Bx\in\GO\setminus\Go,\\
    \By_{s}(\Bx)=\Bg(\Bx),&\Bx\in\dOm_{D},\\
    \BP(\Bx,\BF_{s})\Bn=\Bt,&\Bx\in\dOm_{N},\\
    \BP(\Bx,\BF_{s})\Bn=\Bt_{s},&\Bx\in\Md\Go.
  \end{cases}
\end{equation}
This prevents any further simplification of (\ref{nonlin0}) in the general case.

In the case of linear elasticity, equations (\ref{eqds}) become linear, and if we
set  $\Bt_{s}=(1-s)\Bt_{\Go}$, then, by linearity of equations (\ref{eqds}),
$\By_{s}=(1-s)\By_{0}+s\By_{d}$. In that case equation (\ref{nonlin}) gives
\[
  \GD E=E_{\Go}-\int_{\Md\Go}\int_{0}^{1}(1-s)\Bt_{\Go}\cdot(\By_{d}-\By_{0})dsdS=
E_{\Go}-\hf\int_{\Md\Go}\Bt_{\Go}\cdot(\By_{d}-\By_{0})dS. 
\]
In the presence of arbitrary inhomogeneities, but in the absence of prestress (at least in $\Go$), the original Clapeyron theorem (\ref{Clap}) applies to the solution $\By_{0}$ in $\Go$ and we have
\[
E_{\Go}=-\hf\int_{\Md\Go}\Bt_{\Go}\cdot\By_{0},
\]
where the minus sign is due to the defition of $\Bt_{\Go}$ in terms of
the inward unit normal to $\Md\Go$. In this case, we obtain the final formula
for linear elasticity with no prestress in $\Go$,
\begin{equation}
  \label{lin}
  E[\By_{0}]-E[\By_{d}]=-\hf\int_{\Md\Go}\Bt_{\Go}\cdot\By_{d}dS.
\end{equation}
This formula is virtually identical to (\ref{mazya}), and in fact easily reduces to it in the volume-rescaled limit, when $\Go$ is replaced with $\Ge\Go$ and when the body force vanishes at 0. In fact, the Rice-Drucker analysis presented here shows that the asymptotic formula  (\ref{mazya}) is valid in far greater generality than in Section~\ref{sub:linel}.

In the case of nonlinear elasticity, when the material is homogeneous, body forces are absent and
$\Go$ is replaced by $\Ge\Go$, formula (\ref{nonlin0}) implies
\begin{equation}
  \label{RDeltaE}
  \lim_{\Ge\to 0}\frac{E[\By_{0}]-E[\By_{\Ge}]}{\Ge^{n}}=|\Go|\{W_{0}-\av{\BP_{0},\BF_{0}(0)}\}
  -\int_{\Md\Go}\int_{0}^{1}\BP_{0}\Bn_{\Md\Go}(\Bz)\cdot\By^{\rm in}_{s}(\Bz)dsdS(\Bz), 
\end{equation}
where $W_{0}=W(\BF_{0}(0))$, and
$\By^{\rm in}_{s}(\Bz)$ solves the exterior problem
\begin{equation}
  \label{bvps}
    \begin{cases}
    \Div\BP(\Grad\By_{s}^{\rm in}(\Bz))=0,&\Bz\in\bb{R}^{n}\setminus\Go,\\
    \BP(\Grad\By_{s}^{\rm in}(\Bz))\Bn_{\Md\Go}=(1-s)\BP(\BF_{0}(0))\Bn_{\Md\Go},&\Bz\in\Md\Go,\\
\Grad\By^{\rm in}_{s}(\Bz)\to\BF_{0}(0),&\text{ as }|\Bz|\to\infty.
  \end{cases}
\end{equation}
We observe that Theorem~\ref{th:nlGrif} gives another formula for $\GD E$ above and therefore we obtain the following theorem.
\begin{theorem}
  \label{th:RD}
  Suppose $\By^{\rm in}_{s}$, $s\in[0,1]$ solves the exterior \bvp\ (\ref{bvps}), while $\By_{0}$ solves the \bvp\ (\ref{tracbc}). Then
  \[
    \int_{\Md\Go}\BP_{0}\Bn\cdot\int_{0}^{1}\By^{\rm in}_{s}dsdS(\Bz)=
    \nth{n}\int_{\Md\Go}W(\Grad\By^{\rm in}_{1})(\Bn\cdot\Bz)dS(\Bz)-
   |\Go|\{W_{0}-\av{\BP_{0},\BF_{0}(0)}\}, 
\]
where $\Bn$ is the \emph{outward} unit normal to $\Md\Go$, which we can also rewrite as
\begin{equation}
  \label{creation}
  \int_{\Md\Go}\BP_{0}\Bn\cdot\int_{0}^{1}(\By^{\rm in}_{s}(\Bz)-\BF_{0}(0)\Bz)dsdS(\Bz)=
    \nth{n}\int_{\Md\Go}(W(\Grad\By^{\rm in}_{1})-W_{0})(\Bn\cdot\Bz)dS(\Bz). 
\end{equation}
\end{theorem}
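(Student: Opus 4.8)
The plan is to recognize that Theorem~\ref{th:RD} is not a fresh computation but the assertion that the two formulas already derived for the energy release upon void formation must coincide. First I would pin down the common quantity: in the homogeneous, body-force-free regime relevant here, the rescaled energy release is computed both by the Rice--Drucker formula (\ref{RDeltaE}) and, up to a sign, by Theorem~\ref{th:nlGrif}'s formula (\ref{DeltaEfin}). The essential bookkeeping observation is that the traction-free exterior solution $\By_{\rm in}$ entering (\ref{DeltaEfin}) is exactly $\By^{\rm in}_{1}$ from (\ref{bvps}): since the boundary datum is $\Bt_{s}=(1-s)\Bt_{\Go}$, it vanishes at $s=1$, so $\By^{\rm in}_{1}$ solves $\Div\BP(\Grad\By)=0$ with $\BP(\Grad\By)\Bn=\Bzr$ on $\Md\Go$ and $\Grad\By\to\BF_{0}(0)$ at infinity, which is precisely problem (\ref{Pinf}) with $\Grad\By(0)=\BF_{0}(0)$.

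Next I would equate the two expressions, tracking the two sign conventions that differ between the sources. The first discrepancy is that (\ref{RDeltaE}) computes $\lim_{\Ge\to0}\Ge^{-n}(E[\By_{0}]-E[\By_{\Ge}])$, whereas Theorem~\ref{th:nlGrif} defines $\GD E$ as $\lim_{\Ge\to0}\Ge^{-n}(E_{\Ge}-E)$, the opposite limit; hence the left-hand side of (\ref{RDeltaE}) equals $+\nth{n}\int_{\Md\Go}W(\Grad\By^{\rm in}_{1})(\Bn\cdot\Bz)\,dS$ with $\Bn$ the outward normal. The second discrepancy is orientation: Rice--Drucker's $\Bn_{\Md\Go}$ is the \emph{inward} normal to $\Go$, while the theorem uses the \emph{outward} normal, so $\Bn_{\Md\Go}=-\Bn$. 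Substituting the first fact into (\ref{RDeltaE}) and replacing $\Bn_{\Md\Go}$ by $-\Bn$ turns the double integral into $\int_{\Md\Go}\BP_{0}\Bn\cdot\int_{0}^{1}\By^{\rm in}_{s}\,ds\,dS$; isolating this term yields the first identity of the theorem directly.

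Then I would obtain the shifted form (\ref{creation}) by a routine divergence-theorem reduction. Subtracting the affine field $\BF_{0}(0)\Bz$ from $\By^{\rm in}_{s}$ introduces the correction $\int_{\Md\Go}\BP_{0}\Bn\cdot\BF_{0}(0)\Bz\,dS$; because $\BP_{0}$ and $\BF_{0}(0)$ are constant and $\partial_{\alpha}z^{\beta}=\delta_{\alpha}^{\beta}$, the divergence theorem over $\Go$ gives $\int_{\Md\Go}\BP_{0}\Bn\cdot\BF_{0}(0)\Bz\,dS=|\Go|\av{\BP_{0},\BF_{0}(0)}$. Likewise, using $\Div\Bz=n$, one gets $\nth{n}\int_{\Md\Go}W_{0}(\Bn\cdot\Bz)\,dS=W_{0}|\Go|$. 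These two contributions cancel the term $|\Go|\{W_{0}-\av{\BP_{0},\BF_{0}(0)}\}$ present in the first identity, so that the subtracted version reads exactly (\ref{creation}).

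I expect the only genuine obstacle to be the sign and orientation bookkeeping described above, together with the tacit checking that the hypotheses (A1)--(A5) underpinning Theorem~\ref{th:nlGrif} are in force in the present setting, so that (\ref{DeltaEfin}) may legitimately be invoked for $\By^{\rm in}_{1}$. Once both formulas are written with a common outward orientation and a common sign for $\GD E$, the stated identity is immediate, and the passage between its two forms is an elementary divergence-theorem calculation.
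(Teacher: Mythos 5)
Your proposal is correct and is essentially the paper's own argument: the paper proves Theorem~\ref{th:RD} precisely by observing that the Rice--Drucker formula (\ref{RDeltaE}) and the GCT-based formula (\ref{DeltaEfin}) compute the same rescaled energy release (with opposite sign conventions for $\GD E$ and opposite normal orientations), and then equating them; the passage to (\ref{creation}) is the same divergence-theorem cancellation of $|\Go|\av{\BP_{0},\BF_{0}(0)}$ and $W_{0}|\Go|$ that you describe. Your write-up merely makes explicit the sign/orientation bookkeeping and the identification $\By_{\rm in}=\By^{\rm in}_{1}$ that the paper leaves tacit.
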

The physical meaning of formula (\ref{creation}), i.e. of equivalence of formulas (\ref{DeltaEfin}) and (\ref{nonlin0}) is that the energy increment $\GD E$ can be represented as a portion of the work of configurational forces that ``create'' a void.

\subsection{Computation of the Griffith's discrepancy term (\ref{GrerrG})}
\label{app:A8}

In order to compute the limit in (\ref{GrerrG}) we need to use the far-field asymptotics (\ref{farfield}), obtaining
\[
  (\BP(\Grad\By_{\rm in})-\BP_{0})\Bn\cdot\BF_{0}\Bz= \frac{(\SFC\BS)\Bn\cdot\BF_{0}\Bz}{|\Bz|^{n}}
  -n\frac{(\SFC(\BS\Bz\otimes\Bz))\Bn\cdot\BF_{0}\Bz}{|\Bz|^{n+2}}+O(|\Bz|^{-n}),
\]
where $\SFC=W_{\BF\BF}(\BF_{0})$, and $\BS$ is
the polarization tensor \cite{amka2007} that depends in the shape of $\Go$.
Using the fact that $|\Bz|=R$ and $\Bn=\Hat{\Bz}=\Bz/|\Bz|$ on $\Md B(0,R)$, we obtain
\[
  G=\int_{\bb{S}^{n-1}}\{n(\SFC(\BS\Hat{\Bz}\otimes\Hat{\Bz}))\Hat{\Bz}\cdot
  \BF_{0}\Hat{\Bz}-(\SFC\BS)\Hat{\Bz}\cdot\BF_{0}\Hat{\Bz}\}\,dS(\Hat{\Bz}).
\]
Thus, taking into account that
\[
  \int_{\bb{S}^{n-1}}z^{i}z^{j}dS=\frac{|\bb{S}^{n-1}|}{n}\Gd^{ij},\qquad
  \int_{\bb{S}^{n-1}}z^{i}z^{j}z^{k}z^{l}dS=\frac{|\bb{S}^{n-1}|}{n(n+2)}\left(
  \Gd^{ij}\Gd^{kl}+\Gd^{ik}\Gd^{jl}+\Gd^{il}\Gd^{jk}\right),
\]
we obtain
\begin{equation}
  \label{G}
G=\frac{|B(0,1)|}{n+2}(C_{ijkl}(nS_{kj}F_{il}-2S_{kl}F_{ij})
  +nC_{ijkj}S_{kl}F_{il}), 
\end{equation}
where the summation over repeated indices is assumed, and where we used the relation $|\bb{S}^{n-1}|=n|B(0,1)|$ between the surface and the volume of the unit ball.

If $\SFC$ is an isotropic elastic tensor with Lam\'e moduli $\Gl$ and $\mu$, then
\[
  \SFC(\BS\Hat{\Bz}\otimes\Hat{\Bz})\Hat{\Bz}\cdot\BF_{0}\Hat{\Bz}=
  (\Gl+\mu)(\BS\Hat{\Bz},\Hat{\Bz})(\BF_{0}\Hat{\Bz},\Hat{\Bz})+
  \mu(\BS\Hat{\Bz},\BF_{0}\Hat{\Bz}),
\]
\[
  (\SFC\BS)\Hat{\Bz}\cdot\BF_{0}\Hat{\Bz}=
  \Gl(\BF_{0}\Hat{\Bz}\cdot\Hat{\Bz})\Trc\BS+2\mu\BS\Hat{\Bz}\cdot\BF_{0}\Hat{\Bz},
\]
and thus,
\begin{equation}
  \label{Gisotr}
    G=\frac{|B(0,1)|}{n+2}((n\mu-2\Gl)\Trc\BF_{0}\Trc\BS
  +(2\Gl n+\mu(n^{2}+2n-4))\av{\BS,\BF_{0}}). 
\end{equation}
When $n=2$ we have
\[
G=\frac{\pi}{2}((\mu-\Gl)\Trc\BF_{0}\Trc\BS+2(\Gl+\mu)\av{\BS,\BF_{0}}),
\]
When $n=3$ we have
\[
  G=\frac{4\pi}{15}((3\mu-2\Gl)\Trc\BF_{0}\Trc\BS
  +(6\Gl+11\mu)\av{\BS,\BF_{0}}).
\]
If $\BF_{0}=\frac{p}{n\Gk}\BI_{n}$, then
\[
  G=\frac{2(n-1)p\mu|B(0,1)|}{n\Gk}\Trc\BS=
\begin{cases}
  \frac{\pi\mu p}{\Gk}\Trc\BS,& n=2,\\
  \frac{16\pi\mu p}{9\Gk}\Trc\BS,& n=3.
\end{cases}
\]
The case of nucleating a small spherical cavity in a body occupied  by isotropic linearly elastic material is exactly solvable. The solution of (\ref{Linf}) with $\BF_{0}=\frac{p}{n\Gk}\BI_{n}$ is
\[
\By_{\rm in}(\Bz)=\left(\frac{p}{n\Gk}+\frac{p|\Bz|^{-n}}{2(n-1)\mu}\right)\Bz,
\]
resulting in $\BS=\frac{p}{2(n-1)\mu}\BI_{n}$. Thus,
\[
  G=\frac{p^{2}|B(0,1)|}{\Gk}=
  \begin{cases}
    \frac{\pi p^{2}}{\Gk},&n=2,\\
    \frac{4\pi p^{2}}{3\Gk},&n=3.
  \end{cases}
\]
In this case all of our results in this section can also be verified by a direct calculation,
using the explicit solutions (\ref{ulin0}), (\ref{ulin}) of (\ref{uueps}). Needless to say,    the obtained results  specialized to the appropriate value of $n$ and using the relevant  polarization tensor $\BS$,  agree with the correct formula  presented by Griffith 
in his second  paper  \cite{grif24}.
\def\cprime{$'$} \ifx \cedla \undefined \let \cedla = \c \fi\ifx \cyr
  \undefined \let \cyr = \relax \fi\ifx \cprime \undefined \def \cprime
  {$\mathsurround=0pt '$}\fi\ifx \prime \undefined \def \prime {'}
  \fi\def\Ya{Ya}


\begin{thebibliography}{100}

\bibitem{amka2007}
H.~Ammari and H.~Kang.
\newblock {\em Polarization and moment tensors}, volume 162 of {\em Applied
  Mathematical Sciences}.
\newblock Springer, New York, 2007.
\newblock With applications to inverse problems and effective medium theory.

\bibitem{bks11}
J.~Ball, K.~Koumatos, and H.~Seiner.
\newblock Nucleation of austenite in mechanically stabilized martensite by
  localized heating.
\newblock {\em Journal of Alloys and Compounds}, 577S:S37--S42, 2013.

\bibitem{ball7677}
J.~M. Ball.
\newblock Convexity conditions and existence theorems in nonlinear elasticity.
\newblock {\em Arch. Ration. Mech. Anal.}, 63(4):337--403, 1976.

\bibitem{ball82}
J.~M. Ball.
\newblock Discontinuous equilibrium solutions and cavitation in nonlinear
  elasticity.
\newblock {\em Philos. Trans. Roy. Soc. London Ser. A}, 306(1496):557--611,
  1982.

\bibitem{ball87}
J.~M. Ball.
\newblock Singular minimizers and their significance in elasticity.
\newblock In {\em Directions in partial differential equations}, pages 1--16.
  Academic Press, 1987.

\bibitem{bafi93}
P.~W. Bates and P.~C. Fife.
\newblock The dynamics of nucleation for the {C}ahn-{H}illiard equation.
\newblock {\em SIAM J. Appl. Math.}, 53(4):990--1008, 1993.

\bibitem{bies68}
B.~Bilby and J.~Eshelby.
\newblock Dislocations and the theory of fracture.
\newblock In H.~Liebowitz, editor, {\em Fracture: An Advanced Treatise},
  volume~1, pages 99--182. Academic Press, New York, 1968.

\bibitem{blku89}
G.~W. Bluman and S.~Kumei.
\newblock {\em Symmetries and differential equations}, volume~81 of {\em
  Applied Mathematical Sciences}.
\newblock Springer Science \& Business Media, 1989.

\bibitem{boss98}
A.~Bossavit.
\newblock On the geometry of electromagnetism.
\newblock {\em J. Japan Soc. Appl. Electromagn. \& Mech}, 6:17--28, 1998.

\bibitem{bomi07}
Y.~Bozhkov and E.~Mitidieri.
\newblock The noether approach to pokhozhaev's identities.
\newblock {\em Mediterranean Journal of Mathematics}, 4(4):383--405, 2007.

\bibitem{brun65}
L.~Brun.
\newblock Sur deux expressions analogues {\`a} la formule de clapeyron donnant
  l'{\'e}nergie libre et la puissance dissip{\'e}e pour un corps
  visco{\'e}lastique.
\newblock {\em COMPTES RENDUS HEBDOMADAIRES DES SEANCES DE L ACADEMIE DES
  SCIENCES}, 261(1):41--44, 1965.

\bibitem{buri73}
B.~Budiansky and J.~Rice.
\newblock {Conservation laws and energy-release rates}.
\newblock {\em J. appl. Mech}, 40(1):201--203, 1973.

\bibitem{buec58}
H.~Bueckner.
\newblock The propagation of cracks and the energy of elastic deformation.
\newblock {\em Journal of Applied Mechanics}, 80:1228, 1958.

\bibitem{cahn62}
J.~W. Cahn.
\newblock Coherent fluctuations and nucleation in isotropic solids.
\newblock {\em Acta Metallurgica}, 10(10):907--913, 1962.

\bibitem{chad75}
P.~Chadwick.
\newblock Applications of an energy-momentum tensor in non-linear
  elastostatics.
\newblock {\em Journal of Elasticity}, 5(3):249--258, 1975.

\bibitem{chsh77}
F.~H. Chen and R.~T. Shield.
\newblock Conservation laws in elasticity of the j-integral type.
\newblock {\em Zeitschrift f{\"u}r angewandte Mathematik und Physik},
  28(1):1--22, 1977.

\bibitem{cher67}
G.~Cherepanov.
\newblock Crack propagation in continuous media.
\newblock {\em Journal of Applied Mathematics and Mechanics}, 31:503--512,
  1967.

\bibitem{cher89}
G.~Cherepanov.
\newblock A remark on the dynamic invariant or path-independent integral.
\newblock {\em International journal of solids and structures},
  25(11):1267--1269, 1989.

\bibitem{trci19}
P.~Ciarletta and L.~Truskinovsky.
\newblock Soft nucleation of an elastic crease.
\newblock {\em Physical review letters}, 122(24):248001, 2019.

\bibitem{cotterell1997}
B.~Cotterell.
\newblock {A. A. Griffith and the foundations of fracture mechanics}.
\newblock {\em Fracture research in retrospect: an anniversary volume in honour
  of George R. Irwin's 90th birthday}, pages 105--120, 1997.

\bibitem{Dak08}
B.~Dacorogna.
\newblock {\em Direct methods in the calculus of variations}.
\newblock Springer-Verlag, New York, 2nd edition, 2008.

\bibitem{dafHCL}
C.~M. Dafermos and C.~M. Dafermos.
\newblock {\em Hyperbolic conservation laws in continuum physics}, volume 325
  of {\em A Series of Comprehensive Studies in Mathematics}.
\newblock Springer, 2nd edition, 2005.

\bibitem{degio68}
E.~De~Giorgi.
\newblock Un esempio di estremali discontinue per un problema variazionale di
  tipo ellittico.
\newblock {\em Boll. Un. Mat. Ital. (4)}, 1:135--137, 1968.

\bibitem{delph82}
T.~Delph.
\newblock Conservation laws in linear elasticity based upon divergence
  transformations.
\newblock {\em Journal of Elasticity}, 12(4):385--393, 1982.

\bibitem{eftis87}
J.~Eftis.
\newblock Load biaxiality and fracture: a two-sided history of complementing
  errors.
\newblock {\em Engineering fracture mechanics}, 26(4):567--592, 1987.

\bibitem{epel07}
M.~Epstein and M.~Elzanowski.
\newblock {\em Material inhomogeneities and their evolution: A geometric
  approach}.
\newblock Springer Science \& Business Media, 2007.

\bibitem{eric77}
J.~L. Ericksen.
\newblock Special topics in elastostatics.
\newblock In {\em Advances in applied mechanics}, volume~17, pages 189--244.
  Elsevier, 1977.

\bibitem{eshe51}
J.~Eshelby.
\newblock The force on an elastic singularity.
\newblock {\em Philosophical Transactions of the Royal Society of London.
  Series A, Mathematical and Physical Sciences}, 244:87--112, 1951.

\bibitem{eshe70}
J.~D. Eshelby.
\newblock Energy relations and energy momentum tensor in continuum mechanics.
\newblock In M.~Kanninen, W.~Adler, A.~Rosenfeld, and R.~Jaffee, editors, {\em
  Inelastic behavior of solids}, pages 77--114. McGraw-Hill, New York, 1970.

\bibitem{flet76}
D.~C. Fletcher.
\newblock Conservation laws in linear elastodynamics.
\newblock {\em Archive for Rational Mechanics and Analysis}, 60(4):329--353,
  1976.

\bibitem{fotr03}
R.~Fosdick and L.~Truskinovsky.
\newblock About {C}lapeyron's theorem in linear elasticity.
\newblock {\em J. Elasticity}, 72(1-3):145--172, 2003.
\newblock Essays and papers dedicated to the memory of Clifford Ambrose
  Truesdell III. Vol.\ III.

\bibitem{gagne2005simulations}
C.~Gagne, H.~Gould, W.~Klein, T.~Lookman, and A.~Saxena.
\newblock Simulations of spinodal nucleation in systems with elastic
  interactions.
\newblock {\em Physical review letters}, 95(9):095701, 2005.

\bibitem{ggks25}
H.~Gimperlein, M.~Grinfeld, R.~J. Knops, and M.~Slemrod.
\newblock The least action admissibility principle.
\newblock {\em Archive for Rational Mechanics and Analysis}, 249(2):22, 2025.

\bibitem{gori17}
A.~Goriely.
\newblock {\em The mathematics and mechanics of biological growth}, volume~45.
\newblock Springer, 2017.

\bibitem{grab18}
Y.~Grabovsky.
\newblock From microstructure-independent formulas for composite materials to
  rank-one convex, non-quasiconvex functions.
\newblock {\em Arch. Rat. Mech. Anal.}, 227(2):607--636, 2018.

\bibitem{grtrmms}
Y.~Grabovsky and L.~Truskinovsky.
\newblock Marginal material stability.
\newblock {\em Journal of Nonlinear Science}, 23(5):891--969, 2013.

\bibitem{grtrnc}
Y.~Grabovsky and L.~Truskinovsky.
\newblock Normality condition in elasticity.
\newblock {\em Journal of Nonlinear Science}, 24(6):1125--1146, 2014.

\bibitem{grtrhard}
Y.~Grabovsky and L.~Truskinovsky.
\newblock A class of nonlinear elasticity problems with no local but many
  global minimizers.
\newblock {\em J. Elasticity}, 154(1):147--171, 2023.

\bibitem{green73}
A.~E. Green.
\newblock On some general formulae in finite elastostatics.
\newblock {\em Arch. Ration. Mech. Anal.}, 50:73--80, 1973.

\bibitem{grif21}
A.~A. Griffith.
\newblock The phenomena of rupture and flow in solids.
\newblock {\em Phil. Trans. Roy. Soc. London. A}, 221(582-593):163--198, 1921.

\bibitem{grif24}
A.~A. Griffith.
\newblock The theory of rupture.
\newblock {\em First Internat. Cong. Applied Mech}, pages 55--63, 1924.

\bibitem{gurt99}
M.~E. Gurtin.
\newblock The nature of configurational forces.
\newblock In {\em Fundamental Contributions to the Continuum Theory of Evolving
  Phase Interfaces in Solids: A Collection of Reprints of 14 Seminal Papers},
  pages 281--314. Springer, 1999.

\bibitem{gurt00}
M.~E. Gurtin.
\newblock {\em Configurational forces as basic concepts of continuum physics},
  volume 137 of {\em Applied Mathematical Sciences}.
\newblock Springer-Verlag, New York, 2000.

\bibitem{gfa10}
M.~E. Gurtin, E.~Fried, and L.~Anand.
\newblock {\em The mechanics and thermodynamics of continua}.
\newblock Cambridge university press, Cambridge, UK, 2010.

\bibitem{haol98}
G.~Hatfield and P.~J. Olver.
\newblock Canonical forms and conservation laws in linear elastostatics.
\newblock {\em Archives of Mechanics}, 50(3):389--404, 1998.

\bibitem{herr82}
A.~G. Herrmann.
\newblock Material momentum tensor and path-independent integrals of fracture
  mechanics.
\newblock {\em International Journal of Solids and Structures}, 18(4):319--326,
  1982.

\bibitem{hill86}
R.~Hill.
\newblock Energy momentum tensors in elastostatics:some reflections on the
  general theory.
\newblock {\em J. Mech. Phys. Solids}, 34:305--317, 1986.

\bibitem{horg92}
C.~Horgan.
\newblock Void nucleation and growth for compressible non-linearly elastic
  materials: an example.
\newblock {\em International journal of solids and structures}, 29(3):279--291,
  1992.

\bibitem{huet01}
C.~Huet.
\newblock Extended clapeyron formulae for viscoelasticity problems in the time
  domain and application to the boundary-condition effect in random composite
  bodies.
\newblock {\em Journal of the Mechanics and Physics of Solids}, 49(3):675--706,
  2001.

\bibitem{huss95}
F.~H{\"u}sseinov.
\newblock Weierstrass condition for the general basic variational problem.
\newblock {\em Proc. Roy. Soc. Edinburgh Sect. A}, 125(4):801--806, 1995.

\bibitem{ibra84}
N.~H. Ibragimov.
\newblock {\em Transformation groups applied to mathematical physics},
  volume~3.
\newblock Springer Science \& Business Media, 1984.

\bibitem{irwin57}
G.~Irwin.
\newblock Analysis of stresses and strains near the end of a crack traversing a
  plate.
\newblock {\em Journal of Applied Mechanics}, 24:361--364, 1957.

\bibitem{kesi96}
R.~Keating and G.~Sinclair.
\newblock On the fundamental energy argument of elastic fracture mechanics.
\newblock {\em International journal of fracture}, 74(1):43--61, 1996.

\bibitem{kien08}
R.~Kienzler.
\newblock {Energy changes in elastic plates due to holes and cracks}.
\newblock {\em Proceedings of the Estonian Academy of Sciences}, 57(1):26--33,
  2008.

\bibitem{kff06}
R.~Kienzler, F.~D. Fischer, and P.~Fratzl.
\newblock On energy changes due to the formation of a circular hole in an
  elastic plate.
\newblock {\em Archive of applied mechanics}, 76(11):681--697, 2006.

\bibitem{knops03}
R.~J. Knops.
\newblock On uniqueness in nonlinear homogeneous elasticity.
\newblock In {\em Rational Continua, Classical and New: A collection of papers
  dedicated to Gianfranco Capriz on the occasion of his 75th birthday}, pages
  119--137. Springer, 2003.

\bibitem{knst84}
R.~J. Knops and C.~A. Stuart.
\newblock Quasiconvexity and uniqueness of equilibrium solutions in nonlinear
  elasticity.
\newblock {\em Arch. Rational Mech. Anal.}, 86(3):233--249, 1984.

\bibitem{knst86}
R.~J. Knops and C.~A. Stuart.
\newblock Quasiconvexity and uniqueness of equilibrium solutions in nonlinear
  elasticity.
\newblock In {\em The Breadth and Depth of Continuum Mechanics}, pages
  473--489. Springer, Berlin, Heidelberg, 1986.

\bibitem{know79}
J.~K. Knowles.
\newblock On the dissipation associated with equilibrium shocks in finite
  elasticity.
\newblock {\em J. Elasticity}, 9(2):131--158, 1979.

\bibitem{knst78}
J.~K. Knowles and E.~Sternberg.
\newblock On the failure of ellipticity and the emergence of discontinuous
  deformation gradients in plane finite elastostatics.
\newblock {\em J. Elasticity}, 8(4):329--379, 1978.

\bibitem{kko13}
H.~Kn{\"u}pfer, R.~Kohn, and F.~Otto.
\newblock Nucleation barriers for the cubic-to-tetragonal transformation.
\newblock {\em Comm Pure Appl Math}, 66(6):867--904, 2013.

\bibitem{krro19}
M.~Kru{\v{z}}{\'\i}k and T.~Roub{\'\i}{\v{c}}ek.
\newblock {\em Mathematical methods in continuum mechanics of solids}.
\newblock Springer, Cham, Switzerland, 2019.

\bibitem{lame1852}
G.~Lam{\'e}.
\newblock {\em Le{\c{c}}ons sur la th{\'e}orie math{\'e}matique de
  l'{\'e}lasticit{\'e} des corps solides}.
\newblock Bachelier, 1852.

\bibitem{leal11}
V.~I. Levitas and N.~S. Altukhova.
\newblock Thermodynamics and kinetics of nanovoid nucleation inside
  elastoplastic material.
\newblock {\em Acta materialia}, 59(18):7051--7059, 2011.

\bibitem{ligu06}
S.~Li and A.~Gupta.
\newblock On dual configurational forces.
\newblock {\em Journal of Elasticity}, 84(1):13--31, 2006.

\bibitem{ligu52b}
I.~M. Lifshits and L.~S. Gulida.
\newblock On nucleation under local melting.
\newblock {\em Dokl. Akad. Nauk SSSR}, 87(4):523--526, 1952.
\newblock in {R}ussian, {E}nglish version available.

\bibitem{love27}
A.~E.~H. Love.
\newblock {\em A treatise on the mathematical theory of elasticity}.
\newblock Dover, 4th edition, 1927.

\bibitem{luma99}
V.~Lubarda and X.~Markenscoff.
\newblock Energies of circular inclusions: sliding versus bonded interfaces.
\newblock {\em Proceedings of the Royal Society of London. Series A:
  Mathematical, Physical and Engineering Sciences}, 455(1983):961--974, 1999.

\bibitem{luma07}
V.~A. Lubarda and X.~Markenscoff.
\newblock Dual conservation integrals and energy release rates.
\newblock {\em International Journal of Solids and Structures},
  44(11-12):4079--4091, 2007.

\bibitem{mark06}
X.~Markenscoff.
\newblock Eshelby generalization for the dynamic j, l, m integrals.
\newblock {\em Comptes Rendus M{\'e}canique}, 334(12):701--706, 2006.

\bibitem{maro97}
F.~Marotti~de Sciarra.
\newblock Direct and dual theorems of castigliano and clapeyron for generalized
  elastic models.
\newblock {\em Acta mechanica}, 124:107--130, 1997.

\bibitem{maug16}
G.~A. Maugin.
\newblock {\em Configurational forces: thermomechanics, physics, mathematics,
  and numerics}.
\newblock CRC Press, 2016.

\bibitem{mana87}
V.~G. Maz'ya and S.~A. Nazarov.
\newblock Paradoxes of limit passage in solutions of boundary value problems
  involving the approximation of smooth domains by polygonal domains.
\newblock {\em Mathematics of the USSR-Izvestiya}, 29(3):511, 1987.

\bibitem{momo22}
V.~Molotnikov and A.~Molotnikova.
\newblock General theorems of mechanics.
\newblock In {\em Theoretical and Applied Mechanics}, pages 409--423. Springer,
  2022.

\bibitem{mco96}
B.~Moran, Y.~Chu, and G.~Olson.
\newblock Homogeneous nucleation of a solid-solid dilatational phase
  transformation.
\newblock {\em International Journal of Solids and Structures},
  33(13):1903--1920, 1996.

\bibitem{morr52}
C.~B. Morrey, Jr.
\newblock Quasi-convexity and the lower semicontinuity of multiple integrals.
\newblock {\em Pacific J. Math.}, 2:25--53, 1952.

\bibitem{musv03}
S.~M{\"u}ller and V.~{\v{S}}ver{\'a}k.
\newblock Convex integration for {L}ipschitz mappings and counterexamples to
  regularity.
\newblock {\em Ann. of Math. (2)}, 157(3):715--742, 2003.

\bibitem{necas77}
J.~Ne{\v{c}}as.
\newblock Example of an irregular solution to a nonlinear elliptic system with
  analytic coefficients and conditions for regularity.
\newblock In {\em Theory of nonlinear operators (Proc. Fourth Internat. Summer
  School, Acad. Sci., Berlin, 1975)}, pages 197--206. Abh. Akad. Wiss. DDR Abt.
  Math.--Natur.--Tech., Jahrgang 1977, 1. Akademie-Verlag, Berlin, 1977.

\bibitem{need87}
A.~Needleman.
\newblock A continuum model for void nucleation by inclusion debonding.
\newblock {\em Journal of applied mechanics}, 54(3):525--531, 1987.

\bibitem{ngtr02}
S.-C. Ngan and L.~Truskinovsky.
\newblock Thermo-elastic aspects of dynamic nucleation.
\newblock {\em Journal of the Mechanics and Physics of Solids},
  50(6):1193--1229, 2002.

\bibitem{ngkp24}
A.~T. Nguyen, S.~S. Garnaik, R.~Kienzler, and Y.~E. Pak.
\newblock Path-independent integrals and their applications in fracture and
  defect mechanics.
\newblock In {\em Deformation and Fracture in Materials}, pages 171--193. CRC
  Press, 2024.

\bibitem{noether18}
E.~Noether.
\newblock Invariante {V}ariationsprobleme.
\newblock {\em Nachr. d. K{\"o}ing. Gesellsch.. d. Wiss. zu G{\"o}ttingen,
  Math-phys. Klasse}, pages 235--257, 1918.
\newblock English translation in "Transport Theory and Statistical Mechanics"
  1(3):183-207, 1971.

\bibitem{olver86}
P.~J. Olver.
\newblock {\em Applications of {L}ie groups to differential equations}, volume
  107 of {\em Graduate Texts in Mathematics}.
\newblock Springer-Verlag, New York, 1986.

\bibitem{podi01}
P.~Podio-Guidugli.
\newblock Configurational balances via variational arguments.
\newblock {\em Interfaces and Free Boundaries}, 3(2):323--332, 2001.

\bibitem{podi02}
P.~Podio-Guidugli.
\newblock Configurational forces: are they needed?
\newblock {\em Mechanics Research Communications}, 29(6):513--519, 2002.

\bibitem{poho65}
S.~Pohozaev.
\newblock Eigenfunctions of the equation {$\Delta u+\lambda f(u)=0$}.
\newblock {\em Soviet Math. Doklady}, 6:1408--1411, 1965.

\bibitem{poho70}
S.~I. Poho{\v{z}}aev.
\newblock On the eigenfunctions of quasilinear elliptic problems.
\newblock {\em Mathematics of the USSR-Sbornik}, 11(2):171--178, 1970.

\bibitem{puse86}
P.~Pucci and J.~Serrin.
\newblock A general variational identity.
\newblock {\em Indiana University mathematics journal}, 35(3):681--703, 1986.

\bibitem{reic04}
W.~Reichel.
\newblock {\em Uniqueness theorems for variational problems by the method of
  transformation groups}.
\newblock Springer Science \& Business Media, 2004.

\bibitem{rell56}
F.~Rellich.
\newblock Halbbeschr{\"a}nkte {D}ifferentialoperatoren h{\"o}herer {O}rdnung.
\newblock In {\em Proceedings of the International Congress of Mathematicians,
  1954, Amsterdam}, volume III, pages 243--250. Erven P. Noordhoff N.V.,
  Groningen, and North-Holland Publishing Co., Amsterdam, 1956.

\bibitem{ridr67}
J.~Rice and D.~C. Drucker.
\newblock Energy changes in stressed bodies due to void and crack growth.
\newblock {\em International Journal of Fracture Mechanics}, 3(1):19--27, 1967.

\bibitem{rice68}
J.~R. Rice.
\newblock Mathematical analysis in the mechanics of fracture.
\newblock In H.~Liebowitz, editor, {\em Fracture: an advanced treatise},
  volume~2 of {\em Mathematical Fundamentals}, chapter~3, pages 191--311.
  Academic Press, New York, 1968.

\bibitem{rosa_pre}
P.~Rosakis.
\newblock Uniqueness for some mixed problems of nonlinear elastostatics.
\newblock {\em arXiv preprint arXiv:2504.09600}, 2025.

\bibitem{sand60}
J.~Sanders.
\newblock On the griffith-irwin fracture theory.
\newblock {\em Journal of Applied Mechanics}, 27:352--353, 1960.

\bibitem{seep20}
R.~Segev and M.~Epstein.
\newblock {\em Geometric continuum mechanics}, volume~43.
\newblock Birkh{\"a}user, Basel, Switzerland, 2020.

\bibitem{sile67}
G.~Sih and H.~Liebowitz.
\newblock On the griffith energy criterion for brittle fracture.
\newblock {\em International Journal of Solids and Structures}, 3(1):1--22,
  1967.

\bibitem{silh97}
M.~{\v S}ilhav{\'y}.
\newblock {\em The mechanics and thermodynamics of continuous media}.
\newblock Springer-Verlag, Berlin, 1997.

\bibitem{sivo92}
J.~Sivaloganathan.
\newblock Singular minimisers in the calculus of variations: a degenerate form
  of cavitation.
\newblock {\em Ann. Inst. H. Poincar{\'e} Anal. Non Lin{\'e}aire},
  9(6):657--681, 1992.

\bibitem{sned46}
I.~Sneddon.
\newblock The distribution of stress in the neighbourhood of a crack in an
  elastic solid.
\newblock {\em Proceedings of the Royal Society of London. Series A,
  Mathematical and Physical Sciences}, 187:229--260, 1946.

\bibitem{Sokolnikoff:1983:MTE}
I.~S. Sokolnikoff.
\newblock {\em Mathematical theory of elasticity}.
\newblock Robert E. Krieger Publishing Co. Inc., Melbourne, Fla., second
  edition, 1983.

\bibitem{spen65}
A.~Spencer.
\newblock On the energy of the griffith crack.
\newblock {\em International Journal of Engineering Science}, 3(4):441--449,
  1965.

\bibitem{stma05}
P.~Steinmann and G.~A. Maugin.
\newblock {\em Mechanics of material forces}, volume~11.
\newblock Springer Science \& Business Media, 2005.

\bibitem{ssd09}
P.~Steinmann, M.~Scherer, and R.~Denzer.
\newblock Secret and joy of configurational mechanics: From foundations in
  continuum mechanics to applications in computational mechanics.
\newblock {\em ZAMM-Journal of Applied Mathematics and Mechanics/Zeitschrift
  f{\"u}r Angewandte Mathematik und Mechanik: Applied Mathematics and
  Mechanics}, 89(8):614--630, 2009.

\bibitem{sver92}
V.~{\v S}ver{\'a}k.
\newblock Rank-one convexity does not imply quasiconvexity.
\newblock {\em Proc. Roy. Soc. Edinburgh Sect. A}, 120(1-2):185--189, 1992.

\bibitem{svya00}
V.~{\v S}ver{\'a}k and X.~Yan.
\newblock A singular minimizer of a smooth strongly convex functional in three
  dimensions.
\newblock {\em Calc. Var. Partial Differential Equations}, 10(3):213--221,
  2000.

\bibitem{tahe02}
A.~Taheri.
\newblock On critical points of functionals with polyconvex integrands.
\newblock {\em J. Convex Anal.}, 9(1):55--72, 2002.

\bibitem{tahe03}
A.~Taheri.
\newblock Quasiconvexity and uniqueness of stationary points in the
  multi-dimensional calculus of variations.
\newblock {\em Proc. Amer. Math. Soc.}, 131(10):3101--3107 (electronic), 2003.

\bibitem{tope60}
I.~Todhunter.
\newblock {\em A history of the theory of elasticity and the strength of
  materials, from {G}alilei to {L}ord {K}elvin}.
\newblock Edited and completed by Karl Pearson. Dover Publications Inc., New
  York, 1960.

\bibitem{trus93}
L.~Truskinovsky.
\newblock Kinks versus shocks.
\newblock In J.~E. Dunn, R.~Fosdick, and M.~Slemrod, editors, {\em Shock
  induced transitions and phase structures in general media}, volume~52 of {\em
  IMA Volumes in Mathematics and its Applications}, pages 185--229.
  Springer-Verlag, New York, 1993.

\bibitem{trva04}
L.~Truskinovsky and A.~Vainchtein.
\newblock The origin of nucleation peak in transformational plasticity.
\newblock {\em J. Mech. Phys. Solids}, 52(6):1421--1446, 2004.

\bibitem{trusk87}
L.~M. Truskinovsky.
\newblock Dynamics of nonequilibrium phase boundaries in a heat conducting
  non-linearly lastic medium.
\newblock {\em Prikl. Mat. Mekh.}, 51(6):1009--1019, 1987.

\bibitem{vdvo91}
R.~Van~der Vorst.
\newblock Variational identities and applications to differential systems.
\newblock {\em Arch. Rational Mech. Anal}, 116(4):375--398, 1991.

\bibitem{wagn02}
A.~Wagner.
\newblock Pohozaev's identity from a variational viewpoint.
\newblock {\em Journal of mathematical analysis and applications},
  266(1):149--159, 2002.

\end{thebibliography}
\end{document}